\documentclass[11pt]{article}

\usepackage{paper-style}
\theoremstyle{plain}
\newtheorem*{mainrestatement}{%
  \hyperref[thm:main]{Theorem~\ref*{thm:main}}%
}
\usepackage{tikz}
\usepackage{pgfplots,pgfplotstable}
\pgfplotsset{compat=1.18}
\usetikzlibrary{patterns.meta,decorations.markings,arrows.meta}
\usepackage[dvipsnames]{xcolor}
\usepackage{booktabs}
\usepackage{longtable}
\usepackage{amsmath}
\usepackage[most]{tcolorbox}
\usepackage{etoolbox}
\AtBeginEnvironment{proposition}{\crefalias{lemma}{proposition}}

\newcommand{\SC}{\mathsf{cost}}

\DeclareMathOperator{\rank}{rank}

\newcommand{\lowrho}{\underline{\rho}}
\newcommand{\low}[1]{\underline{#1}}
\DeclareMathOperator{\Unif}{Unif}

\title{Stable Voting Rules on the \\Edge of Optimal Metric Distortion}

\author{%
  \begin{tabular}{@{}c@{\hspace{2em}}c@{\hspace{2em}}c@{}}
    Ziyi Cai & Moses Charikar & Jabari Hastings \\
    {\normalsize Rutgers University} &
    {\normalsize Stanford University} &
    {\normalsize Stanford University} \\
    {\small\texttt{zc417@cs.rutgers.edu}} &
    {\small\texttt{moses@cs.stanford.edu}} &
    {\small\texttt{jabarih@stanford.edu}} \\[1.2em]
    Prasanna Ramakrishnan & Kangning Wang & Qilin Ye \\
    {\normalsize Harvard University} &
    {\normalsize Rutgers University} &
    {\normalsize Stanford University} \\
    {\small\texttt{pras1712@stanford.edu}} &
    {\small\texttt{kn.w@rutgers.edu}} &
    {\small\texttt{yql@stanford.edu}}
  \end{tabular}%
}
\begin{document}

\maketitle

\begin{abstract}
We prove the existence of a randomized voting rule with metric distortion at most $2.13713$, within $0.025$ of the lower bound of $2.11264$. Our rule comes from a generalization of \emph{stable $k$-lotteries} developed in the context of committee selection. 
In contrast to prior work, our rule samples from a single distribution derived from a zero-sum game, without mixing between voting rules. 
Our result also gives sharp distortion bounds for stable $k$-lotteries, and in particular shows that stable $2$-lotteries have distortion $7/3$, despite only relying on aggregate preferences over triples of candidates.

\end{abstract}

\section{Introduction}
Elections are a powerful model for how groups of individuals can make collective decisions. The most prominent example is political democracy, but all manner of social choice settings can be viewed as
elections: organizations choosing new hires, conferences choosing papers, judges choosing prizewinners, or even friends deciding where to eat. Voting theory has also recently found new uses in machine learning, where preference aggregation has become an important component of systems that learn from human comparisons (see \citet{halpern2026ai} for a detailed survey).

With these settings in mind, social choice theory aims to develop voting rules that, given voters' preferences, tend to choose the best, most representative candidates. But centuries of research suggest that this is harder than it may seem. Condorcet's 1785 paradox \citep{condorcet1785essai} shows that there are elections in which every candidate loses to another in a majority vote. Arrow's theorem \citep{arrow1950difficulty} and the later Gibbard--Satterthwaite theorem \citep{gibbard1973manipulation,satterthwaite1975strategy} showed that even seemingly modest requirements on voting rules can be impossible to satisfy together. While these influential results tell us what voting rules cannot hope to achieve, they offer less guidance on which rules are better than others, or where to look for promising new ideas.

Looking for a more optimistic perspective, researchers in computational social choice have developed another strategy: studying quantitative objectives instead of the often-unsatisfiable binary properties of classical social choice.
These make it easier to compare voting rules and can inspire new ones by turning the search for better rules into a concrete optimization problem.
Along these lines, the \emph{metric distortion} framework has emerged as a compelling approach \citep{DBLP:conf/aaai/AnshelevichBP15,DBLP:journals/ai/AnshelevichBEPS18}. Inspired by a rich body of work in economics on spatial models of voting \citep{enelow1984spatial,enelow1990advances,merrill1999unified,jessee2012ideology,armstrong2020analyzing}, it imagines that the voters and candidates lie in a shared metric space. Each voter's cost of a candidate is the distance between them, so voters prefer closer candidates. The goal is to choose a candidate whose average cost is as small as possible. Given only each voter's ordinal preferences (ranking the candidates by distance), can we design a voting rule that always chooses a candidate with approximately optimal cost? The worst-case approximation factor for a voting rule is called its \emph{distortion}. (See \Cref{fig:2v-2c} for an illustrative example.)

\begin{figure}[h]
\centering
\normalsize
\begingroup
\renewcommand{\arraystretch}{1.15}

\begin{tabular}{c@{\hspace{2.5em}}c@{\hspace{2.5em}}c}
\textbf{Preferences}
& \textbf{Possible Metrics}
& \textbf{Social Costs}
\\[1em]

\begin{tabular}{l}
$v_1:\ a \succ b$ \\[0.6em]
$v_2:\ b \succ a$
\end{tabular}
&
\begin{tikzpicture}[
  x=5.4cm, y=1cm,
  baseline=(current bounding box.center),
  font=\normalsize
]
\def\r{2.8pt}

\foreach \row/\one/\two in {0/0.5/1, -2.1/0/0.5} {
  \begin{scope}[shift={(0,\row)}]
    \fill[black!6, rounded corners=6pt]
      (-0.09,-0.85) rectangle (1.09,0.85);

    \draw[line width=0.8pt] (0,0) -- (1,0);

    \foreach \x/\name in {0/a, 1/b} {
      \filldraw[fill=red!85!black, draw=black]
        ([yshift=\r]\x,0) circle[radius=\r];
      \node[red!85!black, above=5pt]
        at ([yshift=\r]\x,0) {$\name$};
    }

    \foreach \x/\idx in {\one/1, \two/2} {
      \filldraw[fill=RoyalBlue, draw=black]
        ([yshift=-\r]\x,0) circle[radius=\r];
      \node[RoyalBlue, below=5pt]
        at ([yshift=-\r]\x,0) {$v_{\idx}$};
    }
  \end{scope}
}
\end{tikzpicture}
&
\begin{tikzpicture}[
  y=1cm,
  baseline=(current bounding box.center),
  font=\normalsize
]
\path[use as bounding box] (0,-2.95) rectangle (0,0.85);
\node at (0,0)
  {$\SC(a)=3\,\SC(b)$};
\node at (0,-2.1)
  {$\SC(b)=3\,\SC(a)$};
\end{tikzpicture}
\end{tabular}
\endgroup
\caption{A simple election with two candidates $a$ and $b$ and two disagreeing voters $v_1$ and $v_2$. The two illustrated metrics show that all deterministic voting rules have distortion at least $3$ and all randomized rules have distortion at least $2$.}\label{fig:2v-2c}
\end{figure}

While spatial models of voting are well studied, the metric assumption may still feel like a curious choice. A few points help explain why this framework has led to a fruitful line of inquiry. 
First, because the metric space is unknown, a low-distortion voting rule must make a choice that is approximately optimal simultaneously in \emph{every} metric space consistent with the voters' preferences. Thus, the point is less that a particular metric space models reality, and more that such a strong worst-case guarantee suggests the voting rule does not behave unreasonably.\footnote{As another indicator of robustness, \cite{DBLP:conf/netecon/GoelHK18} showed that ``average cost'' distortion $D$ automatically implies distortion $D + 2$ with any monotone symmetric norm of the costs.} It may even seem too good to be true --- how can we always find an approximately optimal candidate with so little information about the distances? But it turns out that the structure imposed by the triangle inequality is just enough for several voting rules to have \emph{constant} distortion. This contrasts sharply with the closely related \emph{utilitarian} distortion framework of \cite{DBLP:conf/cia/ProcacciaR06}, in which distortion is unbounded in general, and grows with the number of candidates even under normalization assumptions. Finally, although many rules have small distortion, all well-studied voting rules fall short of the \emph{optimal} distortion. Successive improvements in the best known bounds have required new voting rules, often revealing simple and natural ideas that might otherwise have been missed.

\paragraph{The pursuit of optimal distortion.} In their work introducing the metric distortion problem, \citet{DBLP:conf/aaai/AnshelevichBP15,DBLP:journals/ai/AnshelevichBEPS18} used a simple election with two candidates and two disagreeing voters to show that all deterministic voting rules have distortion at least $3$ (\Cref{fig:2v-2c}). They conjectured that this lower bound is tight, but only proved an upper bound of $5$, attained by the \emph{Copeland rule}.\footnote{Fun fact: The Copeland rule is one of the oldest written voting rules, originating in the work of 13th century philosopher Ramon Llull \citep{hagele2001llull}. It chooses the candidate that beats the most other candidates in a majority vote. Surprisingly, it has the best distortion among deterministic voting rules developed before the metric distortion framework.} A long line of work \citep{DBLP:conf/sigecom/GoelKM17,DBLP:conf/ec/MunagalaW19,DBLP:conf/aaai/Kempe20a} developed new voting rules with stronger distortion guarantees, culminating in a proof of the conjecture \citep*{DBLP:conf/focs/GkatzelisHS20}. Later, \cite{DBLP:conf/ijcai/KizilkayaK22,DBLP:conf/sigecom/Kizilkaya023} introduced intuitive voting rules, starting with \emph{Plurality Veto}, that gave a simple and elegant proof of the same result. At a high level, these rules assign each candidate a score, initially their number of first choice votes (plurality score). Voters then reduce (veto) the score of their least favorite candidate whose score is still positive. 

For randomized rules, the story is more complicated. \cite{DBLP:journals/jair/AnshelevichP17,DBLP:conf/sigecom/FeldmanFG16} showed that distortion 3 is easy with randomization: the \emph{Random Dictatorship} rule simply outputs a random voter's favorite candidate. On the other hand, they showed that the same simple example in \Cref{fig:2v-2c} implies that any randomized rule has distortion at least $2$, and \cite{DBLP:conf/sigecom/GoelKM17} conjectured that this bound is tight. This conjecture was disproved independently by \cite{DBLP:conf/soda/CharikarR22,DBLP:journals/corr/abs-2111-08698}, raising the lower bound to 2.11264. Improving the upper bound of 3 proved challenging, in part because of barriers ruling out improvements for wide classes of voting rules.\footnote{Distortion 3 is also a natural threshold because it is the metric distortion of a \emph{Condorcet winner}, a candidate that beats all others in a majority vote. While a Condorcet winner does not always exist, a distortion 3 candidate does.} In addition to deterministic rules, these include rules that only consider aggregate pairwise comparisons between candidates\footnote{These are called \emph{tournament rules}; examples include Copeland, Borda, and maximal lotteries.} \citep{DBLP:conf/sigecom/GoelKM17}, and rules that always choose some voter's top choice\footnote{Examples include Plurality Voting, Instant Runoff Voting, Random Dictatorship, and (Simultaneous) Plurality Veto.} \citep{DBLP:conf/aaai/GrossAX17}.

Finally, \cite*{DBLP:journals/jacm/CharikarRWW24} devised a voting rule with distortion at most 2.75271, proving the first separation between the metric distortion of deterministic and randomized voting rules. Their voting rule has two components: \emph{maximal lotteries}, a randomized voting rule due to \cite{kreweras1965aggregation} derived from the equilibrium of a simple zero-sum game, and a new voting rule called \emph{RaDiUS}, which chooses a random voter's favorite candidate within the \emph{weighted uncovered set} (introduced by \cite{DBLP:conf/ec/MunagalaW19}). Their proof starts by showing that maximal lotteries have distortion 3, in a way that characterizes which metrics come close to the worst case. These worst-case instances for maximal lotteries turn out to be best-case instances for RaDiUS, so a carefully chosen mixture of the two can achieve significantly lower distortion than either rule alone.

Despite significant attention, this bound stood for more than three years, until two independent works by \cite{frank2026improved,ye2026half}. They follow the prior template of mixing maximal lotteries with a voting rule with similarities to Random Dictatorship. This second rule is a clever new variant of the \emph{Simultaneous Plurality Veto} rule of \citet{DBLP:conf/sigecom/Kizilkaya023} (called Integrated Veto by \cite{frank2026improved} and Veto Lottery by \cite{ye2026half}) which chooses a candidate with probability proportional to their average score over the veto process. They give remarkably simple proofs that an \emph{equal mixture} of this rule and maximal lotteries has distortion 2.5. While these results nearly halve the gap, closing it remains an enticing open problem.

\subsection{Our contributions and technical overview}

In this paper, we give a randomized voting rule with nearly optimal distortion, bringing the upper bound within a hair of the lower bound of $2.11264$.

\begin{theorem}\label{thm:main}
There is a randomized voting rule with distortion at most $2.13713$.
\end{theorem}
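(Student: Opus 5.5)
We will prove \Cref{thm:main} by generalizing stable $k$-lotteries from committee selection to a weighted, continuous family, and obtaining our rule as the equilibrium of a single zero-sum game. Concretely, fix a weight function $w$ on committee sizes (or a density on a parameter $\alpha$). The \emph{row player} picks a distribution $p$ over candidates, and the \emph{column player} picks a candidate $c$ and a size/threshold parameter. The payoff measures the mass of voters who prefer $c$ to the committee drawn from $p$ (for $k$ i.i.d.\ samples, a voter prefers $c$ to all $k$ samples), compared against a threshold depending on $k$. Von Neumann's minimax theorem (or a fixed-point argument, as for stable lotteries) shows that some $p$ is \emph{stable*} against every deviation simultaneously. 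For each $c$ and $k$ this gives a constraint of the form $\Pr_{v}[c \succ_v \text{all of } S_k] \le \theta_k$, averaged against $w$. Our voting rule outputs one sample from $p$, so there is no mixing between separate rules.

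\textbf{Step 1: reduce distortion to these constraints.} Fix the optimal candidate $c^\ast$ and apply the stability constraint with $c = c^\ast$. For each voter $v$, let $d_v = d(v,c^\ast)$ and let $X_v$ be the distance from $v$ to a sample from $p$. The triangle inequality gives $d(c^\ast, x) \le d_v + d(v,x)$. The constraint then says that for most voters and most $k$, the nearest of $k$ samples is at least as close to $v$ as $c^\ast$ is. Following the approach used for maximal lotteries and stable lotteries, we bound $\mathbb{E}[\SC(x)]$ by splitting each voter's cost according to whether $c^\ast$ beats the $k$-sample committee for $v$. We charge the cost of samples against $d_v$, using that order statistics of $k$ i.i.d.\ draws from $X_v$ upper-bound in distribution. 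This yields a linear-programming-style inequality in the quantile function $F_v^{-1}$ of $X_v$.

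\textbf{Step 2: worst-case analysis.} By linearity, it suffices to consider, for each voter, the quantile function $q_v(t)$ of $d(v,x)$ normalized by $d_v$. We will show a pointwise inequality
\[
\int_0^1 q_v(t)\,dt \;\le\; D\, d_v \;+\; \sum_k w_k \lambda_k \bigl(\mathbf 1[\text{$v$ prefers $c^\ast$ to $k$ samples}] - \theta_k\bigr)
\]
with multipliers $\lambda_k \ge 0$. Summing over voters and using stability kills the last term, which gives distortion $D$. Two ingredients should make the reduction to a one-dimensional variational problem over monotone $q$ work: the triangle inequality $q_v(t) \ge |\cdot|$ and a symmetric argument bounding distances of samples in terms of $q$ at other voters. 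The special case $w = \delta_2$ should reproduce $7/3$ as a sanity check.

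\textbf{Step 3: optimize.} We choose $w$ and $\theta$ to minimize the resulting $D$. The worst-case $q$ should be a two- or three-step function, matching the structure of the $2.11264$ lower-bound instances. We then optimize numerically and verify rigorously, for example by discretizing the parameter and checking the inequalities with interval arithmetic, to certify $D \le 2.13713$.

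The main obstacle is Step 2. Because voters interact only through the shared distribution $p$, the per-voter inequality cannot use $d(c^\ast,x)$ directly. We must find the right certificate, namely multipliers that make the per-voter bound valid simultaneously for all quantile profiles. I would first attempt to solve the inner variational problem exactly for step-function $q$, and then guess $\lambda_k$ from its complementary slackness conditions.
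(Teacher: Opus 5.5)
\textbf{Gap: Steps~1--2 use only the equilibrium condition at $c^\ast$, and no such argument can bound distortion.} The only equilibrium information you use is the stability inequality at $c=c^\ast$. The correction term in your per-voter inequality likewise involves only the event that $v$ prefers $c^\ast$ to the samples. This fails for every choice of $w$ and $\lambda_k\ge 0$.

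Take candidates $c^\ast,b_1,b_2$, with every voter ranking $b_2\succ c^\ast\succ b_1$. Put $c^\ast$ and $b_2$ at one point at distance $\varepsilon$ from all voters, and $b_1$ at distance $1$. The lottery $p=\frac12\delta_{b_1}+\frac12\delta_{b_2}$ has $\Pr_v[c^\ast\succ_v\text{all of }S_k]=2^{-k}\le\frac1{k+1}$ for every $k$. So every constraint you invoke holds and $c^\ast$ is optimal, yet $\SC(p)/\SC(c^\ast)=(1+\varepsilon)/(2\varepsilon)$. Concretely, with $\theta_k=\frac1{k+1}$, your Step~2 inequality would require $\frac12+\frac{\varepsilon}{2}\le D\varepsilon+\sum_k w_k\lambda_k\bigl(2^{-k}-\frac1{k+1}\bigr)\le D\varepsilon$, which is false for small $\varepsilon$.

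What excludes this lottery is the equilibrium condition at $b_2$, a candidate in the support. That is the missing idea. The paper sums these conditions over the support, weighted by $D(a)$, into \emph{stationarity} (\Cref{prop:gsl-stationary}): a random voter's pick from her rank line, drawn with density $g(r)/\tau$, is distributed exactly as $D$. This justifies adding to each voter's inequality the offset $\mathbb{E}[\Phi(\rho(\hat b_v))]-\mathbb{E}[\Phi(\rho(b))]$. Here $\Phi$ depends only on the biased-metric parameters $\rho$ and on $D$; in layer-cake form, $\Phi'(t)=\tau/\alpha(p_t)$ with $\alpha(p)=\tau-G(p)/p$ and $p_t=D(\{a:\rho(a)>t\})$.

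Unlike your indicator terms, this offset has three properties you need:
\begin{itemize}
\item it scales with the distances;
\item it is negative for the hard voters, namely those close to $a^\star$, whose weighted favourite tends to have small $\rho$;
\item it cancels exactly on average.
\end{itemize}
The inequality at $a^\star$ itself is needed only at thresholds with $p_t=1$.

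\textbf{Step~2 does not produce a bound.} It remains a plan, so nothing in the proposal yields the constant $2.13713$. The paper's actual work consists of three pieces. First, a threshold-by-threshold bound $K_v(S)\le G(q)+G(z)-G(z-p+q)$ in terms of three numbers $q\le p\le z$ (\Cref{lem:gsl-fixed-threshold}). Second, an overlap lemma showing each rank is covered by thresholds of total length at most $\max_b\bigl(\rho(b)-\underline{\rho}_v(b)\bigr)$. Third, the resulting functional program in $(g,W)$, certified by exact Bernstein-coefficient checks.

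\textbf{Per-$k$ multipliers are not available.} Multipliers $\lambda_k$ that vary with $k$ need stability separately for each $k$. A single lottery generally cannot be stable for several $k$ at once. With two candidates $a,b$ and $51\%$ of voters preferring $a$, only $\delta_a$ passes the $k=1$ test. It fails for $k=2$, since $b$ beats both draws with probability $0.49>\frac13$. So only the $w$-averaged constraint is available, and $\lambda_k$ cannot depend on $k$.

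\textbf{Your family may not reach the target constant.} Mixtures $g(r)=\sum_k w_k r^k$ are strictly increasing and convex. The paper's certificate $g$ has concave stretches and equals $1$ on $[376/384,1]$. The paper also reports that concurrent work obtained $2.1441$ for random-size stable lotteries. So it is not established that your family reaches the stated bound.
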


We begin with a brief overview of the voting rules we study and the ideas behind their analysis. In \Cref{sec:voting-rules}, we develop the rules step by step, starting with maximal lotteries and building up to the $g$-stable lotteries used for our main result. Along the way, we explain the ideas behind these rules and introduce the properties that will drive the distortion analysis.

\paragraph{The voting rules.} Our starting point is a family of voting rules derived from the equilibria of games.
For maximal lotteries, the game is particularly simple: two players each choose a candidate, and a uniformly random voter decides the winner by choosing their favorite of the two. An equilibrium strategy is a distribution over candidates, which we can use as a randomized voting rule. 

\citet*{DBLP:journals/teco/ChengJMW20} generalized this idea to \emph{stable lotteries}, to study proportional representation in \emph{committee selection}, where the goal is to choose $k$ winners rather than one. The defender draws $k$ candidates independently from a common distribution, the attacker chooses one, and a random voter picks their favorite among all $k+1$. The defender's equilibrium distribution is a \emph{stable $k$-lottery}; to select a single winner, we simply draw once from it. 

Even this family gives surprisingly strong distortion guarantees. We prove a tight general bound for stable $k$-lotteries in \Cref{sec:stable-k}, with distortion at most $7/3\approx 2.333$ for both $k=2$ and $k=3$ (\Cref{thm:dist-sl}, with matching lower bound in \Cref{thm:lower-bound-stable-k}). A remarkable aspect of this result is how little preference information it requires. Motivated in part by common preference elicitation methods in machine learning, \citet*{DBLP:conf/sigecom/CharikarRT025} studied \emph{$k$-tournament rules}, which only see the aggregate rankings of tuples of $k$ candidates. They showed that a $3$-tournament rule combining stable lotteries with pruning and maximal lotteries can achieve distortion strictly below $3$ (by a tiny, unspecified amount). But a stable $2$-lottery is already a $3$-tournament rule: its game only needs to know how often each candidate is the favorite within a triple. Our analysis shows that this rule alone achieves the substantially stronger bound of $7/3$.

While studying \emph{Condorcet winning sets} in committee selection, \citet{DBLP:conf/stoc/CharikarLRV025} introduced \emph{activation functions} to generalize stable $k$-lotteries to what we call \emph{$g$-stable lotteries}. If a voter prefers a challenger to an $r$-fraction of a distribution, it beats $k$ independent draws with probability $r^k$. We can replace this power by \emph{any} continuous increasing function $g(r)$: Kakutani's fixed-point theorem (also used to establish Nash equilibria) guarantees that a $g$-stable lottery exists. This opens up a vast design space, and once we can analyze the distortion of a $g$-stable lottery, it only remains to find the choice of $g$ that gives the best distortion guarantee.

Considering that this family of voting rules was developed outside metric distortion, for questions about representation and stability,
it is striking that they can come so close to optimal distortion. They are also substantially simpler conceptually than the rules behind previous improvements: we just sample from one cleanly defined distribution, without having to work out which rules compensate for each other's weaknesses. This answers an explicit question of \citet{DBLP:journals/jacm/CharikarRWW24}, who asked for a simpler rule that breaks the barrier of $3$ without mixing voting rules.

\paragraph{The analysis.} The main ideas in our analysis came from GPT-5.6 Sol (see our AI disclosure in \Cref{sec:discussion}). The starting point is the \emph{biased metric} framework of \citet{DBLP:conf/soda/CharikarR22}, which we review in \Cref{sec:biased-metrics}. For any fixed election, this framework reduces the worst-case metrics to a family described by just $m$ parameters, one for each candidate. These parameters, expressed using the function $\rho:C\to\mathbb{R}_{\geq 0}$, come from the candidates' distances to the optimal candidate $a^\star$, and determine the relevant voter-candidate distances through the rankings.

Much of the prior work analyzes distortion by tracking the fraction of voters with particular preferences. We instead try to bound each voter's contribution separately, and only average over voters at the end. At first, this seems too much to ask: a voter might be arbitrarily close to $a^\star$ and have large expected cost for the lottery. The workaround is to use a carefully chosen potential function $\Phi$ to
correct each voter's contribution. This gives us room to make the individual inequalities hold, provided the corrections cancel when we average over voters. The \emph{stationarity} of $g$-stable lotteries is the magical ingredient that allows this cancellation to occur.

For example, a stable $k$-lottery satisfies the stationary property that if we sample a random voter $v$ and let $\hat b_v$ be
voter $v$'s favorite among $k+1$ independent draws from $D$, then $\hat b_v$ has the same distribution as $D$ (\Cref{prop:sl-stationary}). But for an individual voter $v$, $\hat b_v$ tends to be a lot better than a single draw $b\sim D$, and this difference is most noticeable for the voters whose costs are hardest to bound (like those close to $a^\star$). By offsetting the inequality we hope to prove by the expected difference in potentials $\Phi\bigl(\rho(\hat b_v)\bigr)-\Phi\bigl(\rho(b)\bigr)$, we can actually make the inequality true for every voter while having the offsets cancel by stationarity when averaging over voters.

In \Cref{sec:general-g}, we extend this approach to general
$g$-stable lotteries. The argument yields conditions on $g$ and an auxiliary function $W$ that imply a distortion bound. Using a computer search for the best choices of $g$ and $W$, we get \Cref{thm:main}.

\subsection{Concurrent work}

Independently from our work, a recent preprint \cite{shah2026improving} proves a distortion upper bound of 2.3282. Building on \cite{frank2026improved,ye2026half} his voting rule mixes between Integrated Veto/Veto Lottery and a variant of stable lotteries where the number of draws ($k$) is randomized. This variant is conceptually similar to $g$-stable lotteries, and it is not hard to see that they correspond to the special case in which $g$ is \emph{absolutely monotone}, meaning every order derivative is nonnegative. In particular, if $p_k$ is the probability of taking $k$ draws, the corresponding $g$ is $g(r) = \sum_{k\geq 1} p_k r^k$.

In an email exchange about collaborating on further improvements, we shared our voting rules, their distortion guarantees, and the possible connection between the two formulations. During this exchange, Shah reported recovering essentially the same bounds using GPT-5.6 Sol (and said that he had an unverified $2.1516$ bound before our exchange). Subsequently, Shah revised his manuscript\footnote{At the time of writing, the revised manuscript is available on Shah's personal webpage.} with a bound of $2.1441$ for random-size stable lotteries alone (slightly weaker than both ours and the bound he had reported recovering during our exchange). The revised proof closely parallels the potential-function analysis we use here and is substantially different from the LP-based proof in the original version.

\section{Preliminaries and notation}

\paragraph{Elections} An \emph{election} (or \emph{preference profile}) is a tuple $\mathcal{E}=\bigl(V,C,\{\succ_v\}_{v\in V}\bigr)$, where $V$ is a set
of $n$ voters, $C$ is a finite set of candidates, and for each $v\in V$, $\succ_v$ is a total linear order over the candidates. We write $a \succ_v b$ to denote that voter $v$ prefers candidate $a$ over candidate $b$. We generally use $v$ to denote a voter, $a, b, c$ to denote candidates (in particular,  $a^\star$ for the ``optimal'' candidate, and $b$ or some variant thereof for a chosen or sampled candidate, and $c$ for an auxiliary candidate).

\paragraph{Distributions, lotteries, and ranks.} We use $v\sim V$ to denote sampling a voter uniformly at random, and use the typical notation $\Unif(r, s)$ to denote the uniform distribution over $[r, s]$. A \emph{lottery} is a distribution $D$ over candidates and $D(a)$ denotes the probability mass of a candidate $a$ in $D$. Samples displayed together are always treated as independent. For convenience we will adopt the convention that each candidate corresponds to a continuum of copies, with ties broken uniformly, and sampling a candidate from a lottery means sampling a uniformly random copy. This convention streamlines the statements of theorems like \Cref{thm:ml-key-property,thm:sl-key-property,thm:gsl-key-property} (and the surrounding prose), without needing special cases to deal with ties.

We describe a helpful visualization of a voter's preferences with respect to a lottery that we will use repeatedly.\footnote{This visualization originates in \cite{DBLP:conf/stoc/CharikarLRV025}.} Imagine that we create an interval for each candidate $a \in C$ with width $D(a)$, and arrange these intervals on the interval $[0, 1]$ in increasing order of $v$'s preference. The result is what we call the \emph{rank line} of $v$ with respect to $D$, depicted in \Cref{fig:rank-line}.

\begin{figure}[ht]
\centering
\begin{tikzpicture}[x=12cm,y=1cm,font=\normalsize]
  \definecolor{rankPink}{RGB}{203,125,177}
  \definecolor{rankSky}{RGB}{86,180,233}
  \definecolor{rankGreen}{RGB}{0,158,115}
  \definecolor{rankGold}{RGB}{230,159,0}
  \definecolor{rankBlue}{RGB}{0,114,178}
  \definecolor{rankOrange}{RGB}{213,94,0}
  \definecolor{rankRose}{RGB}{204,121,167}
  \definecolor{rankSoftBlue}{RGB}{112,160,205}
  \definecolor{rankSoftGreen}{RGB}{95,185,145}
  \definecolor{rankPurple}{RGB}{145,120,210}
  \definecolor{rankYellow}{RGB}{248,230,55}

  \draw[rankPink,line width=4pt]      (0.00,0) -- (0.04,0);
  \draw[rankSky,line width=4pt]       (0.04,0) -- (0.21,0);
  \draw[rankGreen,line width=4pt]     (0.21,0) -- (0.27,0);
  \draw[rankGold,line width=4pt]      (0.27,0) -- (0.40,0);
  \draw[rankBlue,line width=4pt]      (0.40,0) -- (0.45,0);
  \draw[rankOrange,line width=4pt]    (0.45,0) -- (0.63,0);
  \draw[rankRose,line width=4pt]      (0.63,0) -- (0.70,0);
  \draw[rankSoftBlue,line width=4pt]  (0.70,0) -- (0.82,0); %
  \draw[rankSoftGreen,line width=4pt] (0.82,0) -- (0.86,0); %
  \draw[rankPurple,line width=4pt]    (0.86,0) -- (1.00,0); %

  \draw[black,line width=1.2pt] (0,-0.13) -- (0,0.13);
  \draw[black,line width=1.2pt] (1,-0.13) -- (1,0.13);
  \node[anchor=east,xshift=-7pt] at (0,0) {worst};
  \node[anchor=west,xshift=7pt] at (1,0) {best};

  \def\labelY{-0.42}
  \node[anchor=base] at (0.000,\labelY) {$0$};
  \node[anchor=base] at (0.665,\labelY) {$\ldots$};
  \node[anchor=base] at (0.760,\labelY) {$c$};
  \node[anchor=base] at (0.840,\labelY) {$b$};
  \node[anchor=base] at (0.930,\labelY) {$a$};
  \node[anchor=base] at (1.000,\labelY) {$1$};

  \filldraw[fill=rankYellow,draw=black!45,line width=0.3pt]
    (0.73,0) circle[radius=0.13cm];

  \draw[rankPurple,line width=2pt] (0.86,-0.13) -- (0.86,0.13);
  \node[rankPurple,anchor=south] at (0.86,0.22) {$\rank_v(a;D)$};
\end{tikzpicture}
\caption{The rank line for a voter $v$ with preference $a\succ b \succ c \succ \cdots$ with respect to a lottery $D$. Each colored block corresponds to a candidate with width equal to their probability mass in $D$, and they are arranged in increasing order of $v$'s preference from left to right.}\label{fig:rank-line}
\end{figure}

We define the \emph{rank} of a candidate $a$ with respect to distribution $D$ as:
$$\rank_v(a;D) = \Pr_{b\sim D}\bigl[a\succ_v b\bigr].$$
For intuition, higher rank is better (more preferred).
Visually, the rank of a candidate is the coordinate of the left boundary of $a$'s interval. The rank line is convenient because it gives us an elegant way of viewing samples from $D$ from the perspective of a voter: $b\sim D$ is equivalent to drawing $r \sim \Unif(0, 1)$, and taking the candidate whose interval occupies rank $r$. With the convention that each named candidate corresponds to a continuum of copies, it is natural to think of each point on a candidate's interval as corresponding to a copy.

\paragraph{Metric spaces, social cost, and distortion.} We will treat the set of voters and candidates $V\cup C$ as occupying points in a pseudometric space with distance $d(\cdot, \cdot)$ satisfying the following three properties.
\begin{enumerate}[label=(\arabic*)]
    \item Non-negativity: $d(x, y) \geq 0$ (with equality if $x = y$).
    \item Symmetry: $d(x, y) = d(y, x)$.
    \item Triangle inequality: $d(x, y) + d(y, z) \geq d(x, z)$.
\end{enumerate}

We say that $d$ is \emph{consistent} with the election if $a\succ_v b$ implies that  $d(v,a)\leq d(v,b)$ for every voter $v$ and candidates $a,b$. The social cost of a candidate and
of a lottery are denoted 
\[
    \SC_d(a):=\Ev_{v\sim V}\bigl[d(v,a)\bigr]
    \qquad\text{and}\qquad
    \SC_d(D):=\Ev_{a\sim D}\bigl[\SC_d(a)\bigr].
\]
We omit the subscript $d$ when the metric is fixed, and write
$a^\star\in\arg\min_{a\in C}\SC_d(a)$ for an optimal candidate. The
distortion of a randomized voting rule $f$ is
\begin{equation*}
    \operatorname{distortion}(f)
    :=\sup_{\mathcal{E}}
      \sup_{d\text{ consistent with }\mathcal{E}}
      \frac{\SC_d\bigl(f(\mathcal{E})\bigr)}
           {\min_{a\in C}\SC_d(a)}.
\end{equation*}

\section{Voting rules in equilibrium}\label{sec:voting-rules}

Our goal in this section is to build up to $g$-stable lotteries, the family of voting rules behind \Cref{thm:main}. We start with maximal lotteries and introduce the generalizations to stable $k$-lotteries and $g$-stable lotteries one at a time, developing intuition for the rules along with the properties we will use to analyze their distortion.

When designing a randomized voting rule, a direct approach feels most natural. We can define some explicit randomized process, in the way that Random Dictatorship samples a random voter and uses their preferences. Or, we can choose a way to assign ``scores'' to candidates, and define a distribution with candidate probabilities proportional to their scores. For example, Borda scores or plurality scores lead to Proportional Borda and Random Dictatorship again. 

But there is a different, more implicit approach which has been at the heart of several recent results in computational social choice (including on the distortion problem).\footnote{See \citet*{charikar2026game} for a recent survey on this line of work.} We can first devise a two-player game in which each player's pure strategies correspond to candidates, and then a mixed-strategy equilibrium is a distribution over candidates (lotteries) which we can use as a randomized voting rule. For example, consider the following game between two players called the \emph{attacker} and the \emph{defender}.

\definecolor{NiceBoxTitle}{HTML}{EDE5F7}
\definecolor{NiceBoxFrame}{HTML}{9278AD}

\newtcolorbox{NiceBox}[2][]{
  enhanced,
  breakable,
  colback=white,
  colframe=NiceBoxFrame,
  colbacktitle=NiceBoxTitle,
  coltitle=black,
  title={#2},
  fonttitle=\normalfont\bfseries,
  halign title=center,
  boxrule=0.55pt,
  arc=1.6mm,
  outer arc=1.6mm,
  left=8pt,
  right=8pt,
  top=7pt,
  bottom=7pt,
  toptitle=6pt,
  bottomtitle=6pt,
  before skip=0.85\baselineskip,
  after skip=0.85\baselineskip,
  #1
}

\begin{NiceBox}{The Condorcet Game}
  \begin{itemize}
    \item The attacker and defender choose lotteries $D^\dagger$ and $D$.

    \item Sample $a \sim D^\dagger$, $b \sim D$, and a voter $v$ uniformly at random.

    \item The attacker wins if $a\succ_v b$, and the defender wins if $b\succ_v a$. %

  \end{itemize}
\end{NiceBox}

By symmetry, both players have the same equilibrium distribution, called a \emph{maximal lottery}. Maximal lotteries were introduced by \citet{kreweras1965aggregation}, and first studied in depth by \citet{fishburn1984probabilistic}.

\begin{definition}
Given an election, a \emph{maximal lottery} is an equilibrium strategy of the Condorcet game, given by a solution to the following minimax optimization:
\begin{equation}\label{eq:minimax-ml}
\arg\min_{D} \max_{D^\dagger} \Pr_{\substack{v\sim V\\a \sim D^\dagger\\ b\sim D}}\bigl[a\succ_v b\bigr].
\end{equation}
\end{definition}

Maximal lotteries have several useful properties, and we take a moment to discuss two that are particularly relevant to this paper.

As an easy consequence of the minimax theorem of \citet{v1928theorie}, the value of the optimization in \Cref{eq:minimax-ml} is $\frac12$, meaning that a player sampling from a maximal lottery therefore wins with probability at least $\frac12$, no matter how the other player chooses their candidate.

\begin{theorem}\label{thm:ml-key-property}
In any election with maximal lottery $D$, we have
\begin{equation}\label{eq:ml-key-property}
\Pr_{\substack{v\sim V\\b\sim D}}\bigl[a\succ_v b\bigr] \leq \frac12.
\end{equation}
for all candidates $a$.
\end{theorem}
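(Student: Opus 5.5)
The plan is to view the Condorcet game as a finite two-player zero-sum game and apply the minimax theorem together with its symmetry. The pure strategies of both players are the candidates in $C$. Define the payoff to the attacker as
\[
M(a,b) \;=\; \Pr_{v\sim V}\bigl[a\succ_v b\bigr],
\]
so that for lotteries $D^\dagger, D$ the attacker's winning probability is the bilinear form $\Ev_{a\sim D^\dagger, b\sim D}[M(a,b)]$. Under the continuum-of-copies convention, ties within a single candidate are broken uniformly at random, so $M(a,a)=\tfrac12$. For distinct candidates we have $M(a,b)+M(b,a)=1$, since every voter strictly ranks them. Hence $M(a,b)+M(b,a)=1$ for all $a,b$, i.e.\ $M-\tfrac12 J$ is skew-symmetric, where $J$ is the all-ones matrix.

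First I compute the value of the game. Since the strategy spaces are simplices and the payoff is bilinear, the minimax theorem of \citet{v1928theorie} gives
\[
\min_D \max_{D^\dagger} \Ev[M] \;=\; \max_{D^\dagger}\min_D \Ev[M] \;=:\; \mathrm{val}.
\]
Next I use symmetry. For any lottery $D$, choosing $D^\dagger=D$ gives the attacker payoff $\Ev_{a,b\sim D}[M(a,b)] = \tfrac12$, because the pairs $(a,b)$ and $(b,a)$ are equally likely and $M(a,b)+M(b,a)=1$. Thus $\max_{D^\dagger}\Ev[M]\ge\tfrac12$ for every $D$, which gives $\mathrm{val}\ge \tfrac12$. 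Symmetrically, for any $D^\dagger$ the defender can play $D=D^\dagger$, so $\min_D \Ev[M]\le \tfrac12$, which gives $\mathrm{val}\le\tfrac12$. Therefore $\mathrm{val}=\tfrac12$.

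Finally, let $D$ be a maximal lottery, i.e.\ a minimizer in \Cref{eq:minimax-ml}. Then $\max_{D^\dagger}\Ev_{a\sim D^\dagger,b\sim D}[M(a,b)]=\tfrac12$. Taking $D^\dagger$ to be the point mass on any candidate $a$ yields $\Pr_{v\sim V,\,b\sim D}[a\succ_v b]\le \tfrac12$, which is \Cref{eq:ml-key-property}.

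The only step needing care is the tie convention when $b$ is a copy of $a$ itself. Under the stated convention, $\Pr[a\succ_v b]$ counts half of that event, which is exactly the value $M(a,a)=\tfrac12$ used above. If one instead read $a\succ_v a$ as false, the inequality would only become easier. Beyond this, the argument is a direct application of the minimax theorem and presents no real obstacle.
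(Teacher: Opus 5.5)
Your proof is correct and takes the same approach as the paper. The paper simply invokes von Neumann's minimax theorem to conclude that the symmetric Condorcet game has value $\tfrac12$, and then takes the attacker to be a point mass on $a$. You have filled in the details the paper leaves implicit: the copying argument that pins the value at $\tfrac12$, and the tie convention $M(a,a)=\tfrac12$.
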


\Cref{thm:ml-key-property} makes a maximal lottery a randomized analogue of a \emph{Condorcet winner}, a candidate that beats every other candidate in a majority vote.  Here, a random candidate drawn from a maximal lottery beats each candidate for a majority of voters in expectation. 

Just as a Condorcet winner guarantees distortion 3, \citet*{DBLP:journals/jacm/CharikarRWW24} showed that a maximal lottery has distortion 3. They also showed that several voting rules complement maximal lotteries in such a way that mixing between them guarantees distortion strictly less than 3. In prior work, every voting rule with distortion less than 3 follows this template, mixing between maximal lotteries (or an approximate maximal lottery, in the case of \citet*{DBLP:journals/corr/abs-2602-08871}) and another voting rule \citep{DBLP:journals/jacm/CharikarRWW24,DBLP:conf/sigecom/CharikarRT025,DBLP:journals/corr/abs-2602-08871,frank2026improved,ye2026half}.

To see how \Cref{thm:ml-key-property} might generalize, it helps to express it in terms of ranks. 
For the analysis, it is convenient to slightly change what we mean by the rank of an \textit{attacker} $a$. Instead of using the left boundary, we use a uniformly random point in $a$'s interval, corresponding to a random copy of $a$. Overloading the notation, we still write $\mathrm{rank}_v(a;D)$ for the attacker's rank.
Then, we have
$$\Pr_{\substack{v\sim V\\b\sim D}}\bigl[a\succ_v b\bigr] =  \Ev_{v\sim V}\Bigl[\Pr_{b\sim D}\bigl[a\succ_v b\bigr]\Bigr] = \Ev_{v\sim V}\bigl[\rank_v(a; D)\bigr].$$
Thus, \Cref{eq:ml-key-property} is equivalent to
\begin{equation}\label{eq:ml-key-property-rank}
\Ev_{v\sim V}\bigl[\rank_v(a; D)\bigr] \leq \frac12.
\end{equation}

Maximal lotteries also satisfy a \emph{stationarity} property, whose generalizations will be central to our distortion analysis.\footnote{This property is observed explicitly by \citet{brandl2024natural} for maximal lotteries, and earlier by \citet{laslier2017reinforcement} for a closely related tournament game. Related dynamics based on pairwise comparisons were studied earlier by \citet{ganikhodzhaev1993quadratic}.} 
We state it here and prove it in the general $g$-stable setting in \Cref{prop:gsl-stationary}.

\begin{proposition}\label{prop:ml-stationary}
Given a maximal lottery $D$, define a distribution $\widehat{D}$ as follows. Draw two candidates from $D$ and output the favorite candidate of a voter $v\sim V$ sampled uniformly at random. Then $\widehat{D} = D$. 
\end{proposition}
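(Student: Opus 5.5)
We need to show that for every candidate $c$, $\widehat{D}(c)=D(c)$. The natural route is to compute $\widehat{D}(c)$ explicitly and compare it with $D(c)$ using the equilibrium property \Cref{eq:ml-key-property-rank}. The plan has three steps: reduce to the support of $D$, show $\widehat D$ dominates $D$ pointwise there, and close with a total-mass argument.

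\textbf{Step 1: an explicit formula for $\widehat D(c)$.} Draw $b_1,b_2\sim D$ and $v\sim V$, with ties broken using the continuum-of-copies convention. The output is $c$ exactly when one draw is a copy of $c$ and $v$ prefers it to the other draw. Since copies of $c$ are tied, with uniform tie-breaking, the probability that a copy of $c$ beats an independent draw $b\sim D$ is
\[
\Pr_{b\sim D}\bigl[c\succ_v b\bigr] = \rank_v(c;D),
\]
where rank uses the random-point convention for the attacker. Each of the two draws can play the role of $c$, so
\[
\widehat{D}(c) = 2\,D(c)\,\Ev_{v\sim V}\bigl[\rank_v(c;D)\bigr].
\]
Summing over $c$ gives total mass $1$, because exactly one candidate is output.

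\textbf{Step 2: pointwise domination on the support.} I would show that every $c$ with $D(c)>0$ satisfies $\Ev_{v}\bigl[\rank_v(c;D)\bigr]=\frac12$. The upper bound $\le \frac12$ holds for every $c$ by \Cref{eq:ml-key-property-rank}. For the reverse inequality, use the symmetric pairing of copies. For independent $b,b'\sim D$, exactly one of them beats the other for each $v$, so
\[
\Ev_{b\sim D}\Ev_v\bigl[\rank_v(b;D)\bigr]=\tfrac12 .
\]
Combined with the pointwise bound $\le\frac12$, this average can equal $\frac12$ only if every candidate in the support attains exactly $\frac12$. Alternatively, one can argue directly that $\widehat D(c)\le D(c)$ for all $c$ from Step 1 and the upper bound, and that both sides sum to $1$.

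\textbf{Step 3: conclude.} By Step 1 and the upper bound, $\widehat D(c)\le D(c)$ for every $c$, and both sides have total mass $1$. Hence equality holds everywhere, including off the support, where both sides are $0$.

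\textbf{Main obstacle.} The only delicate point is the tie-handling in Step 1. A voter comparing two copies of the same candidate $c$ must output $c$ either way. The random-copy rank convention accounts for this correctly: the self-comparison contributes $D(c)^2$ to the probability of outputting $c$, and this matches the two half-contributions coming from the factor of $2$. I would verify this bookkeeping carefully. The rest follows from \Cref{thm:ml-key-property}.
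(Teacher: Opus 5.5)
Your proposal is correct and follows essentially the paper's argument. The paper proves this as the $g(r)=r$ case of \Cref{prop:gsl-stationary}: it writes $\widehat D(a)=\frac{1}{\tau}\Ev_v\bigl[\int_{\rank_v(a;D)}^{\rank_v(a;D)+D(a)} g(r)\dif r\bigr]$ (your formula $2D(c)\Ev_v[\rank_v(c;D)]$ with $\tau=\frac12$), bounds it by $D(a)$ using the key property, and concludes by total mass, as in your Steps 1 and 3 — your Step 2, that every supported candidate attains exactly $\frac12$, is a valid shortcut to equality but is not needed.
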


There are many ways of modifying the Condorcet game to derive relatives of maximal lotteries. One such relative is the \emph{stable lottery}, originally introduced by \citet*{DBLP:journals/teco/ChengJMW20} in the context of \emph{committee selection}.\footnote{For clarity, we note two minor differences between the original formulation and what we use here. First, \citet{DBLP:journals/teco/ChengJMW20} considered distributions over committees of size $k$, and we instead use an independent sampling version made explicit by \cite{DBLP:conf/sigecom/CharikarRT025}. Second, \citet{DBLP:journals/teco/ChengJMW20} actually defined stable lotteries as distributions satisfying \Cref{thm:sl-key-property} (in fact, with $\frac{1}{k}$ to align with usual notions of stability in place of the $\frac{1}{k + 1}$ implied by the proof) rather than directly defining them in terms of the minimax problem.} In this setting, the goal is to choose a set of $k$ winners rather than a single winner, and so one might naturally consider a version of the Condorcet game in which one player chooses $k$ candidates.

\begin{NiceBox}{The $k$ vs $1$ Condorcet Game}
  \begin{itemize}
    \item The attacker and defender choose lotteries $D^\dagger$ and $D$.

    \item Sample $a \sim D^\dagger$, $b_1, \dots, b_k \sim D$, and a voter $v$ uniformly at random.

    \item The attacker wins if $a\succ_v b_1,\dots, b_k$, and the defender wins otherwise. %

  \end{itemize}
\end{NiceBox}

\begin{definition}
Given an election, a \emph{stable $k$-lottery} is an equilibrium distribution for the defender in the $k$ vs $1$ Condorcet game, given by a solution to the following minimax optimization:
\begin{equation}\label{eq:minimax-sl}
\arg\min_{D} \max_{D^\dagger} \Pr_{\substack{v\sim V\\a \sim D^\dagger\\ b_1,\dots,b_k\sim D}}\bigl[a\succ_v b_1,\dots,b_k\bigr].
\end{equation}
\end{definition}

Although this game is no longer symmetric, the defender's loss is convex in $D$ and linear in $D^\dagger$, so the minimax theorem still applies. The defender's optimal value is unchanged if they choose after seeing the attacker's lottery, swapping the $\min$ and $\max$ in \Cref{eq:minimax-sl}. But then the defender can simply copy that lottery. Each of the $k+1$ independent samples $\{a,b_1,\dots,b_k\}$ is equally likely to be the random voter's favorite, so the defender loses with probability at most $\frac{1}{k+1}$. This gives the following guarantee.%

\begin{theorem}\label{thm:sl-key-property}
In any election with stable $k$-lottery $D$, we have
\begin{equation}\label{eq:sl-key-property}
\Pr_{\substack{v\sim V\\ b_1,\dots,b_k\sim D}}\bigl[a\succ_v b_1,\dots,b_k\bigr] \leq \frac{1}{k + 1}.
\end{equation}
for all candidates $a$.
\end{theorem}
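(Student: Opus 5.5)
The plan is to make precise the minimax argument sketched just before the statement. Write the defender's loss as
\[
L(D, D^\dagger) \;=\; \Pr_{\substack{v\sim V\\a\sim D^\dagger\\ b_1,\dots,b_k\sim D}}\bigl[a\succ_v b_1,\dots,b_k\bigr]
\;=\; \Ev_{v\sim V}\,\Ev_{a\sim D^\dagger}\Bigl[\Pr_{b\sim D}\bigl[a \succ_v b\bigr]^k\Bigr].
\]
The second equality uses that the $b_i$ are independent given $v$ and $a$. The domains of $D$ and $D^\dagger$ are both the simplex over $C$, which is compact and convex. For fixed $D$, the loss $L$ is linear in $D^\dagger$.

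The first step is to show that $L$ is convex in $D$ for fixed $D^\dagger$. For fixed $v$ and $a$, the quantity $\Pr_{b\sim D}[a\succ_v b]$ is linear in $D$. Here we use the copies convention: the attacker's copy is a random point in $a$'s interval. Even so, for a fixed copy the probability is linear in $D$. Since $x\mapsto x^k$ is convex on $[0,1]$, each term is convex in $D$, and averaging over $v$ and $a$ preserves convexity. Continuity of $L$ in both arguments is clear. So Sion's minimax theorem (or von Neumann's, after observing convex--concave structure) gives
\[
\min_D \max_{D^\dagger} L(D, D^\dagger) \;=\; \max_{D^\dagger} \min_D L(D, D^\dagger).
\]

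The second step bounds the right-hand side. Fix any $D^\dagger$ and let the defender play $D = D^\dagger$. Then $a, b_1, \dots, b_k$ are $k+1$ i.i.d.\ samples. For each fixed voter $v$, each of the $k+1$ samples is equally likely to be $v$'s favorite; ties have probability zero under the continuum-of-copies convention with uniform tie-breaking. Hence $L(D^\dagger, D^\dagger) = \frac{1}{k+1}$, and so $\min_D L(D, D^\dagger) \le \frac{1}{k+1}$ for every $D^\dagger$. Therefore the stable $k$-lottery $D$ satisfies $\max_{D^\dagger} L(D, D^\dagger) \le \frac{1}{k+1}$. Specializing $D^\dagger$ to a point mass on candidate $a$ gives \Cref{eq:sl-key-property}.

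The main obstacle is the bookkeeping around ties and copies. One must check that the copies model makes $L$ a well-defined continuous function that is convex in $D$, and that the symmetry argument yields exactly $\frac{1}{k+1}$, not merely $\frac{1}{k+1}$ plus a tie term. I would handle this by realizing each voter's comparison via the rank line: $a$ draws a uniform point of its interval, so $\Pr[a \succ_v b_1,\dots,b_k]$ equals the expectation of $r^k$ over the attacker's uniform rank $r$. This expression is continuous, and the symmetry argument becomes the identity $\int_0^1 r^k\,dr = \frac{1}{k+1}$ when $D^\dagger = D$.
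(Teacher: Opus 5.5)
Your proposal is correct and follows the paper's argument essentially verbatim: convexity of the defender's loss in $D$ and linearity in $D^\dagger$ justify swapping $\min$ and $\max$, after which the defender copies the attacker's lottery and symmetry among the $k+1$ i.i.d.\ samples caps the loss at $\frac{1}{k+1}$. Your extra rank-line bookkeeping (the attacker's uniform rank giving $\int_0^1 r^k\,dr = \frac{1}{k+1}$ when $D^\dagger = D$) is a faithful way of making precise the paper's copies-and-ties convention, which the paper itself uses without spelling out these details.
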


Stationarity also extends just as naturally. We now draw $k+1$ candidates and let a random voter choose their favorite.

\begin{proposition}\label{prop:sl-stationary}
Given a stable $k$-lottery $D$, define a distribution $\widehat{D}$ as follows. Draw $k + 1$ candidates from $D$ and output the favorite sample of a voter $v\sim V$ sampled uniformly at random. Then $\widehat{D} = D$. 
\end{proposition}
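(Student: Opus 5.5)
The plan is a counting argument: bound each candidate's output probability under $\widehat{D}$ by its mass in $D$, then use that both distributions have total mass $1$ to force equality. Fix a candidate $c\in C$. Because the $k+1$ samples $b_0,b_1,\dots,b_k\sim D$ are i.i.d., and (under the continuum-of-copies convention) ties between copies of the same candidate are broken uniformly, each index $i$ is equally likely to be the voter's favorite. So I will write
\[
\widehat{D}(c)=\Pr_{\substack{v\sim V\\ b_0,\dots,b_k\sim D}}\bigl[\text{$v$'s favorite is $c$}\bigr]
=(k+1)\Pr_{\substack{v\sim V\\ b_0,\dots,b_k\sim D}}\bigl[b_0=c \text{ and } b_0\succ_v b_1,\dots,b_k\bigr].
\]
Here the events ``$b_i$ is the favorite and equals $c$'' for $i=0,\dots,k$ are disjoint, since copies are distinct, and they all have the same probability by exchangeability.

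Next I condition on $b_0=c$, which happens with probability $D(c)$. Given this, $b_0$ is a uniformly random copy of $c$, which is exactly the attacker convention used in \Cref{thm:sl-key-property}. This gives
\[
\widehat{D}(c)=(k+1)\,D(c)\Pr_{\substack{v\sim V\\ b_1,\dots,b_k\sim D}}\bigl[c\succ_v b_1,\dots,b_k\bigr]\le (k+1)\,D(c)\cdot\frac{1}{k+1}=D(c),
\]
where the inequality is \Cref{thm:sl-key-property} applied with $a=c$.

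Finally, both $\widehat{D}$ and $D$ are probability distributions on $C$, so $\sum_c \widehat{D}(c)=1=\sum_c D(c)$. Since $\widehat{D}(c)\le D(c)$ holds termwise, every inequality must be an equality, and therefore $\widehat{D}=D$. Candidates with $D(c)=0$ need no separate treatment, because both sides are then $0$.

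The only delicate point is the tie-handling in the first step. I need to check that the ``random copy'' interpretation of the attacker in \Cref{thm:sl-key-property} matches exactly the event that $b_0$ is the strict favorite among the $k+1$ copies. Under the continuum convention, distinct samples are almost surely distinct copies and hence strictly ordered, so the $k+1$ events partition the sample space up to a null set. This is what justifies the factor $k+1$.
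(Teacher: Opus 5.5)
Your proof is correct and is essentially the paper's argument. The paper proves the general version (\Cref{prop:gsl-stationary}) by writing $\tau\,\widehat{D}(a)$ as the voter-averaged integral of $g$ over $a$'s rank interval, bounding this by $D(a)\int_0^1 g(t)\,\mathrm{d}t$ via the key property, and concluding $\widehat{D}=D$ from normalization. Your exchangeability factor $k+1$ and your conditioning on $b_0=c$ are exactly the $g(r)=r^k$, $\tau=1/(k+1)$ case of that argument, since $\int_{\rank_v(c;D)}^{\rank_v(c;D)+D(c)} r^k\,\mathrm{d}r$ equals $D(c)$ times the probability that a random copy of $c$ beats $k$ independent draws for voter $v$.
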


We can also reinterpret \Cref{eq:sl-key-property} in terms of the ranks. For a fixed voter and a fixed copy of attacker $a$, the attacker beats all $k$ independent draws with probability $\rank_v(a;D)^k$. Averaging over voters and the attacker copy gives

$$\Pr_{\substack{v\sim V\\ b_1,\dots,b_k\sim D}}\bigl[a\succ_v b_1,\dots,b_k\bigr] = \Ev_{b_1,\dots,b_k \sim D}\Bigl[\Pr_{v\sim V}\bigl[a\succ_v b_1, \dots, b_k\bigr]\Bigr] $$
$$= \Ev_{v\sim V}\Bigl[\Pr_{b \sim D}\bigl[a\succ_v b\bigr]^k\Bigr] = \Ev_{v\sim V}\bigl[\rank_v(a;D)^k\bigr],$$

Thus, \Cref{eq:sl-key-property} is equivalent to
\begin{equation}\label{eq:sl-key-property-rank}
\Ev_{v\sim V}\bigl[\rank_v(a;D)^k\bigr] \leq \frac{1}{k + 1}
\end{equation}
and the objective in \Cref{eq:minimax-sl} can be written in terms of ranks as
\begin{equation}\label{eq:minimax-sl-ranks}
\min_{D} \max_{D^\dagger} \Ev_{\substack{v\sim V\\a\sim D^\dagger}}\bigl[\rank_v(a;D)^k\bigr]    
\end{equation}

The rank formulation suggests another question: why should the payoff be the particular power $r^k$? We can replace it by a function $g(r)$ and consider
\begin{equation}\label{eq:minimax-gsl-ranks}
\min_{D} \max_{D^\dagger} \Ev_{\substack{v\sim V\\a\sim D^\dagger}}\Bigl[g\bigl(\rank_v(a;D)\bigr)\Bigr]    
\end{equation}
For $g(r)=r^k$, we recover the stable $k$-lottery objective. Which other functions give an equilibrium with analogous properties?

\citet*{DBLP:conf/stoc/CharikarLRV025} used the minimax theorem to show that this works when $g$ is increasing and convex. They called $g$ the \emph{activation function}, since it ``activates'' the parts of the rank interval that should be prioritized.\footnote{Also, the optimal $g$ in their setting happened to resemble a ReLU function after transformation.} Follow-up work by \cite*{DBLP:conf/soda/SongNL26} could be understood as using an activation function which is non-convex (in particular, a simple threshold function), and \cite*{nguyen2026stable,charikar2026exposition} make this generalization formal. 

\begin{definition}
Fix a continuous, increasing function $g: [0, 1]\to [0, 1]$. Given an election, a \emph{$g$-stable lottery}, denoted $D^g$, is a solution to the following minimax optimization:
\begin{equation}\label{eq:minimax-gsl}
\arg\min_{D} \max_{D^\dagger} \Ev_{\substack{v\sim V\\a\sim D^\dagger}}\Bigl[g\bigl(\rank_v(a;D)\bigr)\Bigr].
\end{equation}
\end{definition}

Restricting the range of $g$ to $[0,1]$ is unnecessary, but a convenient normalization which allows us to interpret $g(r)$ as a probability.
For the curious reader, we can also think of $D^g$ as the minimax-optimal strategy\footnote{This strategy may not be part of an equilibrium if $g$ is nonconvex, but the minimax solution still satisfies the guarantees we need.}
for the defender in the following zero-sum game. 

\begin{NiceBox}{The $g$-Activated Condorcet Game}
  \begin{itemize}
    \item The attacker and defender choose lotteries $D^\dagger$ and $D$.

    \item Sample $a\sim D^\dagger$ and a voter $v$ uniformly at random.

    \item Let $\displaystyle r=\rank_v(a;D)=\Pr_{b\sim D}\bigl[a\succ_v b\bigr].$

    \item The attacker wins with probability $g(r)$, and the defender wins otherwise.
  \end{itemize}
\end{NiceBox}

To give some intuition, the function $g$ tells us how much attention the rule pays to different parts of a voter's ranking (measured in terms of the rank line from \Cref{fig:rank-line}). 
A function $g$ that rises sharply close to $1$ (like $g(r) = r^k$ as $k\to\infty$, corresponding to Random Dictatorship) is primarily 
concerned when a candidate beats almost the whole lottery, which could be natural if we believe that voters' top choices matter the most and the rest is just noise. A milder $g$ (like $g(r) = r$, corresponding to maximal lotteries) also responds to candidates that do well at more modest ranks, like a Condorcet winner that lurks deep in voters' preferences. In general, we can carefully tailor $g$ to bake the priorities of a given setting into our voting rule.

Generalizing \Cref{thm:sl-key-property} to $g$-stable lotteries we get the following theorem. 

\begin{theorem}\label{thm:gsl-key-property}
In any election with $g$-stable lottery $D^g$, we have
\begin{equation}\label{eq:gsl-key-property}
\Ev_{v\sim V}\Bigl[g\bigl(\rank_v(a;D^g)\bigr)\Bigr] \leq \int_0^1 g(t) \dif t.
\end{equation}
for all candidates $a$.
\end{theorem}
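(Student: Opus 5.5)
The plan is to show that the optimal value of the minimax problem \Cref{eq:minimax-gsl} is at most $\int_0^1 g(t)\dif t$. Since $D^g$ attains this value, \Cref{eq:gsl-key-property} then follows by taking $D^\dagger$ to be the point mass on $a$. The argument from \Cref{thm:sl-key-property} (swap min and max, then copy the attacker) is not available, because for nonconvex $g$ the payoff need not be convex in $D$. I will instead find a $D$ that is a fixed point of the attacker's best-response correspondence, and then use a ``copying'' identity at that fixed point.

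\textbf{Step 1 (copying identity).} Write $u_a(D):=\Ev_{v\sim V}\bigl[g(\rank_v(a;D))\bigr]$, where the attacker's rank is taken at a uniformly random copy of $a$. Fix a voter $v$. If $a\sim D$ and we then take a uniformly random copy of $a$, the resulting rank is exactly a uniform point of the rank line of $v$ with respect to $D$, so it is distributed as $\Unif(0,1)$. Hence for every lottery $D$,
\[
\sum_{a} D(a)\,u_a(D)=\Ev_{v\sim V}\Ev_{r\sim\Unif(0,1)}\bigl[g(r)\bigr]=\int_0^1 g(t)\dif t .
\]

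\textbf{Step 2 (continuity).} Let $L_v(a;D)$ be the left endpoint of $a$'s interval on $v$'s rank line, which is continuous in $D$. Then
\[
u_a(D)=\Ev_{v}\Bigl[\tfrac{1}{D(a)}\int_{L_v(a;D)}^{L_v(a;D)+D(a)} g(t)\dif t\Bigr],
\]
interpreted as $g(L_v(a;D))$ when $D(a)=0$. Since $g$ is continuous on $[0,1]$, the interval average is jointly continuous in the endpoint and the width, including at width $0$. So each $u_a$ is continuous on the simplex $\Delta(C)$. I expect this to be the main technical point: the random-copy convention is exactly what makes $u_a$ continuous as $D(a)\to 0$. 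With the left-endpoint convention the argument would still go through, but it would need an approximation step.

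\textbf{Step 3 (Kakutani).} Define $B(D):=\arg\max_{D'\in\Delta(C)}\sum_a D'(a)u_a(D)$. This is nonempty, convex and compact, since it is the face of the simplex spanned by the maximizers of $u_a(D)$. By continuity of the $u_a$ and Berge's maximum theorem, $B$ is upper hemicontinuous. Kakutani's fixed-point theorem then yields $D_0$ with $D_0\in B(D_0)$. Combining this with Step 1 gives
\[
\max_a u_a(D_0)=\sum_a D_0(a)u_a(D_0)=\int_0^1 g(t)\dif t .
\]

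\textbf{Step 4 (conclusion).} The objective $\max_{D^\dagger}\Ev_{v,a\sim D^\dagger}[g(\rank_v(a;D))]=\max_a u_a(D)$ is a maximum of finitely many continuous functions. It is therefore continuous on the compact set $\Delta(C)$, so a minimizer $D^g$ exists. This minimizer satisfies $\max_a u_a(D^g)\le \max_a u_a(D_0)\le\int_0^1 g(t)\dif t$, which is \Cref{eq:gsl-key-property} for every candidate $a$.
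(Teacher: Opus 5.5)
Your proof is correct and is essentially the paper's argument. The paper also applies Kakutani to the correspondence sending $D$ to the lotteries supported on the maximizers of $\mathbb{E}_{v\sim V}\bigl[g(\rank_v(a;D))\bigr]$. It then uses that a draw from the fixed point $D_0$ lands at a uniform rank to get $H(D_0)=\int_0^1 g(t)\,\mathrm{d}t$, and concludes from the minimality of $D^g$. Your Step 2 and the appeal to Berge's theorem make explicit the closed-graph property that the paper asserts in one line.

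One small inaccuracy in a side remark: continuity does not depend on the random-copy convention, since under the left-endpoint convention $g\bigl(L_v(a;D)\bigr)$ is already continuous in $D$. What the random-copy convention actually buys is the exact copying identity of Step 1.
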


\begin{proof}
Let 
\[H(D) = \max_{a\in C}  \Ev_{v\sim V} \Bigl[g\bigl(\mathrm{rank} _v(a;D)\bigr)\Bigr].\] 
Since the objective defining $D^g$ is linear in $D^{\dagger}$, its maximum over $D^{\dagger}$ equals $H(D)$. Thus, $D^g$ minimizes $H(D)$.

For each $D$, consider the set of distributions supported on the candidates attaining $H(D)$. Since $C$ is finite and $g$ is continuous, this correspondence has nonempty, compact, convex values, as well as a closed graph. Hence, Kakutani's theorem gives a fixed point $D_0$. Thus, every candidate in $\mathrm{supp}(D_0) $ attains $H(D_0)$, and hence
\[
		H(D_0) = \Ev_{\substack{ v \sim  V \\ a \sim  D_0 } } \Bigl[g\bigl(\mathrm{rank} _v (a;D_0)\bigr)\Bigr].
\]

For each voter, drawing $a \sim  D_0$ gives a uniformly random rank in $[0,1]$. Therefore,$H(D_0) = \int_{0}^{1} g(t) \;\mathrm{d}t$. Since $D^{g}$ minimizes $H$, every candidate $a$ satisfies
\[
		\Ev_{v\sim V} \Bigl[g\bigl(\mathrm{rank}_v (a;D^{g}) \bigr)\Bigr] \leq H(D^{g}) \leq H(D_0) = \int_{0}^{1} g(t) \;\mathrm{d}t. \qedhere
\]  
\end{proof}

Finally, stationarity has a clean interpretation in terms of a voter's rank line.

\begin{proposition}
\label{prop:gsl-stationary}
Given a $g$-stable lottery $D^g$, define a distribution $\widehat{D}$ as follows. Sample a voter $v\sim V$ uniformly and a rank $r \sim [0, 1]$ with density proportional to $g(r)$, and output the candidate occupying rank $r$. Then $\widehat{D} = D^g$. 
\end{proposition}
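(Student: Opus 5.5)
The plan is to verify $\widehat{D}(a) = D^g(a)$ candidate by candidate. First I will show $\widehat{D}(a)\le D^g(a)$ for every $a$ using \Cref{thm:gsl-key-property}. Then I will upgrade the inequalities to equalities, since both sides sum to $1$ over $a\in C$.

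\textbf{Computing $\widehat{D}(a)$.} Let $G(x)=\int_0^x g(t)\dif t$. Since $g$ is nonnegative and increasing (hence not identically zero), $G(1)>0$, so the density $g(r)/G(1)$ on $[0,1]$ is well defined. Fix a voter $v$ and let $\ell_v(a)$ be the left boundary of $a$'s interval on $v$'s rank line, so that $a$ occupies $[\ell_v(a),\ell_v(a)+D^g(a)]$. The probability that a rank drawn with density $g/G(1)$ lands in this interval is $\bigl(G(\ell_v(a)+D^g(a))-G(\ell_v(a))\bigr)/G(1)$. Averaging over voters gives
\[
\widehat{D}(a)=\frac{1}{G(1)}\Ev_{v\sim V}\Bigl[G\bigl(\ell_v(a)+D^g(a)\bigr)-G\bigl(\ell_v(a)\bigr)\Bigr].
\]
If $D^g(a)=0$ the interval is degenerate, so $\widehat{D}(a)=0=D^g(a)$.

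\textbf{Relating $\widehat{D}(a)$ to the key property.} Suppose $D^g(a)>0$. Recall the convention that the attacker's rank $\rank_v(a;D^g)$ is the position of a uniformly random copy of $a$, i.e.\ uniform on $a$'s interval. Hence
\[
\Ev\Bigl[g\bigl(\rank_v(a;D^g)\bigr)\Bigr]=\frac{G\bigl(\ell_v(a)+D^g(a)\bigr)-G\bigl(\ell_v(a)\bigr)}{D^g(a)}.
\]
Averaging over $v$ and comparing with the display above,
\[
\widehat{D}(a)=\frac{D^g(a)}{G(1)}\,\Ev_{v\sim V}\Bigl[g\bigl(\rank_v(a;D^g)\bigr)\Bigr].
\]
\Cref{thm:gsl-key-property} bounds the expectation by $\int_0^1 g = G(1)$, so $\widehat{D}(a)\le D^g(a)$.

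\textbf{Upgrading to equality.} We now have $\widehat{D}(a)\le D^g(a)$ for every candidate $a$. Both $\widehat{D}$ and $D^g$ are probability distributions on $C$, so $\sum_a \widehat{D}(a)=1=\sum_a D^g(a)$. A sum of nonnegative slacks $D^g(a)-\widehat{D}(a)$ equal to zero forces each slack to vanish, so $\widehat{D}=D^g$.

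The only delicate step is the middle one. There one must be careful that \Cref{thm:gsl-key-property} is stated with the random-copy convention for the attacker's rank, which is exactly what turns $\Ev[g(\rank)]$ into the average of $g$ over $a$'s interval. With the left-endpoint rank the identity would fail. Since $g$ is increasing, the left endpoint only gives $g(\ell_v(a))\le$ the interval average, which is the wrong direction for deducing $\widehat{D}(a)\le D^g(a)$. So I would make sure the key property is invoked in its random-copy form.
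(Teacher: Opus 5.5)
Your proposal is correct and follows essentially the same route as the paper: write $\bigl(\int_0^1 g\bigr)\widehat{D}(a)$ as the voter-average of the integral of $g$ over $a$'s interval, bound this by $D^g(a)\int_0^1 g$ using \Cref{thm:gsl-key-property} to get $\widehat{D}(a)\le D^g(a)$, and conclude equality because both are distributions. Your remark that the random-copy convention for the attacker's rank is what turns $\Ev\bigl[g(\rank_v(a;D^g))\bigr]$ into the interval average makes explicit the step the paper performs implicitly when it ``multiplies \Cref{thm:gsl-key-property} by $D^g(a)$.''
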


For $g(r)=r^k$, sampling a rank with density proportional to $g$ is equivalent to taking the largest of $k+1$ independent uniform ranks. This recovers the earlier experiment of asking a random voter to choose their favorite among $k+1$ draws.

\begin{proof}

Fix a candidate $a$. For a voter $v$, the copies of $a$ occupy the interval
\[
    \bigl[\mathrm{rank}_v(a;D^{g}),\mathrm{rank}_v(a;D^{g})+D^{g}(a)\bigr).
\]
As the rule returns $a$ iff the rank we sample lies in this interval, averaging gives
\[
		\biggl( \int_{0}^{1} g(t) \;\mathrm{d}t \biggr) \widehat{D}(a) = \Ev_{v \sim  V} \biggl[  \int_{\mathrm{rank} _v(a;D^{g})}^{\mathrm{rank} _v(a;D^{g}) + D^{g}(a)} g(r) \;\mathrm{d}r  \biggr].
\]  

Multiplying \Cref{thm:gsl-key-property} by $D^g(a)$, we get
\[
		\Ev_{v \sim  V} \biggl[ \int_{\mathrm{rank} _v(a;D^{g})}^{\mathrm{rank} _v(a;D^{g}) + D^{g}(a)} g(r) \;\mathrm{d}r \biggr]  \leq  D^{g}(a) \int_{0}^{1} g(t) \;\mathrm{d}t.
\] 
Hence, $\widehat{D}(a) \leq D^{g}(a)$ for every candidate $a$. But $\widehat{D}$ and $D^g$ are both distributions, so we must have $\widehat{D} = D^{g}$. 
\end{proof}

Part of what makes $g$-stable lotteries an attractive tool is the vastness and flexibility of the design space they allow. Stable lotteries can already get a lot of mileage by introducing a discrete spectrum spanning from maximal lotteries at $k = 1$ to Random Dictatorship as $k \to \infty$. Not only can this spectrum be made continuous, but it is only a thin slice of possibilities in a broader function space. We will show that this space is large enough to reach voting rules with nearly optimal metric distortion.

\begin{mainrestatement}
There exists a continuous, non-decreasing function $g:[0,1]\to[0,1]$ with $g(0) = 0, g(1)=1$, such that in any election, the $g$-stable lottery has distortion at most $2.13713$.
\end{mainrestatement}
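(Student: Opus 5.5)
We follow the outline from the introduction: reduce to biased metrics, bound each voter's contribution separately after correcting it by a potential, and let stationarity (\Cref{prop:gsl-stationary}) cancel the corrections. First we invoke the biased metric framework of \citet{DBLP:conf/soda/CharikarR22}. For a fixed election and optimal candidate $a^\star$, it lets us restrict to metrics described by a function $\rho:C\to\mathbb{R}_{\ge 0}$, with $\rho(a^\star)=0$ and $\rho(a)$ roughly the distance from $a$ to $a^\star$. In these metrics each voter $v$ has a location parameter $d(v,a^\star)=x_v$. Its distance to each candidate is determined by its ranking: it is the largest value consistent with the triangle inequality, of the form $d(v,b)=\max(x_v,\ \rho(c)-x_v)$ over candidates $c$ that $v$ ranks at or below $b$, or a similar ``biased'' expression. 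After normalizing $\SC(a^\star)=\Ev_v[x_v]=1$, the goal becomes
\[
\Ev_{v\sim V}\Ev_{b\sim D^g}\bigl[d(v,b)\bigr]\le \alpha\,\Ev_{v\sim V}[x_v],\qquad \alpha=2.13713.
\]

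The plan is to prove a per-voter inequality with a correction term:
\[
\Ev_{b\sim D^g}\bigl[d(v,b)\bigr]\ \le\ \alpha x_v + \Ev_{\hat b_v}\bigl[\Phi(\rho(\hat b_v))\bigr]-\Ev_{b\sim D^g}\bigl[\Phi(\rho(b))\bigr]
\]
for every voter $v$, together with a term $\beta\bigl(\Ev_v[g(\rank_v(a^\star;D^g))]-\int_0^1 g\bigr)\le 0$ supplied by \Cref{thm:gsl-key-property} applied to $a = a^\star$. Here $\hat b_v$ is the candidate at a rank drawn with density $\propto g$ on $v$'s rank line. By \Cref{prop:gsl-stationary}, $\hat b_v$ averaged over $v$ is distributed as $D^g$, so the $\Phi$ terms cancel when we average over voters. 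The $a^\star$-term can only help, so averaging gives the theorem.

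Everything therefore reduces to a one-voter inequality, which we would parametrize on the rank line. Let $s=\rank_v(a^\star;D^g)$ and let $\rho$ evaluated along the rank line be a function $R:[0,1]\to\mathbb{R}_{\ge0}$. Candidates above $s$ are preferred to $a^\star$, so $d(v,b)\le x_v$ suffices there up to the triangle bound with $\rho$; candidates below $s$ have $d(v,b)\le x_v+R(t)$ and also $d(v,b)\ge \rho(b)-x_v$. Worst cases should occur when $R$ is a step function and $x_v$ takes extreme values. We would reduce to the condition that for all $s\in[0,1]$ and all admissible profiles $R$ (monotone in the appropriate sense relative to $s$),
\[
\int_0^1 \min\bigl(x+R(t),\,\cdot\bigr)\,dt \le \alpha x+\beta\Bigl(g(s)-\int_0^1 g\Bigr)+\frac{\int_0^1 g(t)\Phi(R(t))\,dt}{\int_0^1 g}-\int_0^1\Phi(R(t))\,dt .
\]
Here $\Phi$ is linear-ish and is packaged into an auxiliary function $W$. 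By linearity in $R$'s level sets, this reduces to checking threshold profiles. The result is a finite family of one- or two-variable inequalities in $g$ and $W$.

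The main obstacle is the last step. We must choose $g$ (continuous, nondecreasing, $g(0)=0$, $g(1)=1$), $\Phi$/$W$, and $\beta$ so that these inequalities hold with $\alpha=2.13713$, and certify them rigorously. First we would discretize $g$ and $W$ as piecewise-linear functions on a fine grid. The conditions are then linear in $(W,\beta)$ for fixed $g$, so we would solve an LP and do a local search or alternating optimization over $g$. Finally we would verify the resulting explicit functions via interval arithmetic on the grid, with Lipschitz slack between grid points. The sanity check is $g(r)=r^2$, which should recover $7/3$.
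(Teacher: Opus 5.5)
Your overall architecture (biased metrics, a per-voter inequality corrected by a potential, stationarity from \Cref{prop:gsl-stationary}, the key property at $a^\star$, then a computer search) matches the paper's. However, there is a genuine gap at the step ``by linearity in $R$'s level sets, this reduces to checking threshold profiles.'' In the worst-case biased metric, $d(v,b)=x_v+\lowrho_v(b)$ with $x_v=\frac12\max_b\bigl(\rho(b)-\lowrho_v(b)\bigr)$; your $\max(x_v,\rho(c)-x_v)$ expression is not the right one. So the quantity to beat is $\lambda\max_b\bigl(\rho(b)-\lowrho_v(b)\bigr)$ with $\lambda=(\alpha-1)/2$. The left side and the potential terms are additive over the layer-cake decomposition into $S_t=\{a:\rho(a)>t\}$, but this max over pairs is only subadditive. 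A threshold-by-threshold check against a constant $\lambda$ (charged whenever $v$ ranks a member of $S_t$ above a non-member) therefore controls only $\lambda$ times the total length of such $t$, which can exceed the max. For example, take $D$ uniform on four candidates with $\rho(c_1)=0$, $\rho(b_1)=1$, $\rho(c_2)=2$, $\rho(b_2)=3$ and $b_1\succ_v c_1\succ_v b_2\succ_v c_2$. The max is $1$, but $v$ disagrees with $S_t$ for all $t\in[0,1)\cup[2,3)$. Threshold profiles thus give a necessary condition, not a sufficient one.

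The missing idea is the paper's rank-line charging. Bound the corrected contribution $\gamma_v(S_t)$ by $W(z_t)-W(q_t)$, where $q_t$ is the rank of $v$'s least preferred candidate outside $S_t$ and $z_t$ is the right endpoint of $v$'s most preferred positive-mass candidate in $S_t$. Then use \Cref{lem:gsl-overlap}: each rank $r$ lies in $(q_t,z_t]$ only for a set of $t$ of length at most $\max_b\bigl(\rho(b)-\lowrho_v(b)\bigr)$. In the example, the two windows are $(\frac12,1]$ and $(0,\frac12]$, which are disjoint. So $W$ is a weight on ranks, not a repackaging of $\Phi$. Together with the rearrangement bound of \Cref{lem:gsl-fixed-threshold}, it yields the genuinely three-parameter constraint $\Gamma_z(p,q)\le W(z)-W(q)$ on $0\le q\le p\le z\le 1$, not one- or two-variable inequalities. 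Restricting $R$ to monotone profiles is likewise unjustified, since each voter sees an arbitrary rearrangement of $D$'s $\rho$-values.

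The potential is also left unspecified, and a fixed ``linear-ish'' $\Phi$ fails. Take a voter who prefers candidates with smaller $\rho$ (so the right side is $0$) and a $D$ with mass $1-\epsilon$ on $\rho=1$ and $\epsilon$ on $\rho=0$. With $\Phi$ linear of slope $c$, the per-voter inequality forces $|c|\ge\tau/\alpha(1-\epsilon)\to\infty$, where $\tau=\int_0^1 g$ and $\alpha(p)=\tau-G(p)/p$. The paper's choice depends on $D$: $\Phi'(t)=\tau/\alpha\bigl(D(S_t)\bigr)$, which cancels $q_t$ exactly whenever $q_t=z_t$. Its blow-up as $D(S_t)\to 1$ is why thresholds $t<\min_{b\in\mathrm{supp}(D)}\rho(b)$ need a limiting definition together with \Cref{thm:gsl-key-property} at $a^\star$. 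Consequently your $\beta$ cannot be a global constant: everything else is homogeneous of degree one in $\rho$, so $\beta$ must scale with $\rho$ (effectively $\min_{b\in\mathrm{supp}(D)}\rho(b)/(1-\tau)$). Finally, after clearing denominators the constraint becomes a polynomial inequality $H(q,p,z)\ge 0$ that holds with equality on the edges $q=p=z$ and $p=z=1$. Interval arithmetic with Lipschitz slack between grid points therefore cannot certify it near those edges. The paper instead checks exact rational degree-three Bernstein coefficients on grid-aligned tetrahedra, with recursive subdivision.
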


\section{Biased metrics}\label{sec:biased-metrics}

Our proofs make use of the \emph{biased metric} framework introduced by \citet{DBLP:conf/soda/CharikarR22}. At a high level, this framework gives a simple characterization of the worst-case metric spaces in a given election, independent of the voting rule. In doing so, it reduces the metric distortion objective down to its essence, into a more manageable (but still equivalent) form.  We will start with a broad overview of the framework, the intuition behind it, and the key reduction that it gives, but defer the proof details to prior work~\citep{DBLP:conf/soda/CharikarR22,DBLP:journals/jacm/CharikarRWW24,DBLP:conf/sigecom/CharikarRT025}.

Suppose we have an election with some underlying metric $d$ in which the candidate $a^\star$ has minimal social cost. If we modify the metric by \emph{decreasing} the distance from any voter $v$ to $a^\star$, and then \emph{increasing} the distance from $v$ to any candidate $b$ so that each difference $d(b, v) - d(a^\star, v)$ increases, the resulting transformation increases the distortion of any voting rule. While there could be a number of ways of making such a transformation, the biased metric framework shows that if we fix the distances from the optimal candidate to the other candidates, then there is a canonical transformation which results in the worst-case metrics. (See \Cref{fig:shorten-lengthen-schematic}.) The convenience is that rather than contending with all metrics which assign distances for every voter-candidate pair, the metrics we have to deal with are only defined by a function on candidates defining their distance to the optimum candidate. These metrics are defined below.\footnote{A note on notation: prior work used $x_a$ in place of $\rho(a)$, in part because these were originally thought of as variables in a linear program. Since we think of the candidates as occupying a spectrum of ranks, we hope the function notation is more natural.}

\begin{figure}[htbp]
\centering
\begin{tikzpicture}[
  candidate/.style={circle, fill=red!80!black, draw=red!45!black,
    inner sep=0pt, minimum size=6pt},
  voter/.style={circle, fill=RoyalBlue, draw=blue!55!black,
    inner sep=0pt, minimum size=6pt},
  metric edge/.style={draw=black!65, line width=0.85pt},
  voter edge/.style={draw=black!45, dashed, line width=0.85pt},
  shorten edge/.style={voter edge,
    postaction={decorate},
    decoration={markings,
      mark=at position 0.34 with
        {\arrow{Stealth[length=1.8mm,width=1.3mm]}},
      mark=at position 0.82 with
        {\arrowreversed{Stealth[length=1.8mm,width=1.3mm]}}}},
  lengthen edge/.style={voter edge,
    postaction={decorate},
    decoration={markings,
      mark=at position 0.28 with
        {\arrowreversed{Stealth[length=1.8mm,width=1.3mm]}},
      mark=at position 0.80 with
        {\arrow{Stealth[length=1.8mm,width=1.3mm]}}}},
  lengthen edge top/.style={voter edge,
    postaction={decorate},
    decoration={markings,
      mark=at position 0.24 with
        {\arrowreversed{Stealth[length=1.8mm,width=1.3mm]}},
      mark=at position 0.92 with
        {\arrow{Stealth[length=1.8mm,width=1.3mm]}}}},
  lengthen edge late/.style={voter edge,
    postaction={decorate},
    decoration={markings,
      mark=at position 0.33 with
        {\arrowreversed{Stealth[length=1.8mm,width=1.3mm]}},
      mark=at position 0.79 with
        {\arrow{Stealth[length=1.8mm,width=1.3mm]}}}},
  edge label/.style={fill=white, inner sep=1.5pt, font=\scriptsize},
  voter edge label/.style={edge label, text=black!55},
  point label/.style={font=\small}
]
  \coordinate (astar) at (0,0);
  \coordinate (c) at (199.1:5.0);
  \coordinate (cprime) at (319.1:2.5);
  \coordinate (b) at (79.1:1.50);
  \coordinate (v) at (-2.75,0.80);

  \draw[metric edge] (astar) --
    node[edge label, midway, sloped, above] {$\rho(c)$} (c);
  \draw[metric edge] (astar) --
    node[edge label, midway, sloped, above] {$\rho(c')$} (cprime);
  \draw[metric edge] (astar) --
    node[edge label, midway, right=2pt] {$\rho(b)$} (b);

  \draw[shorten edge] (v) --
    node[voter edge label, pos=0.58, sloped, above] {shorten} (astar);
  \draw[lengthen edge] (v) --
    node[voter edge label, pos=0.52, sloped, above] {lengthen} (c);
  \draw[lengthen edge late] (v) --
    node[voter edge label, pos=0.56, sloped, below] {lengthen} (cprime);
  \draw[lengthen edge top] (v) --
    node[voter edge label, pos=0.58, sloped, above] {lengthen} (b);

  \node[candidate] at (astar) {};
  \node[point label, red!80!black, anchor=west]
    at (0.13,0.11) {$a^\star$};
  \node[candidate] at (c) {};
  \node[point label, red!80!black, anchor=east]
    at (-4.85,-1.64) {$c$};
  \node[candidate] at (cprime) {};
  \node[point label, red!80!black, anchor=west]
    at (2.02,-1.64) {$c'$};
  \node[candidate] at (b) {};
  \node[point label, red!80!black, above=2pt] at (b) {$b$};
  \node[voter] at (v) {};
  \node[point label, RoyalBlue, anchor=south east]
    at (-2.85,0.90) {$v$};
\end{tikzpicture}
\caption{An illustration of the idea behind the biased metrics. Fix the distances from candidates to the optimal candidate ($a^\star$), defined by the function $\rho$. Then, minimize the distance from $v$ to $a^\star$, and maximize the distance from $v$ to other candidates.}
\label{fig:shorten-lengthen-schematic}
\end{figure}

\begin{definition}\label{def:biased}
Let $\rho:C \to \mathbb{R}_{\geq 0}$ be a function such that $\rho(a^\star) = 0$ for some $a^\star \in C$. 
Given an election instance, the \emph{biased metric} $d(\cdot, \cdot)$ corresponding to $\rho$ is defined as follows. For a voter $v$ and candidate $b$,
\begin{align*}
d(a^\star, v) &= \frac{1}{2} \max_{c \succeq_v c'} \bigl(\rho(c) - \rho(c')\bigr)\\
d(b, v) - d(a^\star, v) &= \min_{c \preceq_v b} \rho(c).
\end{align*}
\end{definition}

It may not be immediate to see why these metrics are sufficient, or even that they actually define metrics in the first place. We refer the reader to \citet{DBLP:conf/soda/CharikarR22} (or later expositions in \citet{DBLP:journals/jacm/CharikarRWW24}) for detailed proofs, but we offer a little intuition to demystify the terms in the definition. 

For $d(a^\star, v)$, one can think of $\rho$ as defining the preference that the optimal candidate $a^\star$ has for the other candidates. If a voter $v$ \emph{disagrees} with $a^\star$ over a pair of candidates, meaning that $c \succ_v c'$ but $\rho(c) > \rho(c')$, then it is not hard to see that $v$ cannot be collocated with $a^\star$. Indeed, this disagreement constrains just how close $v$ could be to $a^\star$, and with the triangle inequality one can find that this distance must be at least $\frac12\bigl(\rho(c) - \rho(c')\bigr)$. Considering the strongest constraint imposed by any pair of candidates, we get exactly the expression for $d(a^\star, v)$ in \Cref{def:biased}.

For $d(b, v)$, the triangle inequality immediately implies that this must be at most $ d(a^\star, v) + d(a^\star, b) = d(a^\star, v)  + \rho(b)$. But if $v$ prefers $b$ over some candidate $c$ with a smaller $\rho$, then we can instead apply the same inequality to $c$, and get an even stronger upper bound for $b$:
$$d(b, v) \leq d(c, v) \leq d(a^\star, v)  + \rho(c).$$
We again get the expression in \Cref{def:biased} by taking the choice of $c$ that results in the strongest constraint.

Next, we make formal the reduction that the biased metrics give us. For ease of exposition, 
 we introduce the shorthand 
$$\lowrho_v(b) = \min_{c\preceq_v b} \rho(c)$$
which corresponds to $d(b, v) - d(a^\star, v)$. Averaging over voters, we get 
$$\Ev_{b\sim D}\bigl[\SC(b) - \SC(a^\star)\bigr] = \Ev_{\substack{v\sim V\\b\sim D}}\bigl[\lowrho_v(b)\bigr].$$
The shorthand also simplifies the expression for $d(a^\star, v)$ since 
$$\max_{c \succeq_v c'} \bigl(\rho(c) - \rho(c')\bigr) = \max_b\bigl(\rho(b) - \lowrho_v(b)\bigr).$$
It follows that 
$$2\cdot \SC(a^\star) = \Ev_{v\sim V}\Bigl[\max_b\bigl(\rho(b) - \lowrho_v(b)\bigr)\Bigr]$$

Putting these together, the result of \citet{DBLP:conf/soda/CharikarR22} that the biased metrics are necessary and sufficient for bounding metric distortion can be summarized as the following theorem.

\begin{theorem}\label{thm:biased-iff}
For a distribution $D$ over candidates, $$\Ev_{b\sim D}\bigl[\SC(b)\bigr] \leq (1 + 2\lambda)\SC(a^\star)$$ for all consistent metrics if and only if for all $\rho:C\to\mathbb{R}_{\geq 0}$ such that $\rho(a^\star) = 0$, we have 
\begin{equation}\label{eq:precise-biased}
\Ev_{\substack{v\sim V\\b\sim D}}\bigl[\lowrho_v(b)\bigr] \leq \lambda \Ev_{v\sim V}\Bigl[\max_b \bigl(\rho(b) - \lowrho_v(b)\bigr)\Bigr] 
\end{equation}
\end{theorem}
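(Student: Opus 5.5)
The plan is to prove the two directions separately. In both, $\rho$ plays the role of the distances $d(a^\star,\cdot)$ from the optimal candidate. Subtracting $\SC(a^\star)$ from both sides, the target inequality $\Ev_{b\sim D}[\SC(b)]\le(1+2\lambda)\SC(a^\star)$ is equivalent to
\[
\Ev_{\substack{v\sim V\\b\sim D}}\bigl[d(b,v)-d(a^\star,v)\bigr]\le 2\lambda\,\Ev_{v\sim V}\bigl[d(a^\star,v)\bigr].
\]

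\emph{The ``if'' direction.} Fix a consistent metric $d$ and an optimal candidate $a^\star$, and set $\rho(c)=d(a^\star,c)$, so that $\rho(a^\star)=0$. I would prove two pointwise inequalities for each voter $v$, each using only the triangle inequality and consistency.
\begin{itemize}
\item For any $c\preceq_v b$, consistency and the triangle inequality give $d(b,v)\le d(c,v)\le d(a^\star,v)+\rho(c)$. Minimizing over such $c$ yields $d(b,v)-d(a^\star,v)\le\lowrho_v(b)$.
\item For any $c\succeq_v c'$, we have $\rho(c)\le d(a^\star,v)+d(v,c)\le d(a^\star,v)+d(v,c')\le 2d(a^\star,v)+\rho(c')$. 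Maximizing over such pairs yields $2d(a^\star,v)\ge\max_b\bigl(\rho(b)-\lowrho_v(b)\bigr)$.
\end{itemize}
Averaging over $v\sim V$ and $b\sim D$, and then applying \Cref{eq:precise-biased} to this particular $\rho$, gives the displayed target inequality, which is the distortion bound.

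\emph{The ``only if'' direction.} Given any $\rho$ with $\rho(a^\star)=0$, I would take the biased metric of \Cref{def:biased} as a witness. By construction, both inequalities above hold with equality for it. Hence
\[
\Ev_{b\sim D}\bigl[\SC(b)-\SC(a^\star)\bigr]=\Ev\bigl[\lowrho_v(b)\bigr]
\qquad\text{and}\qquad
2\,\SC(a^\star)=\Ev_v\Bigl[\max_b\bigl(\rho(b)-\lowrho_v(b)\bigr)\Bigr].
\]
The hypothesis gives $\Ev_{b\sim D}[\SC(b)]\le(1+2\lambda)\min_a\SC(a)\le(1+2\lambda)\SC(a^\star)$, so we do not even need $a^\star$ to be optimal in the biased metric. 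Substituting the two identities then yields \Cref{eq:precise-biased}.

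What remains is to check that the biased metric really is a consistent pseudometric. This is where I expect the main difficulty.
\begin{itemize}
\item \emph{Consistency.} This is easy: $\lowrho_v(b)$ is a minimum over $\{c\preceq_v b\}$, a set that shrinks as $b$ moves up $v$'s ranking, so $d(b,v)$ is monotone in $v$'s preference.
\item \emph{Remaining distances.} Only voter--candidate distances are specified. I would define candidate--candidate and voter--voter distances as shortest-path distances in the bipartite graph whose edge lengths are the specified voter--candidate distances.
\item \emph{Triangle inequality.} With the shortest-path definition, the only thing to verify is that no path $v\to c\to w\to b$ is strictly shorter than the direct edge $d(v,b)$. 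Equivalently, we need $d(v,b)\le d(v,c)+d(w,c)+d(w,b)$. I would first bound $d(v,b)\le d(v,a^\star)+\rho(b)$, then lower-bound $d(w,c)+d(w,b)$ via the disagreement term in $d(a^\star,w)$, using $\rho(b)-\rho(c)\le 2d(a^\star,w)+\ldots$ according to which of $b,c$ voter $w$ prefers. This case analysis on the relative order of $b$ and $c$ under $v$ and $w$ is the delicate step. If it becomes unwieldy, I would instead cite the construction in \citet{DBLP:conf/soda/CharikarR22}.
\end{itemize}
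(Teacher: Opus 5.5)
Your proposal is correct and follows the route the paper takes: the ``if'' direction uses exactly the two triangle-inequality bounds the paper offers as intuition for \Cref{def:biased}, and the ``only if'' direction uses the biased metric as the extremal witness, with the metricity check that the paper defers to \citet{DBLP:conf/soda/CharikarR22}. The step you flag as delicate needs no case analysis: writing $x_u=d(a^\star,u)$, your length-three path condition follows from $d(v,c)+d(w,c)+d(w,b)\ge x_v+x_w+\bigl(x_w+\lowrho_w(b)\bigr)\ge x_v+\rho(b)\ge x_v+\lowrho_v(b)=d(v,b)$, which uses only $\lowrho\ge 0$, $2x_w\ge\rho(b)-\lowrho_w(b)$, and $\lowrho_v(b)\le\rho(b)$.
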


Finally, we conclude this section with a visual representation of the terms in \Cref{eq:precise-biased} (see \Cref{fig:rank-graph}) which we find useful for reasoning about them. 

Suppose that we fix a voter $v$ and a distribution $D$ over candidates. Similarly to \Cref{fig:rank-line}, partition the interval $[0, 1]$ into blocks 
for each candidate such that the length of a candidate's interval is their probability mass in $D$, and the intervals are ordered from $0$ to $1$ by $v$'s preference. That is, candidate $b$ occupies the interval $\bigl[\rank_v(b;D), \rank_v(b;D) + D(b)\bigr)$. Now on top of the interval for each candidate $b$, add a box with height $\rho(b)$. The tops of these boxes trace out a function $\rho_v:[0, 1] \to \mathbb{R}_{\geq 0}$ which maps each rank $r$ to the $\rho$ value of the candidate occupying rank $r$. 

\begin{figure}[ht]
\centering
\begin{tikzpicture}[x=12cm,y=12cm]
  \definecolor{lineA}{RGB}{203,125,177}
  \definecolor{lineC}{RGB}{86,180,233}
  \definecolor{lineD}{RGB}{0,158,115}
  \definecolor{lineE}{RGB}{230,159,0}
  \definecolor{lineF}{RGB}{0,114,178}
  \definecolor{lineB}{RGB}{213,94,0}
  \definecolor{lineG}{RGB}{204,121,167}
  \definecolor{lineH}{RGB}{112,160,205}
  \definecolor{lineI}{RGB}{95,185,145}
  \definecolor{lineJ}{RGB}{145,120,210}

  \definecolor{pastelorange}{RGB}{235,145,95}
  \definecolor{pastelbrown}{RGB}{242,231,216}

  \fill[pastelbrown] (0.00,0) rectangle (0.04,0.18);
  \fill[pastelbrown] (0.04,0) rectangle (0.21,0.30);
  \fill[pastelbrown] (0.21,0) rectangle (0.27,0.24);
  \fill[pastelbrown] (0.27,0) rectangle (0.40,0.34);
  \fill[pastelbrown] (0.40,0) rectangle (0.45,0.15);
  \fill[pastelbrown] (0.45,0) rectangle (0.63,0.29); %
  \fill[pastelbrown] (0.63,0) rectangle (0.70,0.22);
  \fill[pastelbrown] (0.70,0) rectangle (0.82,0.12);
  \fill[pastelbrown] (0.82,0) rectangle (0.86,0.20);

  \path[
    pattern={Lines[angle=45,distance=4pt,line width=0.5pt]},
    pattern color=pastelorange
  ] (0.00,0) rectangle (0.40,0.18);
  \path[
    pattern={Lines[angle=45,distance=4pt,line width=0.5pt]},
    pattern color=pastelorange
  ] (0.40,0) rectangle (0.70,0.15);
  \path[
    pattern={Lines[angle=45,distance=4pt,line width=0.5pt]},
    pattern color=pastelorange
  ] (0.70,0) rectangle (0.86,0.12);

  \draw[black,->,line width=1pt] (0,-0.011) -- (0,0.38);

  \draw[lineA,line width=4pt] (0.00,0) -- (0.04,0);
  \draw[lineC,line width=4pt] (0.04,0) -- (0.21,0);
  \draw[lineD,line width=4pt] (0.21,0) -- (0.27,0);
  \draw[lineE,line width=4pt] (0.27,0) -- (0.40,0);
  \draw[lineF,line width=4pt] (0.40,0) -- (0.45,0);
  \draw[lineB,line width=4pt] (0.45,0) -- (0.63,0);
  \draw[lineG,line width=4pt] (0.63,0) -- (0.70,0);
  \draw[lineH,line width=4pt] (0.70,0) -- (0.82,0);
  \draw[lineI,line width=4pt] (0.82,0) -- (0.86,0);
  \draw[lineJ,line width=4pt] (0.86,0) -- (1.00,0);

  \draw[black,line width=1.2pt]
    (0.00,0.18) -- (0.04,0.18) --
    (0.04,0.30) -- (0.21,0.30) --
    (0.21,0.24) -- (0.27,0.24) --
    (0.27,0.34) -- (0.40,0.34) --
    (0.40,0.15) -- (0.45,0.15) --
    (0.45,0.29) -- (0.63,0.29) --
    (0.63,0.22) -- (0.70,0.22) --
    (0.70,0.12) -- (0.82,0.12) --
    (0.82,0.20) -- (0.86,0.20) --
    (0.86,0) -- (1.00,0);

  \node[above=3pt] at (0.335,0.34) {$\rho_v(\cdot)$};

  \draw[pastelorange!80!black,line width=1pt]
    (0.00,0.18) -- (0.40,0.18) --
    (0.40,0.15) -- (0.70,0.15) --
    (0.70,0.12) -- (0.86,0.12) --
    (0.86,0) -- (1.00,0);

  \node[
    fill=pastelbrown,
    rounded corners=1pt,
    inner sep=2pt
  ] at (0.21,0.085)
    {$\displaystyle \Ev_{b\sim D}\!\bigl[\lowrho_v(b)\bigr]$};

  \draw[black,<->,line width=0.9pt]
    (0.535,0.008) -- (0.535,0.282)
    node[
      midway,
      left=2pt,
      fill=pastelbrown,
      rounded corners=1pt,
      inner sep=1pt
    ] {$\rho(b)$};

  \draw[black,<->,line width=0.9pt]
    (0.600,0.008) -- (0.600,0.142)
    node[
      midway,
      right=2pt,
      fill=pastelbrown,
      rounded corners=1pt,
      inner sep=1pt
    ] {$\lowrho_v(b)$};

  \draw[black,line width=1.2pt] (1,-0.011) -- (1,0.011);
  \def\labelY{-0.035}
  \node[anchor=base] at (0.00,\labelY) {$0$};
  \node[anchor=base] at (0.54,\labelY) {$b$};
  \node[anchor=base] at (1.00,\labelY) {$1$};
\end{tikzpicture}
\caption{The function $\rho_v$ on voter $v$'s rank line.
Each candidate $b$ corresponds to a box of width $D(b)$ and height
$\rho(b)$. The hatched area under the running minimum
$\lowrho_v$ equals $\displaystyle\Ev_{b\sim D}\bigl[\lowrho_v(b)\bigr]$.}\label{fig:rank-graph}
\end{figure}

The running minimum of $\rho_v$ corresponds to $\lowrho_v$, and the shaded area in \Cref{fig:rank-graph} is precisely $\Ev_{b\sim D}\bigl[\lowrho_v(b)\bigr]$, $v$'s contribution to the left side of \Cref{eq:precise-biased}. Similarly, $v$'s contribution to the right side is $\max_b \bigl(\rho(b) - \lowrho_v(b)\bigr)$, which is the largest gap between the shaded region and the $\rho_v$. (Note that both the running minimum and the $\max_b$ include candidates whose block width is zero.)

Given the discussion above, and our way of treating each rank on the interval $[0, 1]$ as a copy of the candidate occupying that rank, with slight abuse of notation, we will treat the functions $\rho_v$ and $\lowrho_v$ as taking either candidates or ranks $r \in [0, 1]$ as input. For example, $\rho_v(b) = \rho_v(r)$ if candidate $b$ occupies rank $r$. As a reminder, this would make $\rho(b) = \rho_v(b)$ with $b\sim D$ and $\rho_v(r)$ with $r\sim \Unif(0, 1)$ equivalent.

\section{Warm-up: Distortion of stable \texorpdfstring{$k$}{k}-lotteries}\label{sec:stable-k}

We will start by giving an upper bound on the distortion of stable $k$-lotteries. The matching lower bound is proven in \Cref{app:lower-bound-stable-k} by \Cref{thm:lower-bound-stable-k}.

\begin{theorem}\label{thm:dist-sl}
In any election, stable $k$-lotteries have distortion at most $1 + 2\lambda_k$ where 
$$\lambda_k =  \frac{1 + (k-1)(k+1)^{-1/(k-1)}}{k}$$
and $\lambda_1 = 1$.
In particular, stable $2$-lotteries and stable $3$-lotteries have distortion $7/3 \approx 2.333.$
\end{theorem}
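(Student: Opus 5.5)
The plan is to use \Cref{thm:biased-iff}. It suffices to fix an election, a stable $k$-lottery $D$, and a function $\rho$ with $\rho(a^\star)=0$, and then prove \Cref{eq:precise-biased} with $\lambda=\lambda_k$. Instead of controlling aggregate fractions of voters, I would prove a \emph{per-voter} inequality with a correction term whose average over voters vanishes by stationarity (\Cref{prop:sl-stationary}). For voter $v$, let $b\sim D$ and let $\hat b_v$ be $v$'s favorite among $k+1$ independent draws from $D$. On $v$'s rank line this means $b$ sits at a rank $r\sim\Unif(0,1)$, while $\hat b_v$ sits at a rank with density $(k+1)r^k$. \Cref{prop:sl-stationary} gives $\Ev_v\Ev[\Phi(\rho(\hat b_v))]=\Ev[\Phi(\rho(b))]$ for any $\Phi$. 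It therefore suffices to find $\Phi$ such that, for every voter $v$,
\[
\int_0^1 \lowrho_v(r)\dif r+\int_0^1 \Phi\bigl(\rho_v(r)\bigr)\bigl((k+1)r^k-1\bigr)\dif r\;\le\;\lambda_k\,M_v,
\qquad M_v:=\max_b\bigl(\rho(b)-\lowrho_v(b)\bigr).
\]
The weight $w(r)=(k+1)r^k-1$ is negative for $r<r_0:=(k+1)^{-1/k}$ and positive above. This rewards voters whose high-ranked candidates have small $\rho$, which is exactly the intuition that $\hat b_v$ is better than $b$.

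First I would try the linear choice $\Phi(x)=x$ and reduce the per-voter inequality to an extremal problem on the rank line. Write $\rho_v=\lowrho_v+\gamma_v$ with $0\le\gamma_v\le M_v$. The gap part contributes at most $M_v\int_{r_0}^1 w=M_v\,\bigl(k\,r_0-\dots\bigr)$-type constants, and the worst case puts $\gamma_v=M_v$ where $w>0$ and $\gamma_v=0$ where $w<0$. The running-minimum part contributes $\int_0^1\lowrho_v(r)\bigl(1+w(r)\bigr)\dif r=\int_0^1 (k+1)r^k\,\lowrho_v(r)\dif r$. Since $\lowrho_v$ is nonincreasing, this is a weighted average biased toward high ranks, and by the layer-cake formula it reduces to checking threshold functions $\lowrho_v=\mathbf 1[r<t]\cdot h$. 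This part is not obviously bounded by $M_v$, and that is where the structure of the biased metric must enter. Let $s$ be $v$'s rank of $a^\star$. Then $\lowrho_v=0$ on ranks above $s$. Candidates ranked above $a^\star$ have $\rho\le M_v$, because $\rho(c)-\rho(a^\star)\le M_v$ whenever $c\succ_v a^\star$. So only ranks below $s$ can carry large $\lowrho_v$.

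The main obstacle is exactly this region below $a^\star$, where $\lowrho_v$ can be large relative to $M_v$. The potential alone cannot pay for it, because its weight there is only partially favorable. I would handle it by adding a Lagrangian multiple of the key property \Cref{thm:sl-key-property} applied to $a=a^\star$, in rank form $\Ev_v[s_v^k]\le \frac1{k+1}$. To keep the argument homogeneous in $\rho$, I would not add a constant multiple. Instead I would apply the key property at every threshold level via the layer-cake decomposition: for each level $h$, use the candidate $a^\star$ together with the set of voters for which the threshold at height $h$ ends at rank $s_v$. Each layer's term $\int_0^{s}(k+1)r^k\dif r=s^{k+1}$ is then traded against $\frac{1}{k+1}-s^k$. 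Optimizing the split point between ``paid by the potential'' and ``paid by the key property'' amounts to maximizing an expression like $s^{k+1}-c\,s^k$ over a free parameter. Its stationary point produces the exponent $-1/(k-1)$ and gives $\lambda_k=\frac{1+(k-1)(k+1)^{-1/(k-1)}}{k}$. If linear $\Phi$ is too lossy, I would fall back to $\Phi(x)=\alpha x$ with $\alpha$ tuned jointly with the multiplier.

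Finally I would check the formula in small cases. For $k=2$ it gives $\lambda_2=\frac{1+3^{-1}}{2}=\frac23$, and for $k=3$ it gives $\lambda_3=\frac{1+2\cdot 4^{-1/2}}{3}=\frac23$, so both have distortion $1+2\lambda=\frac73$. For $k=1$ the limit is $\lambda_1=1$, which recovers the distortion $3$ of maximal lotteries. Tightness is deferred to \Cref{thm:lower-bound-stable-k}.
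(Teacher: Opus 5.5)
Your outer framework matches the paper's: biased metrics via \Cref{thm:biased-iff}, a per-voter inequality whose offset averages to zero by \Cref{prop:sl-stationary}, and the key property for $a^\star$. The gap is that the step meant to produce $\lambda_k$ is never carried out, and the version you sketch fails, for three reasons.

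(i) \Cref{thm:sl-key-property} is one inequality per candidate, averaged over \emph{all} voters. It cannot be applied to ``the set of voters for which the threshold at height $h$ ends at $s_v$.'' The only legitimate use is a single multiplier $\mu\ge 0$ on $\mathbb{E}_v[s_v^k]\le\frac{1}{k+1}$.

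(ii) The optimization you describe cannot give the constant. The expression $s^{k+1}-cs^k=s^k(s-c)$ has no interior maximum on $[0,1]$. The exponent $-1/(k-1)$ comes from maximizing $s-\frac{k+1}{k}s^k$, that is, from the running minimum weighted by $1$ rather than by $(k+1)r^k$.

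(iii) Bounding the gap part by $|\alpha|c_0M_v$, with $c_0=\max_r(r-r^{k+1})$, separately from the running-minimum part is too lossy, even on a genuine stable lottery. Take $k=2$ and candidates $a^\star,b,c$ with $D=0.8\,\delta_b+0.2\,\delta_c$, $\rho(b)=1$, and $\rho(c)=\beta$ small. Let 40\% of voters rank $c\succ a^\star\succ b$ and 60\% rank $b\succ c\succ a^\star$. This $D$ is stable, since $\widehat D(b)=0.4\cdot 0.8^3+0.6\,(1-0.2^3)=0.8$ and $\mathbb{E}_v[s_v^2]=0.256<\frac13$.
\begin{itemize}
\item The first group has $s_v=0.8$ and $M_v=\beta$, and its running-minimum part is $\int_0^{0.8}\bigl(1+\alpha(3r^2-1)\bigr)\,\mathrm{d}r=0.8-0.288\alpha$.
\item The second group has $s_v=0$, $\lowrho_v\equiv 0$, and $M_v=1$, and its gap bound is $|\alpha|c_0$ with $c_0=2/(3\sqrt3)\approx 0.385$.
\end{itemize}
With $\lambda=\frac23$, your inequalities become $0.8-0.288\alpha\le(0.64-\frac13)\mu+O(\beta)$ and $|\alpha|c_0+\frac{\mu}{3}\le\frac23$. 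No $\alpha\in\mathbb{R}$ and $\mu\ge0$ satisfy both. The case $\alpha=0$ also shows that the potential cannot simply be dropped.

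Treating the two parts jointly with $\alpha\approx 2$ does handle this particular example. However, you give no argument that any constant $\alpha$ works in general, and threshold by threshold a constant is the wrong shape. In the layer-cake formula $\mathbb{E}_b[\lowrho_v(b)]=\int_0^\infty\Pr_b[\lowrho_v(b)>t]\,\mathrm{d}t$, consider a voter who ranks every candidate of $S_t=\{\rho>t\}$ below every other candidate, including $a^\star$. Her integrand at $t$ is $p_t=D(S_t)$, while a linear potential offsets only $\alpha(p_t-p_t^{k+1})$ there. So at that level you need weight $1/(1-p_t^k)$, which is unbounded as $p_t\to1$. You would have to prove that the single $\mu$ always absorbs the shortfall, and nothing in the proposal does this.

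The missing idea is the paper's $D$-adapted potential
$\Phi'(t)=1/\bigl(1-(1-F(t))^k\bigr)$, where $F$ is the CDF of $\rho(b)$ for $b\sim D$. It is combined with the intermediate draw $\underline b$, the smallest-$\rho$ among the same $k+1$ samples that define $\hat b_v$.
\begin{itemize}
\item This choice gives $\mathbb{E}[\Phi(\rho(b))]-\mathbb{E}[\Phi(\rho(\underline b))]=\mathbb{E}[\rho(b)]$ exactly, so the running-minimum term cancels completely.
\item Comparing $\hat b_v$ with $\underline b$, using $F(t)\ge sF_s(t)$ and concavity of $u\mapsto 1-(1-u)^k$, then leaves $\int_0^1\bigl((k+1)s^{k-1}-1\bigr)\bigl(\rho_v(s)-\lowrho_v(s)\bigr)\,\mathrm{d}s\le\lambda_kM_v$.
\item The constant $\lambda_k=\int_0^1\bigl((k+1)s^{k-1}-1\bigr)_+\,\mathrm{d}s$ is where $(k+1)^{-1/(k-1)}$ comes from.
\end{itemize}
The key property for $a^\star$ enters only once: to remove the assumption $F(0)>0$ by passing to $(1-\varepsilon)D+\varepsilon\delta_{a^\star}$ and letting $\varepsilon\to0$.
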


\begin{figure}[H]
\begingroup
\def\showstablekvalues{1}
\begin{tikzpicture}
    \begin{axis}[
        width=15cm,
        height=7.2cm,
        axis lines=left,
        axis line style={-stealth, line width=0.6pt},
        xmin=0.6, xmax=20.5,
        ymin=2.025, ymax=3.10,
        xlabel={$k$},
        ylabel={Distortion},
        label style={font=\small},
        tick label style={font=\small},
        xtick={1,...,20},
        ytick={2.2,2.4,2.6,2.8,3.0},
        yticklabel style={/pgf/number format/fixed,
                          /pgf/number format/precision=1,
                          /pgf/number format/fixed zerofill},
        tick align=outside,
        major tick length=3pt,
        xmajorgrids=false,
        ymajorgrids=true,
        major grid style={gray!22, line width=0.3pt},
        extra y ticks={2.11264},
        extra y tick labels={$2.11264$},
        extra y tick style={grid=none, major tick length=5pt},
        clip=false,
        /tikz/value label/.style={font=\scriptsize, text=violet!80!black,
                            inner sep=0.5pt},
    ]
        \addplot[black!65, dashed, line width=0.8pt]
            coordinates {(0.6,2.11264) (20.35,2.11264)};
        \node[anchor=south east, font=\small, text=black!75,
              inner sep=2pt] at (axis cs:20.25,2.125) {Lower bound};

        \addplot[violet, line width=1pt, no marks,
                 domain=1:20, samples=600]
            {1 + 2*(1 + (x-1)*(x+1)^(-1/max(x-1,0.001)))/x};

        \addplot[violet, only marks, mark=*, mark size=2.3pt,
                 mark options={solid, fill=violet, draw=white, line width=0.4pt}]
            coordinates {
            (1,3.0000000000)
            (2,2.3333333333)
            (3,2.3333333333)
            (4,2.3772053215)
            (5,2.4223089668)
            (6,2.4626848557)
            (7,2.4978973392)
            (8,2.5285499224)
            (9,2.5553674833)
            (10,2.5789934608)
            (11,2.5999584398)
            (12,2.6186908759)
            (13,2.6355357101)
            (14,2.6507717697)
            (15,2.6646259979)
            (16,2.6772845823)
            (17,2.6889015258)
            (18,2.6996052322)
            (19,2.7095035820)
            (20,2.7186878682)
            };

        \node[anchor=east, text=violet, font=\small]
            at (axis cs:20.2,2.94) {Stable $k$-lottery bound: $1+2\lambda_k$};

        \draw[violet!75!black, line width=0.5pt]
            (axis cs:2,2.310) -- (axis cs:2,2.275)
            -- (axis cs:3,2.275) -- (axis cs:3,2.310);
        \node[anchor=north, font=\small, text=violet!80!black,
              inner sep=2pt] at (axis cs:2.5,2.272) {$7/3\approx2.333$};

        \ifnum\showstablekvalues=1
        \node[value label,above=5pt] at (axis cs:1,3.0000000000) {$3$};
        \node[value label,below=5pt] at (axis cs:4,2.3772053215) {$2.377$};
        \node[value label,above=5pt] at (axis cs:5,2.4223089668) {$2.422$};
        \node[value label,below=5pt] at (axis cs:6,2.4626848557) {$2.463$};
        \node[value label,above=5pt] at (axis cs:7,2.4978973392) {$2.498$};
        \node[value label,below=5pt] at (axis cs:8,2.5285499224) {$2.529$};
        \node[value label,above=5pt] at (axis cs:9,2.5553674833) {$2.555$};
        \node[value label,below=5pt] at (axis cs:10,2.5789934608) {$2.579$};
        \node[value label,above=5pt] at (axis cs:11,2.5999584398) {$2.600$};
        \node[value label,below=5pt] at (axis cs:12,2.6186908759) {$2.619$};
        \node[value label,above=5pt] at (axis cs:13,2.6355357101) {$2.636$};
        \node[value label,below=5pt] at (axis cs:14,2.6507717697) {$2.651$};
        \node[value label,above=5pt] at (axis cs:15,2.6646259979) {$2.665$};
        \node[value label,below=5pt] at (axis cs:16,2.6772845823) {$2.677$};
        \node[value label,above=5pt] at (axis cs:17,2.6889015258) {$2.689$};
        \node[value label,below=5pt] at (axis cs:18,2.6996052322) {$2.700$};
        \node[value label,above=5pt] at (axis cs:19,2.7095035820) {$2.710$};
        \node[value label,below=5pt] at (axis cs:20,2.7186878682) {$2.719$};
        \fi
    \end{axis}
\end{tikzpicture}
\endgroup
\caption{Distortion of stable $k$-lotteries}
\end{figure}

Before diving into the proof, we will give some high-level intuition for the strategy. The first idea is to take the distortion objective voter by voter. It would be a dream if we could show something like: 
\begin{equation}\label{eq:dream-no-potential}
\Ev_{b\sim D}\bigl[\lowrho_v(b)\bigr]  \leq \lambda_k \max_b \bigl(\rho(b) - \lowrho_v(b)\bigr)
\end{equation}
for each voter $v$, because then averaging over $v$, we would get distortion $1 + 2\lambda_k$ by \Cref{thm:biased-iff}. Of course, this dream is impossible to satisfy for every voter. The worst case situation is if $\rho(b)$ tends to decrease with $v$'s preferences (meaning the function $\rho_v$ depicted in \Cref{fig:rank-graph} is decreasing). In this case, then the right side of \Cref{eq:dream-no-potential} is close to zero (meaning that $v$ is close to the optimal candidate $a^\star$). 

To deal with this issue, the key idea is to offset the left side of \Cref{eq:dream-no-potential} with something that is negative in these worst-case situations, positive in the best-case situations, and averages out to zero over all of the voters thanks to the stationary property of a stable $k$-lottery (\Cref{prop:sl-stationary}).  

Consider two experiments: drawing one sample $b\sim D$, and drawing $k + 1$ samples $b_1, \allowbreak\dots, \allowbreak b_{k + 1} \sim D$ and taking $v$'s favorite (call this $\hat{b}_v \sim \hat{D}_v$). We know by \Cref{prop:sl-stationary}, that when we average over the voters, these two distributions are the same. However, when we specialize this distribution for a particular voter $v$, we should expect $\hat{b}_v$ to be a lot better than $b$. If $\rho$ decreases with $v$'s preference, we expect that $\rho(\hat{b}_v)$ is smaller than $\rho(b)$. Thus, something like $\rho(\hat{b}_v) - \rho(b)$ could be a good choice for our offset. However, we can extract more value out of this charging scheme if we define some (increasing) potential function $\Phi$ that complements the existing terms in \Cref{eq:dream-no-potential} and instead use the offset
$$\Phi\bigl(\rho(\hat{b}_v)\bigr) - \Phi\bigl(\rho(b)\bigr).$$
Incorporating this idea into \Cref{eq:dream-no-potential}, we can show the lemma below.

\begin{lemma}\label{lem:key-sl-dist}
Fix an election, a biased metric parametrized by $\rho(\cdot)$, and an arbitrary distribution $D$ over candidates such that $\rho(b) = 0$ for some $b$ in the support of $D$.
Fix a voter $v$, and let $\widehat{D}_v$ be the distribution over candidates obtained by sampling $k + 1$ candidates from $D$ and outputting $v$'s favorite sample. There exists a potential function $\Phi$ depending only on $\rho(\cdot)$ and $D$ such that
\begin{equation}\label{eq:key-lem-sl}
\Ev_{b\sim D}\bigl[\lowrho_v(b)\bigr] + \Ev_{\hat{b}_v\sim \widehat{D}_v}\Bigl[\Phi\bigl(\rho(\hat{b}_v)\bigr)\Bigr] - \Ev_{b\sim D}\Bigl[\Phi\bigl(\rho(b)\bigr)\Bigr] \leq \lambda_k \max_b \bigl(\rho(b) - \lowrho_v(b)\bigr)
\end{equation}
\end{lemma}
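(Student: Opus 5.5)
The plan is to write everything on $v$'s rank line and then decompose by level sets of $\rho$ (a layer-cake argument). Let $\ell(r)=\lowrho_v(r)$, which is nonincreasing in $r$, and let $M=\max_b\bigl(\rho(b)-\lowrho_v(b)\bigr)$. By definition of $M$ we have $\ell(r)\le\rho_v(r)\le\ell(r)+M$ for every rank $r$. Under $\widehat{D}_v$ the rank has density $(k+1)r^k$. So the left side of \Cref{eq:key-lem-sl} is
\[
\int_0^1 \ell(r)\dif r+\int_0^1\Phi\bigl(\rho_v(r)\bigr)\bigl((k+1)r^k-1\bigr)\dif r .
\]
A linear $\Phi$ cannot work. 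Taking $\ell=L\cdot\mathbf 1[r<s]$ with $s$ near $1$ makes the left side positive while $M\approx 0$. This is why $\Phi$ must depend on $D$. I would take
\[
\Phi(x)=\int_0^x\varphi\bigl(F(y)\bigr)\dif y,\qquad F(y)=\Pr_{b\sim D}\bigl[\rho(b)<y\bigr],
\]
for a weight $\varphi:[0,1]\to\mathbb{R}_{\ge0}$ to be optimized. This $\Phi$ is increasing and depends only on $\rho$ and $D$, as required.

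\textbf{Layer-cake decomposition.} Write $\ell$ and $\Phi(\rho_v)$ as integrals over levels $y\ge0$. Let $s(y)=|\{r:\ell(r)\ge y\}|$; since $\ell$ is nonincreasing this set is an initial interval $[0,s(y))$. Let $S_y=\{r:\rho_v(r)\ge y\}$. This set has Lebesgue measure exactly $t(y):=1-F(y)$, a quantity fixed by $D$ and independent of $v$. The left side then becomes
\[
\int_0^\infty\Bigl[s(y)+\varphi\bigl(F(y)\bigr)\Bigl(\textstyle\int_{S_y}(k+1)r^k\dif r-t(y)\Bigr)\Bigr]\dif y ,
\]
and the right side is $\lambda_k\int_0^\infty\mathbf 1[y<M]\dif y$.

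The structural facts I will use about $S_y$ are:
\begin{itemize}
\item $S_y\supseteq[0,s(y))$, since $\rho_v\ge\ell$;
\item $S_y\subseteq[0,s(y-M))$, since $\rho_v\le\ell+M$;
\item $s(y)\le t(y)$, because every candidate of $D$-mass with $\rho<y$ must sit at ranks where $\ell<y$.
\end{itemize}
The hypothesis that a $\rho=0$ candidate lies in $\operatorname{supp}(D)$ gives $t(y)<1$ for $y>0$, so no level is degenerate. From these facts the $\widehat{D}_v$-mass of $S_y$ satisfies
\[
s^{k+1}+\bigl(\text{best placement of the remaining } t-s\bigr)\;\le\;\int_{S_y}(k+1)r^k\dif r\;\le\;s'^{\,k+1}-(s'-t)^{k+1},
\]
where $s=s(y)$ and $s'=s(y-M)$. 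Here $S_y$ is an initial segment of measure $s$ together with the rest packed as far left as possible.

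\textbf{Reduction to a pointwise inequality.} Since $\varphi\ge0$, I can use the lower bound on the $\widehat{D}_v$-mass wherever the coefficient helps and the upper bound otherwise. I would aim to prove, for each level and each feasible $(s,t)$, a pointwise inequality of the form
\[
s+\varphi(1-t)\bigl((\text{mass of }S_y\text{ under }\widehat{D}_v)-t\bigr)\le(\text{charge}).
\]
Levels $y<M$ may use charge up to $\lambda_k$. Levels $y\ge M$ must be charged against the level $y-M$, and the main obstacle is exactly this coupling between $y$ and $y-M$. My first attempt is a shift argument. Substitute $y\mapsto y+M$ in the integral over levels $y\ge M$ and pair level $y+M$ with level $y$. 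This yields a two-level inequality in $(s(y+M),t(y+M))$ versus $(s(y),t(y))$, with $t$ nonincreasing and $s\le t$. Taking the worst case $s'=t$ and using the monotonicity of $t$ should reduce it to a one-variable inequality in $t$ alone.

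\textbf{Choosing $\varphi$.} Finally, pick $\varphi$ so that this one-variable inequality is tight. The extremal configuration should be the two-step instance from \Cref{thm:lower-bound-stable-k}, and I would read off $\varphi$ from tightness there. Solving the resulting ODE, or simply optimizing over the sign-change threshold $r_0=(k+1)^{-1/k}$ of $(k+1)r^k-1$, should produce the constant $(k+1)^{-1/(k-1)}$ appearing in $\lambda_k$. The calculus verification is routine once $\varphi$ is fixed; the real risk is whether the level pairing above is lossless enough to reach exactly $\lambda_k$.
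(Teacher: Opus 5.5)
Your setup is sound and close to the paper's. The paper's potential has exactly your form, with $\varphi(F)=1/\bigl(1-(1-F)^k\bigr)$, and your level quantities $s(y),t(y)$ are the ones it uses in the general-$g$ analysis. But the proposal stops where the proof has to happen: $\varphi$ is never fixed, no per-level inequality is proved, and the suggested aggregation does not close. Pairing level $y+M$ with level $y$ gives a two-level inequality, yet level $y$ is itself charged against $y-M$. So you actually face chains $y,y+M,y+2M,\dots$ and would need a telescoping quantity you have not found.

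Moreover, since $\varphi\ge0$, only an upper bound on $\int_{S_y}(k+1)r^k\,dr$ is ever useful. The one you display, $s'^{k+1}-(s'-t)^{k+1}$, discards $[0,s)\subseteq S_y$, and that information is indispensable. Let $v$ rank $c$ (mass $1-\epsilon$, $\rho(c)=R$) lowest, then $a^\star$ (mass $\epsilon$), then a zero-mass $d$ with $\rho(d)=R$. Then $M=R$, every level $0<y\le R$ has $s=t=1-\epsilon$ and $s'=1$, and your bound at that level is at least $s$ for every $\varphi\ge0$; levels above $R$ contribute nothing. The total is therefore at least $(1-\epsilon)R>\lambda_kM$ for small $\epsilon$ and $k\ge2$. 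Since all these levels lie below $M$, no pairing can help. Also, $(k+1)^{-1/(k-1)}$ is the zero of $(k+1)r^{k-1}-1$; it does not come from the sign change of $(k+1)r^k-1$.

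The missing idea is to spend the constraint $\rho_v\le\lowrho_v+M$ rank by rank instead of level by level.

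\emph{Per-level bound.} Let $A_y=\{r:\lowrho_v(r)<y\le\rho_v(r)\}$, so $S_y=[0,s)\cup A_y$ and $|A_y|=t-s$. Take $\varphi(1-t)=1/(1-t^k)$, the choice that makes a fully agreeing level ($S_y=[0,t)$) contribute exactly $0$. Then level $y$ contributes $s-t+\bigl(\int_{S_y}(k+1)r^k\,dr-t^{k+1}\bigr)/(1-t^k)$. The numerator is the probability that the largest of $k+1$ uniform ranks lies in $S_y$ but not all of them do. That probability equals $\int_{A_y}(k+1)r^k\bigl(1-(1-F_r)^k\bigr)\,dr$, where $F_r$ is the fraction of $[0,r)$ outside $S_y$. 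Since $rF_r\le1-t$ and $u\mapsto1-(1-u)^k$ is increasing, concave, and zero at $0$, we get $1-(1-F_r)^k\le(1-t^k)/r$. Hence level $y$ contributes at most $\int_{A_y}\bigl((k+1)r^{k-1}-1\bigr)\,dr$.

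\emph{Aggregation.} Swap the order of integration. Rank $r$ lies in $A_y$ exactly for $y\in(\lowrho_v(r),\rho_v(r)]$, an interval of length at most $M$. So the total is $\int_0^1\bigl((k+1)r^{k-1}-1\bigr)\bigl(\rho_v(r)-\lowrho_v(r)\bigr)\,dr\le M\int_0^1\bigl((k+1)r^{k-1}-1\bigr)_+\,dr=\lambda_kM$.

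This is the paper's proof with the integrals swapped. There the numerator appears as the potential gap between $\hat b_v$ and the auxiliary draw $\low b$ (the minimum-$\rho$ candidate among the same $k+1$ samples). The factor $1/r$ is its estimate comparing the distribution of $\rho$-values on $[0,r)$ with the global one.
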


Before diving into the proof, we briefly discuss some of the conditions introduced in the lemma. First, the condition that $\Phi$ does not depend on $v$ ensures that if $D$ is a stable $k$-lottery, we have
\[
		\Ev_{v\sim  V} \Ev_{\hat{b}_v\sim \widehat{D}_v} \Bigl[\Phi\bigl(\rho(\hat{b}_v)\bigr)\Bigr] = \Ev_{b \sim  D} \Bigl[\Phi\bigl(\rho(b)\bigr)\Bigr],
\] 
and so the offset vanishes on average. Second, the additional condition $\rho(b) = 0$ for some $b$ in the support of $D$ is an artifact of the choice of $\Phi$ that gets forced by the proof, which turns out to be undefined close to 0 if this condition does not hold. While a stable $k$-lottery may not always satisfy this condition (since $\rho$ is arbitrary), we can still establish \Cref{thm:dist-sl} by applying the lemma with a distribution that mixes between a stable $k$-lottery and a point mass at the optimal candidate. (Crucially, this is why the lemma is not specific to a stable $k$-lottery, and instead applies for any $D$ with nonzero mass on a zero-$\rho$ candidate.) The details are after the proof of the lemma.

\begin{proof}[Proof of \Cref{lem:key-sl-dist}]
    Fix a voter $v$. To prove the lemma, we need to choose a potential $\Phi$ and understand how it changes when we replace an ordinary draw $b \sim  D$ by $\hat{b}_v$, the voter's favorite among $k+1$ draws. The main challenge in comparing $b$ with $\hat{b}_v$ directly is that they are different in two ways at once: we draw more candidates, and we also use $v$'s preference to select among them. We address this challenge by introducing an intermediate candidate $\underline{b}$, the candidate with the smallest $\rho$-value among the same $k+1$ draws, and call its corresponding distribution $\underline{D}$. Doing so replaces the original comparison with two simpler ones: $\underline{b}$ is chosen from the same sample as $\hat{b}_v$, while its $\rho$-value is easy to relate to that of an ordinary draw $b \sim  D$.

    \begin{figure}[ht]
\centering
\begin{tikzpicture}[x=12cm,y=12cm]
  \definecolor{lineA}{RGB}{203,125,177}
  \definecolor{lineC}{RGB}{86,180,233}
  \definecolor{lineD}{RGB}{0,158,115}
  \definecolor{lineE}{RGB}{230,159,0}
  \definecolor{lineF}{RGB}{0,114,178}
  \definecolor{lineB}{RGB}{213,94,0}
  \definecolor{lineG}{RGB}{204,121,167}
  \definecolor{lineH}{RGB}{112,160,205}
  \definecolor{lineI}{RGB}{95,185,145}
  \definecolor{lineJ}{RGB}{145,120,210}

  \definecolor{sampleyellow}{RGB}{248,230,55}
  \definecolor{pastelorange}{RGB}{235,145,95}
  \definecolor{pastelbrown}{RGB}{242,231,216}

  \fill[pastelbrown] (0.00,0) rectangle (0.04,0.18);
  \fill[pastelbrown] (0.04,0) rectangle (0.21,0.30);
  \fill[pastelbrown] (0.21,0) rectangle (0.27,0.24);
  \fill[pastelbrown] (0.27,0) rectangle (0.40,0.34);
  \fill[pastelbrown] (0.40,0) rectangle (0.45,0.15);
  \fill[pastelbrown] (0.45,0) rectangle (0.63,0.29); %
  \fill[pastelbrown] (0.63,0) rectangle (0.70,0.22);
  \fill[pastelbrown] (0.70,0) rectangle (0.82,0.12);
  \fill[pastelbrown] (0.82,0) rectangle (0.86,0.20);

  \path[
    pattern={
      Lines[
        angle=45,
        distance=4pt,
        line width=0.5pt
      ]
    },
    pattern color=pastelorange
  ] (0.00,0) rectangle (0.40,0.18);

  \path[
    pattern={
      Lines[
        angle=45,
        distance=4pt,
        line width=0.5pt
      ]
    },
    pattern color=pastelorange
  ] (0.40,0) rectangle (0.70,0.15);

  \path[
    pattern={
      Lines[
        angle=45,
        distance=4pt,
        line width=0.5pt
      ]
    },
    pattern color=pastelorange
  ] (0.70,0) rectangle (0.86,0.12);

  \draw[black,->,line width=1pt]
    (0,-0.011) -- (0,0.38);

  \draw[lineA,line width=4pt] (0.00,0) -- (0.04,0);
  \draw[lineC,line width=4pt] (0.04,0) -- (0.21,0);
  \draw[lineD,line width=4pt] (0.21,0) -- (0.27,0);
  \draw[lineE,line width=4pt] (0.27,0) -- (0.40,0);
  \draw[lineF,line width=4pt] (0.40,0) -- (0.45,0);
  \draw[lineB,line width=4pt] (0.45,0) -- (0.63,0);
  \draw[lineG,line width=4pt] (0.63,0) -- (0.70,0);
  \draw[lineH,line width=4pt] (0.70,0) -- (0.82,0);
  \draw[lineI,line width=4pt] (0.82,0) -- (0.86,0);
  \draw[lineJ,line width=4pt] (0.86,0) -- (1.00,0);

  \draw[black,line width=1.2pt]
    (0.00,0.18) --
    (0.04,0.18) --
    (0.04,0.30) --
    (0.21,0.30) --
    (0.21,0.24) --
    (0.27,0.24) --
    (0.27,0.34) --
    (0.40,0.34) --
    (0.40,0.15) --
    (0.45,0.15) --
    (0.45,0.29) --
    (0.63,0.29) --
    (0.63,0.22) --
    (0.70,0.22) --
    (0.70,0.12) --
    (0.82,0.12) --
    (0.82,0.20) --
    (0.86,0.20) --
    (0.86,0) --
    (1.00,0);

  \draw[pastelorange!80!black,line width=1pt]
    (0.00,0.18) --
    (0.40,0.18) --
    (0.40,0.15) --
    (0.70,0.15) --
    (0.70,0.12) --
    (0.86,0.12) --
    (0.86,0) --
    (1.00,0);

  \draw[black,<->,line width=0.9pt]
    (0.441,0.008) -- (0.441,0.142)
    node[
      midway,
      left=2pt,
      fill=pastelbrown,
      rounded corners=1pt,
      inner sep=1pt,
      text=black
    ] {$\rho(\low{b})$};

  \draw[black,<->,line width=0.9pt]
    (0.829,0.008) -- (0.829,0.192)
    node[
      midway,
      left=2pt,
      yshift=-3pt,
      fill=pastelbrown,
      rounded corners=1pt,
      inner sep=1pt,
      text=black
    ] {$\rho(\hat{b}_v)$};

  \fill[sampleyellow] (0.307,0) circle[radius=0.011];
  \fill[sampleyellow] (0.441,0) circle[radius=0.011];
  \fill[sampleyellow] (0.566,0) circle[radius=0.011];
  \fill[sampleyellow] (0.829,0) circle[radius=0.011];

  \draw[black,line width=1.2pt]
    (1,-0.011) -- (1,0.011);

  \def\labelY{-0.035}
  \node[anchor=base] at (0.00,\labelY) {$0$};
  \node[anchor=base] at (1.00,\labelY) {$1$};
\end{tikzpicture}
\caption{A visual depiction of $\rho(\low{b})$ and $\rho(\hat{b}_v)$ following \Cref{fig:rank-graph}. The yellow points represent $k + 1$ samples from $D$. The sample furthest to the right corresponds to $\hat{b}_v$, and the sample with the lowest $\rho$-value corresponds to $\low{b}$.}
\end{figure}

    \paragraph{Step 1. Using $b$ and $\underline{b}$ to choose $\Phi$.} Adding and subtracting the expected potential of $\underline{b}$, we can write the LHS of \Cref{eq:key-lem-sl} as 
    \begin{equation}\label{eq:key-lem-sl-expanded}
      \Ev_{b\sim D}\bigl[\lowrho_v(b)\bigr] + \biggl(\Ev_{\hat{b}_v\sim \widehat{D}_v}\Bigl[\Phi\bigl(\rho(\hat{b}_v)\bigr)\Bigr] - \Ev_{\low{b}\sim \low{D}}\Bigl[\Phi\bigl(\rho(\low{b})\bigr)\Bigr]\biggr) - \biggl(\Ev_{b\sim D}\Bigl[\Phi\bigl(\rho(b)\bigr)\Bigr] - \Ev_{\low{b}\sim \low{D}}\Bigl[\Phi\bigl(\rho(\low{b})\bigr)\Bigr]\biggr).
    \end{equation}
    
    We begin with the last term, the comparison between $b$ and $\underline{b}$. Since neither candidate is selected using $v$'s preferences, this term depends only on the distribution of their $\rho$-values. Let $F(t) = \Pr_{b\sim D}\bigl[\rho(b) \leq t\bigr]$. An ordinary draw $b$ has $\rho(b) > t$ with probability $1 - F(t)$. On the other hand, $\underline{b}$ has $\rho(\underline{b}) > t$ exactly when all $k+1$ samples have $\rho$-value $>t$. Therefore,
    \[
    		\Pr_{\low{b}\sim \low{D}}\bigl[\rho(\low{b}) > t\bigr] = \bigl(1 - F(t)\bigr)^{k+1}.
    \] 
    For now, we leave $\Phi$ unspecified and write 
    \[
    		\Phi(z) = \int_{0}^{z} \Phi'(t) \;\mathrm{d}t.
    \] 
    Taking the expectations by layer-cake formula, we have
    \begin{align*}
    		\Ev_{b\sim D}\Bigl[\Phi\bigl(\rho(b)\bigr)\Bigr] - \Ev_{\low{b}\sim \low{D}}\Bigl[\Phi\bigl(\rho(\low{b})\bigr)\Bigr]  &= \int_{0}^{\infty} \Phi'(t) \Pr_{b\sim D}\bigl[\rho(b) > t\bigr] \;\mathrm{d}t - \int_{0}^{\infty} \Phi'(t) \Pr_{\low{b}\sim \low{D}}\bigl[\rho(\low{b}) > t\bigr] \;\mathrm{d}t\\
            &= \int_{0}^{\infty} \Phi'(t) \bigl(1-F(t)\bigr) \;\mathrm{d}t - \int_{0}^{\infty} \Phi'(t) \bigl(1-F(t)\bigr)^{k+1} \;\mathrm{d}t \\
    		&=\int_0^\infty \Phi'(t)\bigl(1 - F(t)\bigr)\Bigl(1-\bigl(1 - F(t)\bigr)^k\Bigr) \dif t.
    \end{align*}
    
    At this point, we want to choose $\Phi'(t)$ to cancel the factor $1 - \bigl(1-F(t)\bigr)^{k}$. This simplifies the expression and begins to bring \Cref{eq:key-lem-sl-expanded} into the form of the RHS of \Cref{eq:key-lem-sl}. In particular, if we let
    \[
    		\Phi'(t) = \frac{1}{1-\bigl(1-F(t)\bigr)^{k}},
    \] 
    then
    \begin{equation}
            \label{eq:sl-potential-cancellation}
    		\Ev_{b\sim D}\Bigl[\Phi\bigl(\rho(b)\bigr)\Bigr] - \Ev_{\low{b}\sim \low{D}}\Bigl[\Phi\bigl(\rho(\low{b})\bigr)\Bigr] = \int_0^\infty \bigl(1 - F(t)\bigr)\dif t = \Ev_{b\sim D}\bigl[\rho(b)\bigr],
    \end{equation}
    and \Cref{eq:key-lem-sl-expanded} simplifies into
    \begin{equation}\label{eq:key-lem-sl-expanded-simplified}
      \biggl(\Ev_{\hat{b}_v\sim \widehat{D}_v}\Bigl[\Phi\bigl(\rho(\hat{b}_v)\bigr)\Bigr] - \Ev_{\low{b}\sim \low{D}}\Bigl[\Phi\bigl(\rho(\low{b})\bigr)\Bigr]\biggr) - \Ev_{b\sim D}\bigl[\rho(b) - \lowrho_v(b)\bigr].
    \end{equation}

    Before continuing, we make two important observations about $\Phi'(t)$, and by extension the definition of $\Phi$ as 
    \[
    		\Phi(z) = \int_{0}^{z} \frac{\mathrm{d} t}{1-\bigl(1-F(t)\bigr)^{k}}.
    \]
    Firstly, to avoid the denominator being 0, we need that $F(t) > 0$ for all $t$, meaning that $\Pr_{b\sim D}\bigl[\rho(b) = 0\bigr]$ is positive. (We remove this assumption when $D$ is a stable $k$-lottery after the proof of the lemma.) Second, we observe that $\Phi$ truly only depends on $\rho(\cdot)$ and $D$, and does not depend on $v$'s preferences, since the same is true of $F$.

    \paragraph{Step 2. Comparing $\hat{b}_v$ with $\underline{b}$.} Continuing, our goal now turns to bounding
    \[
    		\Ev_{\hat{b}_v\sim \widehat{D}_v}\Bigl[\Phi\bigl(\rho(\hat{b}_v)\bigr)\Bigr] - \Ev_{\low{b}\sim \low{D}}\Bigl[\Phi\bigl(\rho(\low{b})\bigr)\Bigr].
    \] 
    Recall that to sample $b \sim  D$, it is equivalent to sample $r \sim  [0,1]$ uniformly and choose the candidate whose rank interval $\bigl[\rank_v(b;D), \rank_v(b;D) + D(b)\bigr)$ contains $r$. As in the previous section, let $\rho_v(r)$ denote the $\rho$-value of the candidate occupying rank $r$. Then sampling $\hat{b}_v\sim\widehat{D}_v$ corresponds to drawing $r_1,\dots,r_{k+1}\sim\Unif(0,1)$ and outputting the candidate at rank $\max_i r_i$, while sampling $\low{b}\sim\low{D}$ corresponds to the value $\min_i\rho_v(r_i)$.

    With this in mind, we have:
    \begin{equation}\label{eq:final-potentials}
    \Ev_{\hat{b}_v\sim \widehat{D}_v}\Bigl[\Phi\bigl(\rho(\hat{b}_v)\bigr)\Bigr] - \Ev_{\low{b}\sim \low{D}}\Bigl[\Phi\bigl(\rho(\low{b})\bigr)\Bigr] =  \Ev_{r_1, \dots, r_{k+1} \sim \Unif(0,1)} \Bigl[\Phi\bigl(\rho_v(\max_i r_i)\bigr) - \Phi\bigl(\min_i \rho_v(r_i)\bigr)\Bigr].
    \end{equation}
    
    We can imagine drawing $r_1, \hdots, r_{k+1}$ in two stages. First, draw their largest rank, $s = \max_i r_i$. Then draw the remaining $k$ ranks independently and uniformly from $[0,s]$. The candidate at rank $s$ is $\hat{b}_v$, so fixing $s$ also fixes $\rho_v(s)$. The other $k$ ranks determine whether the smallest $\rho$-value among the samples falls below $\rho_v(s)$, and by how much. The largest rank is at most $s$ exactly when all $k+1$ ranks lie in $[0,s]$, which happens with probability $s^{k+1}$. Therefore, the largest rank has probability density $(k+1) s^k$ at $s$. It follows that the expression above is equal to
    \begin{align*}
    & \int_0^1 (k+1)s^k \Ev_{r_1, \dots, r_k\sim\Unif(0,s)}\Bigl[\Phi\bigl(\rho_v(s)\bigr) - \Phi\bigl(\min_i \rho_v(r_i)\bigr)\Bigr] \dif s
    \end{align*}
    (For ease of exposition, we can treat $s$ as $r_{k + 1}$, so that $\min_i \rho_v(r_i)$ includes $\rho_v(s)$.) Rearranging, we get
    \begin{align*}
    & \int_0^1 (k+1)s^k \Ev_{r_1, \dots, r_k\sim\Unif(0,s)}\biggl[\int_{\min_i \rho_v(r_i)}^{\rho_v(s)}\Phi'(t) \dif t \biggr] \dif s\\
    = &\int_0^1 (k+1)s^k \int_{0}^{\rho_v(s)}\Phi'(t)\Pr_{r_1, \dots, r_k\sim\Unif(0,s)} \bigl[\min_i \rho_v(r_i) \leq t\bigr] \dif t  \dif s
    \end{align*}
    
    To describe the other $k$ draws, let 
    \[
    		F_s(t) = \Pr_{r \sim \Unif(0,s)} \bigl[\rho_v(r)\leq t\bigr].
    \]
    If $t < \underline{\rho}_v(s)$, then no rank in $[0,s]$ has $\rho$-value at most $t$. Thus $F_s(t) = 0$, and the part of the integral below $\underline{\rho}_v(s)$ contributes nothing. For $t < \rho_v(s)$, we have $\rho_v(s) > t$, and hence $\min \bigl\{\rho_v(s), \rho_v(r_1), \hdots, \rho_v(r_k)\bigr\} \le t$ if and only if $\min_{1\le i \le k} \rho_v(r_i) \le t$. Therefore,
    \[
    		\Pr_{r_1, \dots, r_k\sim\Unif(0,s)} \bigl[\min_i \rho_v(r_i) \leq t\bigr] = 1 - \bigl(1 - F_s(t)\bigr)^k.
    \]
    Putting everything together, we get
    \begin{align}
    \Ev_{\hat{b}_v\sim \widehat{D}_v}\Bigl[\Phi\bigl(\rho(\hat{b}_v)\bigr)\Bigr] - \Ev_{\low{b}\sim \low{D}}\Bigl[\Phi\bigl(\rho(\low{b})\bigr)\Bigr] &=  \int_0^1 (k+1)s^k \int_{\lowrho_v(s)}^{\rho_v(s)}\Phi'(t) \Bigl(1 - \bigl(1 - F_s(t)\bigr)^k\Bigr)\dif t  \dif s \notag \\
    &= \int_0^1 (k+1)s^k \int_{\lowrho_v(s)}^{\rho_v(s)}\frac{1 - \bigl(1 - F_s(t)\bigr)^k}{1 - \bigl(1 - F(t)\bigr)^k}\dif t  \dif s. \label{eq:final-potentials-ratio}
    \end{align}
    
    We now bound the integrand in \Cref{eq:final-potentials-ratio} and claim that, for $s>0$,
    \begin{equation}
            \label{eq:potential-1-over-s-bound}
    		\frac{1-\bigl(1-F_s(t)\bigr)^{k}}{1-\bigl(1-F(t)\bigr)^{k}} \le \frac{1}{s}.
    \end{equation}
    Indeed, since
    \[
    		F(t) = \Pr_{r\sim\Unif(0,1)}\bigl[\rho_v(r) \leq t\bigr] \geq s \Pr_{r\sim\Unif(0,s)}\bigl[\rho_v(r) \leq t\bigr] = s F_s(t),
    \] 
    and $u\mapsto 1 - (1-u)^{k}$ is increasing, concave, and vanishes at $u=0$, we have
    \[
    		1-\bigl(1-F(t)\bigr)^{k} \ge 1 - \bigl(1-s F_s(t)\bigr)^{k} \ge s \Bigl(1-\bigl(1-F_s(t)\bigr)^{k}\Bigr).
    \] 
    The denominator is positive by Step 1, so the claim follows. 
    
    \paragraph{Step 3. Putting the bounds together.} Applying the $1 /s$ bound to \Cref{eq:final-potentials-ratio}, we obtain
    \begin{align*}
    \Ev_{\hat{b}_v\sim \widehat{D}_v}\Bigl[\Phi\bigl(\rho(\hat{b}_v)\bigr)\Bigr] - \Ev_{\low{b}\sim \low{D}}\Bigl[\Phi\bigl(\rho(\low{b})\bigr)\Bigr] &\leq \int_0^1 (k+1)s^k \int_{\lowrho_v(s)}^{\rho_v(s)}\frac1s\dif t  \dif s\\
    &= \int_0^1 (k+1)s^{k-1} \bigl(\rho_v(s) - \lowrho_v(s)\bigr) \dif s.
    \end{align*}
    To combine this with $\Ev_{b\sim D}\bigl[\rho(b) - \lowrho_v(b)\bigr]$ in \Cref{eq:key-lem-sl-expanded-simplified}, observe that 
    \[
    		\Ev_{b\sim D}\bigl[\rho(b) - \lowrho_v(b)\bigr] = \int_0^1 \bigl(\rho_v(s) - \lowrho_v(s)\bigr) \dif s.
    \] 
    Putting all of these expressions together, we have
    \begin{equation}
        \label{eq:sl-final-integral-bound}
        \begin{aligned}
            &\Ev_{b\sim D}\bigl[\lowrho_v(b)\bigr] + \Ev_{\hat{b}_v\sim \widehat{D}_v}\Bigl[\Phi\bigl(\rho(\hat{b}_v)\bigr)\Bigr] - \Ev_{b\sim D}\Bigl[\Phi\bigl(\rho(b)\bigr)\Bigr]  \\
            &\quad\leq  \int_0^1 \bigl((k+1)s^{k-1} - 1\bigr) \bigl(\rho_v(s) - \lowrho_v(s)\bigr) \dif s \\
            &\quad\leq \biggl(\int_0^1 \bigl((k+1)s^{k-1} - 1\bigr)_+ \dif s\biggr)\cdot \max_b\bigl(\rho(b) - \lowrho_v(b)\bigr).
        \end{aligned}
    \end{equation}

    Finally, the coefficient is positive when $s > (k+1)^{-1 /(k-1)}$ (naturally treated as $0$ if $k = 1$), so
    \begin{align*}
    		\int_{0}^{1} \bigl((k+1) s^{k-1}-1\bigr)_+\;\mathrm{d}s &= \int_{(k+1)^{-1 /(k-1)}}^{1} \bigl((k+1)s^{k-1} - 1\bigr) \;\mathrm{d}s \\
    		&= \frac{1 + (k-1)(k+1)^{-1 /(k-1)}}{k} = \lambda_k,
    \end{align*}
    as claimed. 
    \end{proof}

    Finally, we show how to establish \Cref{thm:dist-sl} using \Cref{lem:key-sl-dist}.

    \begin{proof}[Proof of \Cref{thm:dist-sl}.]
    Recall at the end of Step 1 that we assumed $D$ assigns positive probability to some candidate with $\rho = 0$. 
    Steps 1-3 establish our Lemma under this assumption, for averaging \Cref{eq:key-lem-sl} over $v$ and applying \Cref{prop:sl-stationary} gives \Cref{eq:precise-biased}. We now remove this assumption and extend the result to any stable $k$-lottery $D$, in particular, one where $\rho(b) > 0$ for every $b\in \mathrm{supp}(D)$. 

    Choose $a^\star$ with $\rho(a^\star) = 0$, and let 
    \[
            \beta = \min \bigl\{ \rho(b): b\in \mathrm{supp}(D)\bigr\} > 0, \qquad D_\eps = (1-\eps)D + \eps \delta_{a^\star}.
    \]
    Since $D_\eps (a^\star) > 0$, the Lemma applies to $D_\eps$. Let $\widehat{D}_\eps$ be the distribution obtained by drawing $v\sim V$, drawing $k+1$ candidates from $D_\eps$, and returning $v$'s favorite, and let $\Phi_\eps$ be the function obtained from Step 1 with $D_\eps$ in place of $D$.
    Adding a constant to $\Phi_\eps$ does not change \Cref{eq:key-lem-sl}, so we normalize it and set $\Phi_\eps(\beta) = 0$. Since $\underline{\rho}_v(a^\star) = 0$, averaging \Cref{eq:key-lem-sl} gives
    \begin{equation}
            \label{eq:key-lem-sl-approx}
            (1-\eps) \Ev_{\substack{v\sim V\\b\sim D}}\bigl[\lowrho_v(b)\bigr]
            + \Ev_{b\sim \widehat{D}_\eps}  \Bigl[\Phi_\eps \bigl(\rho(b)\bigr)\Bigr] 
            - \Ev_{b\sim D_\eps} \Bigl[\Phi_\eps \bigl(\rho(b)\bigr)\Bigr] 
            \leq  \lambda_k \Ev_{v\in V}  \Bigl[ \max_b \bigl(\rho(b) - \underline{\rho}_v(b)\bigr)\Bigr].
    \end{equation}
    By definition, $\mathrm{supp}(D_\eps) = \mathrm{supp}(D) \cup \{a^\star\}$. Moreover, $\widehat{D}_\eps$ returns one of the candidates drawn from $D_\eps$ and hence has the same support. Therefore, we can write the two terms involving $\Phi$ on the LHS as
    \[
            \Ev_{b\sim \widehat{D}_\eps}  \Bigl[\Phi_\eps \bigl(\rho(b)\bigr)\Bigr] 
            - \Ev_{b\sim D_\eps} \Bigl[\Phi_\eps \bigl(\rho(b)\bigr)\Bigr]  
            = \bigl(\widehat{D}_\eps(a^\star)-\eps\bigr)\Phi_\eps(0) + \sum_{b\in \mathrm{supp}(D)} \bigl(\widehat{D}_\eps (b) - D_\eps(b)\bigr) \Phi_\eps\bigl(\rho(b)\bigr)
    \]

    As $\eps \to 0$, $D_\eps$ approaches $D$, and the same sampling procedure gives $\widehat{D}_\eps \to \widehat{D} = D$. Our normalization keeps $\Phi_\eps\bigl(\rho(b)\bigr)$ bounded for $b\in \mathrm{supp}(D)$. Therefore, the second term involving $\sum_{b\in \mathrm{supp}(D)}$ tends to zero.

    It remains to consider the first term that involves $a^\star$. Let 
    \[
            \alpha = \Pr_{\substack{v\sim V \\ b_1, \hdots, b_k \sim D}} \bigl[a^\star \succ_v b_1, \hdots, b_k\bigr]
    \]
    and note that $\alpha \leq 1 / (k+1)$ by \Cref{eq:sl-key-property}. Let $F_\eps$ be the CDF of $\rho(b)$ for $b\sim D_\eps$. For $0 \leq t < \beta$, we have $F_\eps(t) = \eps$, so by definition
    \[
            \Phi_\eps(0) = - \frac{\beta}{1-(1-\eps)^k}.
    \]

    How large is $\widehat{D}_\eps (a^\star)$? If $a^\star$ appears exactly once among the $k+1$ samples, it is returned with probability $\alpha$. The probability that it appears at least twice is $O(\eps^2)$. Therefore,
    \[
            \widehat{D}_\eps (a^\star) = (k+1) \eps (1-\eps)^k \alpha + O(\eps^2).
    \]
    It follows that
    \[
            \bigl(\widehat{D}_\eps(a^\star) - \eps\bigr) \Phi_\eps(0) \to \frac{\beta}{k} \bigl(1-(k+1)\alpha\bigr) \geq 0.
    \]
    Letting $\eps \to 0$ and applying it to \Cref{eq:key-lem-sl-approx} implies \Cref{eq:precise-biased}, so our proof is complete!
    \end{proof}

\section{Generalizing to \texorpdfstring{$g$}{g}-stable lotteries}\label{sec:general-g}
We now turn to proving \Cref{thm:main}. To make the choices more intuitive, we write the proof from the perspective of generalizing the techniques in \Cref{sec:stable-k} to $g$-stable lotteries. In the end, we derive a bound on the distortion as a function of $g$ and an auxiliary function $W$, and use a computer search to find choices that give the final bound.

\paragraph{Step 0. A quick recap of the stable $k$-lottery proof.}
Recall that $b\sim D$, while $\hat{b}_v$ and $\underline{b}$ are chosen from the same $k+1$ independent draws from $D$: $\hat{b}_v$ is $v$'s favorte, and $\underline{b}$ has the smallest $\rho$-value. In \Cref{eq:key-lem-sl}, we added an offset $\Ev \Bigl[\Phi\bigl(\rho(\hat{b}_v)\bigr)\Bigr] - \Ev \Bigl[\Phi\bigl(\rho(b)\bigr)\Bigr]$ which, for a stable $k$-lottery, averages to zero over voters by \Cref{prop:sl-stationary}. Adding and subtracting $\Ev \Bigl[\Phi\bigl(\rho(\underline{b})\bigr)\Bigr]$ gives \Cref{eq:key-lem-sl-expanded}, separating the offset into two potential differences: one between $\hat{b}_v$ and $\underline{b}$, and one between $b$ with $\underline{b}$. 

For the comparison between $b$ and $\underline{b}$, write $p = \Pr_{b\sim D} \bigl[\rho(b) > t\bigr]$, so $\Pr \bigl[\rho(\underline{b})> t\bigr]  = p^{k+1}$. In \Cref{lem:key-sl-dist}, we chose $\Phi'(t) = 1 / (1-p^k)$, which makes $\Phi'(t) (p-p^{k+1}) = p$, so that integrating this over $t$ gives \Cref{eq:sl-potential-cancellation}:
\[
        \Ev \Bigl[\Phi\bigl(\rho(b)\bigr)\Bigr] - \Ev \Bigl[\Phi\bigl(\rho(\underline{b})\bigr)\Bigr] = \Ev \bigl[\rho(b)\bigr].
\]
Substituting this into \Cref{eq:key-lem-sl-expanded} gives \Cref{eq:key-lem-sl-expanded-simplified}.

For the comparison between $\hat{b}_v$ and $\underline{b}$, because they were chosen from the same $k+1$ draws, $v$ always ranked $\hat{b}_v$ (weakly) above $\underline{b}$. We bounded their expected potential difference and substituted this bound into \Cref{eq:key-lem-sl-expanded-simplified}. This gave \Cref{eq:sl-final-integral-bound}, bounding the LHS of \Cref{eq:key-lem-sl} by $\lambda_k \max_b\bigl(\rho(b) - \underline{\rho}_v(b)\bigr)$.

\paragraph{Step 1. Defining the analogues of $b$, $\hat{b}_v$, and $\low{b}$ for a general $g$.}

From now on, fix a continuous, nondecreasing function $g\colon[0,1]\to[0,1]$ with $g(0)=0$ and $g(1)=1$, and let $D$ be a $g$-stable lottery.
Fix also an arbitrary $\rho\colon C\to\mathbb{R}_{\geq0}$ with $\rho(a^\star)=0$, as in \Cref{eq:precise-biased}.

Our first step is to define general versions of $b$, $\hat{b}_v$, and $\low{b}$. We want them to retain the same roles they had: $b$ should remain a draw from $D$; $\hat{b}_v$, when averaged over voters, should recover $D$; and $\low{b}$ should retain the same relation to $\rho$ that it had in the stable $k$-lottery proof.

To define $\hat{b}_v$, recall that the largest of $k+1$ uniformly chosen ranks has density $(k+1)r^k=r^k / \int_0^1 s^k\dif s$. 
Therefore, when $r^k$ is replaced by a general $g(r)$ (recall that stable $k$-lotteries correspond to $g(r)=r^k$), we draw a rank with density proportional to $g(r)$. Writing
\[
\tau=\int_0^1 g(r)\dif r,
\]
the density we seek is $g(r)/\tau$. Our assumptions on $g$ imply that $0<\tau<1$.

Now everything has a concrete meaning. Place the candidates on $[0,1]$ in voter $v$'s order of preference, with candidate $a$ occupying an interval of length $D(a)$. If we draw $r$ with density $g(r)/\tau$, then $\hat{b}_v$ is the candidate whose interval contains $r$. By \Cref{prop:gsl-stationary}, averaging $\hat{b}_v$ over voters recovers $D$.

Similarly, to define $\low{b}$, we place the candidates on $[0,1]$ in decreasing order of $\rho$. We draw $u$ with density $g(u)/\tau$ and select the candidate whose interval contains $u$.

When $g(r)=r^k$, we have $\tau=1/(k+1)$, so the density becomes $(k+1)r^k$. Thus, $b$ and $\hat{b}_v$ have the same distributions as before, and so does the $\rho$-value of $\low{b}$. (See \Cref{fig:three-draws} for an illustration of the three sampling procedures.)

\begin{figure}[htbp]
    \centering
    \begingroup
\definecolor{gsInk}{HTML}{222B32}
\definecolor{gsBlue}{RGB}{0,133,190}
\definecolor{gsOrange}{RGB}{228,106,24}
\definecolor{gsGray}{HTML}{66717A}
\colorlet{gsOrangeFill}{gsOrange!22!white}
\colorlet{gsOrangeInk}{gsOrange!80!black}
\tikzset{
  gs axis/.style={draw=gsInk,line width=.65pt,->,>=latex},
  gs curve/.style={draw=gsBlue,line width=1.2pt,line join=round},
  gs tick/.style={draw=gsInk,line width=.65pt},
}
\noindent\resizebox{\linewidth}{!}{%
\begin{tikzpicture}[font=\normalsize]
  \def\gsSetIntervals{0/.13,.26/.42,.54/.65,.79/.95}

  \begin{scope}[x=6.25cm,y=.78cm]
    \node[anchor=base] at (.5,4.65)
      {\textbf{(a)} Draw \(b\)};
    \node[anchor=base,text=gsOrangeInk] at (.5,3.55)
      {\(\Pr\bigl[b\in S\bigr]=p=D(S)\)};

    \foreach \lo/\hi in \gsSetIntervals {
      \fill[gsOrange] (\lo,-.13) rectangle (\hi,0);
    }
    \draw[gs axis] (0,0)--(1.05,0);
    \draw[gs tick] (1,-.04)--(1,.08);
    \node[below=3pt] at (0,0) {\(0\)};
    \node[below=3pt] at (1,0) {\(1\)};
    \node[text=gsOrangeInk] at (.5,2.45)
      {orange \(=S\)};
  \end{scope}

  \begin{scope}[xshift=7.55cm,x=6.25cm,y=.78cm]
    \node[anchor=base] at (.5,4.65)
      {\textbf{(b)} Draw \(\widehat b_v\)};
    \node[anchor=base,text=gsOrangeInk] at (.5,3.55)
      {\(\Pr\bigl[\widehat b_v\in S\bigr]=\frac{K_v(S)}{\tau}\)};

    \foreach \lo/\hi in \gsSetIntervals {
      \path[fill=gsOrangeFill]
        (\lo,0)--plot[domain=\lo:\hi,samples=20]
        (\x,{3*\x*\x})--(\hi,0)--cycle;
      \fill[gsOrange] (\lo,-.13) rectangle (\hi,0);
    }
    \draw[gs axis] (0,0)--(1.05,0) node[right] {\(r\)};
    \draw[gs axis] (0,-.02)--(0,3.20);
    \draw[gs curve] plot[domain=0:1,samples=55] (\x,{3*\x*\x});
    \node[anchor=south east,text=gsBlue]
      at (.96,2.62) {\(g(r)/\tau\)};
    \draw[gs tick] (1,-.04)--(1,.08);
    \node[below=3pt] at (0,0) {\(0\)};
    \node[below=3pt] at (1,0) {\(1\)};
  \end{scope}

  \begin{scope}[xshift=15.10cm,x=6.25cm,y=.78cm]
    \def\p{.56}
    \node[anchor=base] at (.5,4.65)
      {\textbf{(c)} Draw \(\underline b\)};
    \node[anchor=base,text=gsOrangeInk] at (.5,3.55)
      {\(\Pr\bigl[\underline b\in S\bigr]=\frac{G(p)}{\tau}\)};

    \path[fill=gsOrangeFill]
      (0,0)--plot[domain=0:\p,samples=35]
      (\x,{3*\x*\x})--(\p,0)--cycle;
    \fill[gsOrange] (0,-.13) rectangle (\p,0);
    \draw[gsOrangeInk,line width=.65pt] (\p,0)--(\p,{3*\p*\p});
    \draw[gs axis] (0,0)--(1.05,0) node[right] {\(u\)};
    \draw[gs axis] (0,-.02)--(0,3.20);
    \draw[gs curve] plot[domain=0:1,samples=55] (\x,{3*\x*\x});
    \node[anchor=south east,text=gsBlue]
      at (.96,2.62) {\(g(u)/\tau\)};
    \draw[gs tick] (1,-.04)--(1,.08);
    \node[below=3pt] at (0,0) {\(0\)};
    \node[below=2pt,text=gsOrangeInk] at (\p,0) {\(p\)};
    \node[below=3pt] at (1,0) {\(1\)};
    \node[text=gsGray] at (.5,-1.18)
      {order by \(\rho\): larger \(\longrightarrow\) smaller};
  \end{scope}

  \begin{scope}[x=6.25cm,y=.78cm]
    \node[text=gsGray] at (1.104,-1.18)
      {voter \(v\)'s order: lower \(\longrightarrow\) higher};
  \end{scope}
\end{tikzpicture}
}
\endgroup
    \vspace{-1.5\baselineskip}
    \caption{The three draws. Orange denotes $S$. In (a) and (b), candidates are arranged in the order of voter $v$'s preference; in (c), they follow decreasing $\rho$. The total length of the orange segments in (a) equals $\Pr\bigl[b\in S\bigr]$. The orange areas in (b) and (c) equal $\Pr\bigl[\widehat b_v\in S\bigr]$ and $\Pr\bigl[\underline b\in S\bigr]$, respectively.}\label{fig:three-draws}
\end{figure}

\paragraph{Step 2. Finding the analogue of $p-p^{k+1}$.}

Our next goal is to generalize the intuition \textit{behind} defining $\Phi$; we will repeat the calculation that chose $\Phi$ in the next step. In the proof for stable $k$-lotteries, $\Phi$ was chosen so that $\Phi'(t)(p-p^{k+1})=p$, and hence \Cref{eq:sl-potential-cancellation} holds. We now seek the expression that replaces $p-p^{k+1}$ for a more general $g$.

Fix $t\geq0$, and set
\[
S = S_t =\bigl\{a:\rho(a)>t\bigr\},
\qquad
p = p_t =D(S).
\]
Then, by definition, $\Pr\bigl[b\in S\bigr]=p$.

Now consider $\low{b}$. In Step 1, we arranged the candidates on $[0,1]$ in decreasing order of $\rho$. Hence, every interval belonging to $S$ comes before every interval outside $S$. Therefore, $\low{b}\in S$ exactly when the value of $u$ used to choose $\low{b}$ lies in $[0,p)$. Since $u$ has density $g(u)/\tau$,
\[
\Pr\bigl[\low{b}\in S\bigr]=\frac{1}{\tau}\int_0^p g(r)\dif r.
\]
Define, for brevity,
\[
G(s)=\int_0^s g(r)\dif r.
\]
Then $\Pr\bigl[\low{b}\in S\bigr]=G(p)/\tau$, and the quantity replacing $p-p^{k+1}$ is
\[
\Pr\bigl[b\in S\bigr]-\Pr\bigl[\low{b}\in S\bigr]
=p-\frac{G(p)}{\tau}.
\]

As \Cref{eq:key-lem-sl-expanded} also concerns the comparison between $\hat{b}_v$ and $\low{b}$, we need to compute $\Pr\bigl[\hat{b}_v\in S\bigr]$. To this end, let $I_{v,a}$ denote the interval occupied by candidate $a$ in voter $v$'s rank line (\textit{not} arranged in decreasing order of $\rho$), and define
\[
K_v(S)=\sum_{a\in S}\int_{I_{v,a}}g(r)\dif r.
\]
Since $\hat{b}_v$ is chosen by drawing $r$ with density $g(r)/\tau$, we have
\[
\Pr\bigl[\hat{b}_v\in S\bigr]=\frac{K_v(S)}{\tau}.
\]

\paragraph{Step 3. Repeating the calculation that chose $\Phi$.}

By \Cref{prop:gsl-stationary}, so long as $\Phi$ does not depend on $v$,
\[
\Ev_{\hat{b}_v}\Bigl[\Phi\bigl(\rho(\hat{b}_v)\bigr)\Bigr]
-\Ev_b\Bigl[\Phi\bigl(\rho(b)\bigr)\Bigr]
\]
averages to zero over all voters. Hence, we may add this difference, as we did on the LHS of \Cref{eq:key-lem-sl}. It remains to choose $\Phi'(t)$ so that the comparison between $b$ and $\low{b}$ satisfies \Cref{eq:sl-potential-cancellation} whenever $0<p<1$. We will handle the edge cases $p=0$ and $p=1$ separately in Step 5.

Fix $t\geq0$, and keep writing $S=\bigl\{a:\rho(a)>t\bigr\}$ and $p=D(S)$ as before. Using Step 2, the same calculation requires
\[
\Phi'(t)\Bigl(p-\frac{G(p)}{\tau}\Bigr)=p.
\]
It follows that
\begin{equation}\label{eq:gsl-alpha}
	\Phi'(t)=\frac{\tau}{\alpha(p)},
	\qquad \text{where}\qquad
	\alpha(p)=\tau-\frac{G(p)}{p}.
\end{equation}
In particular, $\Phi'(t)$ is well defined: because $g$ is nondecreasing and satisfies $g(0)=0$ and $g(1)=1$, its average on $[0,p]$ is strictly smaller than its average on $[0,1]$. Thus, $\alpha(p)>0$ for $0<p<1$.

In the special case $g(r)=r^k$, we have $\tau=1/(k+1)$ and $G(p)=p^{k+1}/(k+1)$, so $\alpha(p)=\tau(1-p^k)$, and our choice reduces to $\Phi'(t)=1/(1-p^k)$.

We now reconsider \Cref{eq:key-lem-sl}, where we compute the expectations using the layer-cake formula $\Ev X=\int_0^\infty\Pr\bigl[X>t\bigr]\dif t$. For a fixed $t$, define
\[
q_v(S)
=\Pr_{b\sim D}\bigl[\lowrho_v(b)>t\bigr].
\]
In \Cref{eq:key-lem-sl}, we added $\Ev_{\hat{b}_v}\Bigl[\Phi\bigl(\rho(\hat{b}_v)\bigr)\Bigr]-\Ev_b\Bigl[\Phi\bigl(\rho(b)\bigr)\Bigr]$ to $\Ev_b\bigl[\lowrho_v(b)\bigr]$. Here, for each $t$, the analogous move is to replace $q_v(S)$ by
\[
q_v(S)+\Phi'(t)\Bigl(\Pr\bigl[\hat{b}_v\in S\bigr]-\Pr\bigl[b\in S\bigr]\Bigr).
\]
Using Step 2 and our choice of $\Phi'(t)$, this becomes
\begin{equation}\label{eq:gsl-gamma}
	\gamma_v(S)=q_v(S)+\frac{K_v(S)-\tau p}{\alpha(p)}.
\end{equation}
By \Cref{prop:gsl-stationary}, $\Ev_v\bigl[K_v(S)\bigr]=\tau p$, so
\[
\Ev_v\bigl[\gamma_v(S)\bigr]=\Ev_v\bigl[q_v(S)\bigr].
\]
Thus, when $0<p<1$, replacing $q_v(S)$ by $\gamma_v(S)$ does not change its average over voters. This generalizes the comparison between $b$ and $\low{b}$. It remains to generalize the comparison between $\hat{b}_v$ and $\low{b}$, which we study next.

\paragraph{Step 4. Comparing $\hat{b}_v$ with $\low{b}$.}

Our goal now is to bound $\gamma_v(S)$, because integrating it over $t$ will help us establish \Cref{eq:precise-biased}.

In the proof for the stable $k$-lotteries, we chose $\hat{b}_v$ and $\low{b}$ from the same $k+1$ draws, and this conveniently guaranteed that voter $v$ always ranks $\hat{b}_v$ (weakly) above $\low{b}$. For a general $g$, we express the analogous comparison on voter $v$'s rank line.

Fix $t\geq0$, and continue to write $S=\bigl\{a:\rho(a)>t\bigr\}$, $p=D(S)$, and $q=q_v(S)$. Assume $0<p<1$. By definition, the event $\lowrho_v(b)>t$ holds if and only if every candidate ranked (weakly) below $b$ by $v$ belongs to $S$. Therefore,
\[
q=\min_{c\notin S}\rank_v(c;D).
\]
Thus, the interval $[0,q)$ contains only candidates in $S$, and intervals belonging to candidates in $S$ may also appear after $q$. Hence, it would be useful to know how far these intervals extend. To this end, we define
\[
z=z_v(S)=\max_{\substack{a\in S\\D(a)>0}}\bigl(\rank_v(a;D)+D(a)\bigr),
\]
the rightmost endpoint of an interval belonging to a candidate in $S$. It follows that
\[
0\leq q\leq p\leq z\leq1.
\]
Indeed, everything in $[0,q)$ belongs to $S$, so $p=D(S)\geq q$. Moreover, nothing past $z$ belongs to $S$, so $p\leq z$. We make two quick remarks:

\begin{itemize}
	\item If $q=z$, then $q=p=z$, and the intervals belonging to $S$ form $[0,p)$. Consequently, $K_v(S)=G(p)$ and $\gamma_v(S)=0$.
	\item If $q<z$, then every $r\in(q,z]$ gives the same kind of comparison that the common $k+1$ draws gave us before. Some candidate $c\notin S$ has an interval that begins before $r$, while some candidate $a\in S$ has an interval that ends at or after $r$. In particular, voter $v$ ranks $a$ above $c$. We will return to this observation in Step 5.
\end{itemize}

\begin{figure}[htbp]
    \centering
    \usetikzlibrary{arrows.meta,decorations.pathreplacing,patterns.meta}
\begingroup
\definecolor{qzInk}{HTML}{222B32}
\definecolor{qzBlue}{RGB}{0,133,190}
\definecolor{qzOrange}{RGB}{228,106,24}
\definecolor{qzGreen}{RGB}{0,166,122}
\definecolor{qzGray}{HTML}{66717A}
\definecolor{qzTan}{RGB}{242,231,216}
\colorlet{qzOrangeFill}{qzOrange!22!white}
\colorlet{qzOrangeInk}{qzOrange!80!black}
\colorlet{qzGreenInk}{qzGreen!75!black}
\tikzset{
  qz axis/.style={draw=qzInk,line width=1.2pt,
    -{Latex[length=1.8mm,width=1.3mm]}},
  qz guide/.style={draw=qzGray!70,line width=1.1pt,
    dash pattern=on 7pt off 4pt},
  qz curve/.style={draw=qzBlue,line width=1.2pt,line join=round},
  qz label/.style={fill=white,inner sep=2pt},
  qz brace/.style={decorate,decoration={brace,amplitude=4pt},
    draw=qzInk!80,line width=.6pt},
  qz mass/.style={fill=qzOrangeFill,draw=qzOrangeInk,line width=.7pt},
  qz hatch/.style={pattern={Lines[angle=45,distance=4pt,line width=.4pt]},
    pattern color=qzOrange!85!black},
  qz tick/.style={draw=qzInk,line width=1pt},
}
\noindent\resizebox{.9\linewidth}{!}{%
\begin{tikzpicture}[x=18cm,y=3.975cm,font=\Large,text=qzInk]
  \foreach \l/\r/\h in {
    0/.12/.86,.12/.23/1.05,.48/.61/1.30,.78/.90/.96}{
    \path[fill=qzOrangeFill] (\l,0) rectangle (\r,\h);
    \path[qz hatch] (\l,0) rectangle (\r,\h);
  }
  \foreach \l/\r/\h in {
    .23/.34/.52,.34/.48/.28,.61/.68/.22,.68/.78/.43,.90/1/.40}{
    \path[fill=qzTan!65] (\l,0) rectangle (\r,\h);
  }
  \foreach \x/\h in {
    .12/.86,.23/.52,.34/.28,.48/.28,.61/.22,.68/.22,.78/.43,.90/.40}{
    \draw[qzGray!50,line width=.45pt] (\x,0) -- (\x,\h);
  }

  \draw[qz curve] (0,.86) -- (.12,.86) -- (.12,1.05)
    -- (.23,1.05) -- (.23,.52) -- (.34,.52) -- (.34,.28)
    -- (.48,.28) -- (.48,1.30) -- (.61,1.30) -- (.61,.22)
    -- (.68,.22) -- (.68,.43) -- (.78,.43) -- (.78,.96)
    -- (.90,.96) -- (.90,.40) -- (1,.40);
  \draw[qz guide,draw=qzInk!75] (-.012,.67) -- (1.025,.67);
  \node[anchor=west] at (1.032,.67) {$t$};
  \node[anchor=south east,text=qzInk,font=\Large] at (-.007,1.37) {$\rho$};

  \path[qz mass] (.61,1.40) rectangle (.64,1.46);
  \path[qz hatch] (.61,1.40) rectangle (.64,1.46);
  \node[anchor=west] at (.65,1.43) {$S_t=\bigl\{b:\rho(b)>t\bigr\}$};

  \draw[qz axis] (0,0) -- (1.045,0);
  \draw[qz axis] (0,0) -- (0,1.39);
  \foreach \x/\lab in {0/0,.23/q,.72/r,.90/z,1/1}{
    \draw[qz tick] (\x,.022) -- (\x,-.022);
    \node[anchor=north,inner sep=2pt,font=\Large] at (\x,-.025) {$\lab$};
  }
  \node[anchor=south west,text=qzGray] at (0,1.60) {lower ranked by $v$};
  \node[anchor=south east,text=qzGray] at (1.04,1.60) {higher ranked by $v$};

  \node[qz label,inner sep=1pt] at (.645,.105) {$c$};
  \node[qz label,inner sep=1.5pt] at (.84,.43) {$a$};
  \draw[qz guide,draw=qzGreen!85!black] (.72,.06) -- (.72,1.16);

  \draw[qz brace,decoration={brace,mirror,amplitude=4pt}]
    (0,-.25) -- (.23,-.25)
    node[midway,below=6pt] {all in $S_t$};

  \draw[qzGreen,line width=1.5pt] (.23,-.15) -- (.90,-.15);
  \filldraw[draw=qzGreen,line width=1pt,fill=white]
    (.23,-.15) circle[radius=2.2pt];
  \fill[qzGreen] (.90,-.15) circle[radius=2.2pt];
  \node[anchor=north,text=qzGreenInk,inner sep=1pt] at (.565,-.175) {$r\in(q,z]$};

  \node[anchor=north,align=center] at (.30,-.43)
    {$\displaystyle q=\min_{b\notin S_t}\operatorname{rank}_v(b;D)$};
  \node[anchor=north,align=center] at (.76,-.43)
    {$\displaystyle z=\max_{\substack{b\in S_t\\D(b)>0}}
       \bigl(\operatorname{rank}_v(b;D)+D(b)\bigr)$};
  \node[anchor=north,align=center] at (.30,-.84) {$c\notin S_t$: its interval begins before $r$};
  \node[anchor=north,align=center] at (.76,-.84) {$a\in S_t$: its interval ends at or after $r$};
\end{tikzpicture}
}
\endgroup
    \caption{The set $S_t$ and the $q,z$ it induces. The blue staircase gives $\rho(b)$ in voter $v$'s order, and the orange boxes form $S_t=\bigl\{b:\rho(b)>t\bigr\}$. The first interval outside $S_t$ begins at $q$, while the last interval of positive length in $S_t$ ends at $z$. Thus, for every $r\in(q,z]$, voter $v$ ranks some $a\in S_t$ above some $c\notin S_t$.}%
    \label{fig:q-z}
\end{figure}

Our goal now is to bound $\gamma_v(S)$ using only $p,q,z$. Observe, from the definition of $\gamma_v(S)$, that it suffices to find an upper bound on $K_v(S)$.

\begin{lemma}\label{lem:gsl-fixed-threshold}
	For $0<p<1$,
	\[
	K_v(S)\leq G(q)+G(z)-G(z-p+q),
	\]
	and hence $\gamma_v(S) \leq \Gamma_z(p,q)$, where 
	\begin{equation}\label{eq:gsl-Gamma}
		\Gamma_z(p,q)
		=q+\frac{G(q)+G(z)-G(z-p+q)-\tau p}{\alpha(p)}.
	\end{equation}
\end{lemma}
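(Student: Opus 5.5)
Since $g$ is nondecreasing, $K_v(S)$ is a sum of $g$-weighted lengths: the set $U=\bigcup_{a\in S} I_{v,a}\subseteq[0,1]$ has Lebesgue measure $p$, and $K_v(S)=\int_U g(r)\dif r$. We will bound this integral using only the structural constraints from Step 4.

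The first step is to record what we know about $U$. By the definition of $q$, we have $[0,q)\subseteq U$. By the definition of $z$, $U\subseteq[0,z]$ up to a null set (intervals of zero length contribute nothing). So $U = [0,q)\cup U'$, where $U'\subseteq[q,z]$ has measure $p-q$.

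The second step is a rearrangement (bathtub) argument. Among all measurable subsets of $[q,z]$ with measure $p-q$, the integral of a nondecreasing $g$ is maximized by the rightmost interval $[z-(p-q),z]$. To see this, compare $U'$ with $R=[z-p+q,z]$. The sets $U'\setminus R$ and $R\setminus U'$ have equal measure. Every point of $U'\setminus R$ lies below $z-p+q$, and every point of $R\setminus U'$ lies at or above it. Monotonicity of $g$ then gives $\int_{U'}g\le\int_R g$. Hence
\[
K_v(S)\le G(q)+\int_{z-p+q}^{z}g(r)\dif r=G(q)+G(z)-G(z-p+q).
\]

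The final step is the consequence for $\gamma_v(S)$. Since $\alpha(p)>0$ for $0<p<1$ (shown after \Cref{eq:gsl-alpha}), the expression for $\gamma_v(S)$ in \Cref{eq:gsl-gamma} is increasing in $K_v(S)$. We also have $q_v(S)=q$. Substituting the bound on $K_v(S)$ yields $\gamma_v(S)\le\Gamma_z(p,q)$.

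The only point needing care is the identification of $U$, namely that $[0,q)\subseteq U\subseteq[0,z]$ under the continuum-of-copies convention. This is immediate from the definitions of $q$ and $z$. The exchange inequality itself is routine.
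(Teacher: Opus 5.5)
Your proposal is correct and follows the paper's argument: the part of $S$ filling $[0,q)$ contributes $G(q)$, the remaining mass $p-q$ lies in $[q,z]$ and is pushed to $[z-p+q,z]$ using monotonicity of $g$, and $\alpha(p)>0$ then transfers the bound to $\gamma_v(S)$. Your explicit exchange between $U'\setminus R$ and $R\setminus U'$ simply makes rigorous the rearrangement step that the paper justifies by picture.
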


\begin{proof}
    The statement is immediate by picture (see \Cref{fig:g-ineq}).

    \begin{figure}[htbp]
        \centering
        \usetikzlibrary{arrows.meta,decorations.pathreplacing}
\begingroup
\definecolor{giInk}{HTML}{222B32}
\definecolor{giBlue}{RGB}{0,133,190}
\definecolor{giOrange}{RGB}{228,106,24}
\definecolor{giGray}{HTML}{66717A}
\colorlet{giOrangeFill}{giOrange!22!white}
\colorlet{giOrangeInk}{giOrange!80!black}
\tikzset{
  gi axis/.style={draw=giInk,line width=1.15pt,
    -{Latex[length=1.8mm,width=1.3mm]}},
  gi tick/.style={draw=giInk,line width=.9pt},
  gi guide/.style={draw=giGray!70,line width=.6pt,
    dash pattern=on 4.5pt off 3pt},
  gi curve/.style={draw=giBlue,line width=1.2pt,line join=round},
  gi mass/.style={fill=giOrangeFill,draw=giOrangeInk,line width=.7pt},
  gi brace/.style={decorate,decoration={brace,amplitude=4pt},
    draw=giInk!80,line width=.6pt},
}
\noindent\resizebox{.9\linewidth}{!}{%
\begin{tikzpicture}[
  x=18cm,
  y=1.5cm,
  font=\Large,
  text=giInk,
  line cap=round,
  line join=round
]
  \def\gicurveheight#1{.30+1.65*pow(#1,2.6)}

  \begin{scope}[yshift=4cm]
    \path[fill=giGray!7,draw=giGray!55,line width=.55pt]
      (0,-.16) rectangle (1,.04);
    \foreach \a/\b in {0/.22,.40/.51,.67/.86}{
      \path[fill=giOrangeFill]
        (\a,.30) -- plot[domain=\a:\b,samples=35]
          (\x,{\gicurveheight{\x}}) -- (\b,.30) -- cycle;
      \path[gi mass] (\a,-.16) rectangle (\b,.04);
      \draw[gi guide] (\a,.30) -- (\a,{\gicurveheight{\a}});
      \draw[gi guide] (\b,.30) -- (\b,{\gicurveheight{\b}});
    }

    \draw[gi axis] (0,.30) -- (1.045,.30) node[right] {$r$};
    \draw[gi curve] plot[domain=0:1,samples=100]
      (\x,{\gicurveheight{\x}});
    \node[text=giBlue,anchor=south west] at (.88,1.65) {$g(r)$};

    \foreach \x/\lab in {0/0,.22/q,.86/z,1/1}{
      \draw[gi tick] (\x,-.16) -- (\x,-.24);
      \node[anchor=north] at (\x,-.29) {$\lab$};
    }
  \end{scope}

  \path[fill=giGray!7,draw=giGray!55,line width=.55pt]
    (0,-.16) rectangle (1,.04);

  \path[fill=giOrangeFill]
    (0,.30) -- plot[domain=0:.22,samples=35]
      (\x,{\gicurveheight{\x}}) -- (.22,.30) -- cycle;
  \path[gi mass] (0,-.16) rectangle (.22,.04);
  \draw[gi guide] (0,.30) -- (0,{\gicurveheight{0}});
  \draw[gi guide] (.22,.30) -- (.22,{\gicurveheight{.22}});

  \path[fill=giOrangeFill]
    (.56,.30) -- plot[domain=.56:.86,samples=45]
      (\x,{\gicurveheight{\x}}) -- (.86,.30) -- cycle;
  \path[gi mass] (.56,-.16) rectangle (.67,.04);
  \path[gi mass] (.67,-.16) rectangle (.86,.04);
  \draw[gi guide] (.56,.30) -- (.56,{\gicurveheight{.56}});
  \draw[gi guide] (.67,.30) -- (.67,{\gicurveheight{.67}});
  \draw[gi guide] (.86,.30) -- (.86,{\gicurveheight{.86}});

  \draw[gi axis] (0,.30) -- (1.045,.30) node[right] {$r$};
  \draw[gi curve] plot[domain=0:1,samples=100]
    (\x,{\gicurveheight{\x}});
  \node[text=giBlue,anchor=south west] at (.88,1.65) {$g(r)$};

  \foreach \x/\lab in {0/0,.22/q,.56/{z-p+q},.86/z,1/1}{
    \draw[gi tick] (\x,-.16) -- (\x,-.24);
    \node[anchor=north] at (\x,-.29) {$\lab$};
  }

  \draw[gi brace] (.22,-.72) -- (0,-.72)
    node[midway,below=6pt] {length $q$};
  \draw[gi brace] (.86,-.72) -- (.56,-.72)
    node[midway,below=6pt] {length $p-q$};

  \draw[draw=giOrangeInk,line width=1.25pt,
    -{Latex[length=2.5mm,width=1.8mm]}]
    (.455,2.35) to[out=-52,in=108,looseness=.9] (.615,1.05);
\end{tikzpicture}
}
\endgroup
        \caption{The intervals in $S$ fill $[0,q)$, while their remaining length $p-q$ lies between $q$ and $z$. Moving this remaining length to $[z-p+q,z]$ cannot decrease its integral because $g$ is nondecreasing. Therefore, $K_v(S)\leq G(q)+G(z)-G(z-p+q)$.}
        \label{fig:g-ineq}
    \end{figure}
    
    Since the intervals belonging to $S$ fill $[0,q)$, this part contributes $G(q)$ to $K_v(S)$. The remaining intervals belonging to $S$ have total length $p-q$ and lie between $q$ and $z$. Because $g$ is nondecreasing, they contribute most to $K_v(S)$ when they lie on the rightmost part of this interval, namely $\bigl[z-(p-q),z\bigr]$. Therefore,
	\[
	K_v(S)\leq G(q)+G(z)-G(z-p+q).
	\]
	Using this bound in the definition of $\gamma_v(S)$ in \Cref{eq:gsl-gamma}, we get \Cref{eq:gsl-Gamma}.
\end{proof}

\paragraph{Step 5. Putting the bounds together.}
Now that we have bounded $\gamma_v(S)$ by $\Gamma_z(p,q)$, our goal is to integrate it over $t$ and obtain the inequality required in \Cref{eq:precise-biased}.

In the stable $k$-lottery proof, \Cref{eq:sl-final-integral-bound} produced a term of the form $w_k(r)=\bigl((k+1)r^{k-1}-1\bigr)_+$,
whose integral over the ranks gives the coefficient multiplying $\max_b\bigl(\rho(b)-\lowrho_v(b)\bigr)$. Here, we seek the generalization of this term. Since $t$ will now vary, we write
\[
S_t=\bigl\{a:\rho(a)>t\bigr\},
\qquad
p_t=D(S_t),
\qquad
q_t=q_v(S_t),
\qquad
z_t=
\begin{cases}
	z_v(S_t),&p_t>0,\\
	0,&p_t=0.
\end{cases}
\]
Before choosing $w$, we first determine how the intervals $(q_t,z_t]$ overlap as $t$ varies.

\begin{lemma}\label{lem:gsl-overlap}
	For every voter $v$ and every $r\in(0,1]$,
	\begin{equation}\label{eq:gsl-overlap}
		\mathrm{length}\{t\geq0:q_t<r\leq z_t\}
		\leq
		\max_b\bigl(\rho(b)-\lowrho_v(b)\bigr).
	\end{equation}
\end{lemma}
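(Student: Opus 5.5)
Fix a voter $v$ and a rank $r\in(0,1]$, and let $T_r=\{t\geq 0: q_t<r\leq z_t\}$. The plan is to show that $T_r$ is contained in an interval of the form $[\lowrho_v(c),\rho(a))$ for a pair of candidates $a,c$ with $a\succeq_v c$. Then its length is at most $\rho(a)-\lowrho_v(c)$. Since $a\succeq_v c$ implies $\lowrho_v(a)\le\lowrho_v(c)$, this is at most $\rho(a)-\lowrho_v(a)\le\max_b\bigl(\rho(b)-\lowrho_v(b)\bigr)$.

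First I will unpack the two conditions. We have $t\in T_r$ only if $p_t>0$ (otherwise $z_t=0<r$).

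The condition $r\le z_t$ says that some candidate $a$ with $D(a)>0$, $\rho(a)>t$, has interval ending at or after $r$. Let
\[
A_r=\{a: D(a)>0,\ \rank_v(a;D)+D(a)\geq r\}
\]
be the set of positive-mass candidates whose interval reaches $r$; these are exactly the candidates ranked weakly above the one occupying rank $r$. Then $r\le z_t$ iff $t<\max_{a\in A_r}\rho(a)$.

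The condition $q_t<r$ says that some $c\notin S_t$, i.e.\ with $\rho(c)\le t$, has $\rank_v(c;D)<r$. Let
\[
B_r=\{c:\rank_v(c;D)<r\}
\]
be the set of candidates whose interval begins before $r$; this includes zero-mass candidates ranked below the occupant of $r$, which matches the convention that $q=\min_{c\notin S}\rank_v(c;D)$ ranges over all candidates. Then $q_t<r$ iff $t\geq\min_{c\in B_r}\rho(c)$.

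Together these give
\[
T_r=\Bigl[\min_{c\in B_r}\rho(c),\ \max_{a\in A_r}\rho(a)\Bigr),
\]
possibly empty.

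Now choose $a^\circ\in A_r$ attaining the maximum of $\rho$ over $A_r$. The key step is to check that every $c\in B_r$ is ranked by $v$ weakly below $a^\circ$. The interval of $c$ starts strictly before $r$, while the interval of $a^\circ$ (of positive length) ends at or after $r$. So either $c=a^\circ$ or $c$ precedes $a^\circ$ on the rank line, i.e.\ $a^\circ\succ_v c$. Hence
\[
\min_{c\in B_r}\rho(c)\ \geq\ \min_{c\preceq_v a^\circ}\rho(c)=\lowrho_v(a^\circ),
\]
and the length of $T_r$ is at most $\rho(a^\circ)-\lowrho_v(a^\circ)$, as desired.

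The main obstacle is getting the boundary conventions right: half-open intervals, zero-mass candidates, and the continuum-of-copies tie-breaking. These determine whether $r$ sits at an interval's endpoint and whether the case $c=a^\circ$ arises, e.g.\ when $r$ lies inside $a^\circ$'s own interval. I would verify carefully that $\rank_v(c;D)<r\le\rank_v(a^\circ;D)+D(a^\circ)$ forces $c\preceq_v a^\circ$ in all cases. I would also check that the $p_t=0$ and $p_t=1$ edge cases ($z_t=0$, respectively $q_t=1$) are consistent with the characterization above. For $p_t=0$, $A_r$ contains no $a$ with $\rho(a)>t$, so the characterization already excludes such $t$. For $p_t=1$, $B_r$ contains no $c$ with $\rho(c)\le t$, so again nothing is added.
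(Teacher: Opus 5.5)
Your proof is correct and is essentially the paper's argument: your $\min_{c\in B_r}\rho(c)$ and $\max_{a\in A_r}\rho(a)$ are exactly the paper's $L_v(r)$ and $U_v(r)$, and your key step (a candidate whose interval starts before $r$ is ranked weakly below any positive-mass candidate whose interval reaches $r$) is the paper's. One small correction to your closing aside: when $p_t=1$ it need not hold that $q_t=1$, because a zero-mass candidate outside $S_t$ (e.g.\ $a^\star$) can have rank below $r$, but this is harmless since your characterization $T_r=[\min_{c\in B_r}\rho(c),\max_{a\in A_r}\rho(a))$ already holds for every $t$, including these thresholds.
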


\begin{proof}
	Fix $r\in(0,1]$. If $q_t<r\leq z_t$, then some candidate whose interval begins before $r$ has $\rho$-value at most $t$, while some candidate whose interval reaches $r$ has $\rho$-value greater than $t$. To find the range of such $t$, define
	\[
	L_v(r)=\min\bigl\{\rho(c):\rank_v(c;D)<r\bigr\}
	\]
	and
	\[
	U_v(r)=\max\bigl\{\rho(a):D(a)>0,\ \rank_v(a;D)+D(a)\geq r\bigr\}.
	\]
	Both sets are nonempty: some interval begins before $r>0$, and some positive-length interval ends at or after $r\leq1$.
	Then $q_t<r\leq z_t$ exactly when $L_v(r)\leq t<U_v(r)$.

	If $U_v(r)>L_v(r)$, let candidates $a$ and $c$ attain $U_v(r)$ and $L_v(r)$, respectively. The interval of $c$ begins before $r$, while the interval of $a$ ends at or after $r$. Thus, $c$ cannot be above $a$, so voter $v$ ranks $a$ above $c$. Therefore,
	\[
	U_v(r)-L_v(r)
	=\rho(a)-\rho(c)
	\leq\rho(a)-\lowrho_v(a)
	\leq\max_b\bigl(\rho(b)-\lowrho_v(b)\bigr).
	\]
	If $U_v(r)\leq L_v(r)$, no such $t$ exists. This proves \Cref{eq:gsl-overlap}.
\end{proof}

\begin{figure}[htbp]
    \centering
    \usetikzlibrary{arrows.meta}
\begingroup
\definecolor{tnInk}{HTML}{222B32}
\definecolor{tnOrange}{RGB}{228,106,24}
\definecolor{tnGreen}{RGB}{0,166,122}
\definecolor{tnGray}{HTML}{66717A}
\colorlet{tnOrangeFill}{tnOrange!22!white}
\colorlet{tnOrangeInk}{tnOrange!80!black}
\colorlet{tnGreenInk}{tnGreen!75!black}
\tikzset{
  tn axis/.style={draw=tnInk,line width=.65pt,
    -{Latex[length=1.8mm,width=1.3mm]}},
  tn guide/.style={draw=tnGray!70,line width=.5pt,densely dashed},
  tn tick/.style={draw=tnInk,line width=.65pt},
}
\begin{tikzpicture}[x=13cm,y=.86cm,font=\small,text=tnInk]
  \def\tnFixedRank{.52}
  \def\tnLowerThreshold{1.9}
  \def\tnUpperThreshold{4.9}

  \fill[tnOrangeFill,opacity=.6]
    (.508,\tnLowerThreshold) rectangle (.532,\tnUpperThreshold);
  \draw[tn guide] (0,\tnLowerThreshold) --
    (\tnFixedRank,\tnLowerThreshold);
  \draw[tn guide] (0,\tnUpperThreshold) --
    (\tnFixedRank,\tnUpperThreshold);

  \draw[tn axis] (0,0) -- (1.025,0);
  \draw[tn axis] (0,0) -- (0,5.95)
    node[above,align=center] {threshold $t$};
  \node[below=5pt] at (0,0) {$0$};
  \draw[tn tick] (1,-.06) -- (1,.06);
  \node[below=5pt] at (1,0) {$1$};
  \node[below=5pt] at (.84,0) {rank};
  \draw[draw=tnOrange,densely dashed,line width=.8pt]
    (\tnFixedRank,0) -- (\tnFixedRank,5.72);
  \node[below=5pt,text=tnOrangeInk] at (\tnFixedRank,0) {$r$};

  \foreach \tnHeight/\tnLeft/\tnRight in {
    .7/.68/.97,
    1.3/.59/.94,
    1.9/.45/.90,
    2.5/.37/.84,
    3.1/.30/.77,
    3.7/.22/.66,
    4.3/.13/.59,
    4.9/.07/.44,
    5.5/.03/.35}{
    \draw[draw=tnGreen,line width=1.1pt]
      (\tnLeft,\tnHeight) -- (\tnRight,\tnHeight);
    \draw[draw=tnGreen,line width=.9pt,fill=white]
      (\tnLeft,\tnHeight) circle[radius=1.7pt];
    \fill[tnGreen] (\tnRight,\tnHeight) circle[radius=1.7pt];
  }

  \draw[draw=tnOrange,line width=2pt]
    (\tnFixedRank,\tnLowerThreshold) --
    (\tnFixedRank,\tnUpperThreshold);
  \fill[tnOrange]
    (\tnFixedRank,\tnLowerThreshold) circle[radius=2.5pt];
  \draw[draw=tnOrange,line width=1.1pt,fill=white]
    (\tnFixedRank,\tnUpperThreshold) circle[radius=2.5pt];
  \node[left=5pt,text=tnOrangeInk] at (0,\tnLowerThreshold)
    {$L_v(r)$};
  \node[left=5pt,text=tnOrangeInk] at (0,\tnUpperThreshold)
    {$U_v(r)$};

  \draw[draw=tnGreen,line width=1.1pt] (.69,5.55) -- (.75,5.55);
  \draw[draw=tnGreen,line width=.9pt,fill=white]
    (.69,5.55) circle[radius=1.5pt];
  \fill[tnGreen] (.75,5.55) circle[radius=1.5pt];
  \node[right=5pt,text=tnGreenInk] at (.75,5.55)
    {$(q_t,z_t]$ for each $t$};

  \node at (.5,-1.10)
    {$\{t\geq 0:q_t<r\leq z_t\}
      =\color{tnOrangeInk}\bigl[L_v(r),U_v(r)\bigr)$};
  \node at (.5,-1.80)
    {$\mathrm{length}=\bigl(U_v(r)-L_v(r)\bigr)_+
      \leq\displaystyle\max_b
      \bigl(\rho(b)-\underline{\rho}_v(b)\bigr)$};
\end{tikzpicture}
\endgroup
    \caption{Each green segment is $(q_t,z_t]$ at its threshold $t$. For a rank $r$, the thresholds (in terms of $t$) whose segments contain $r$ form the orange interval $\bigl[L_v(r),U_v(r)\bigr)$. Its length is $\bigl(U_v(r)-L_v(r)\bigr)_+$, which is at most $\max_b\bigl(\rho(b)-\lowrho_v(b)\bigr)$.}
    \label{fig:threshold-overlap}
\end{figure}

We can now explain exactly what our analogous $w$-term should do. Suppose a nonnegative function $w$, independent of $t$ and $v$, satisfies
\[
\gamma_v(S_t)\leq\int_{q_t}^{z_t}w(r)\dif r
\]
for every $t$. Applying Tonelli's theorem and \Cref{eq:gsl-overlap} gives
\begin{align*}
	\int_0^\infty\gamma_v(S_t)\dif t
	&\leq\int_0^\infty\int_{q_t}^{z_t}w(r)\dif r\dif t\\
	&=\int_0^1w(r)\,\mathrm{length}\{t\geq0:q_t<r\leq z_t\}\dif r\\
	&\leq\biggl(\int_0^1w(r)\dif r\biggr)
	\max_b\bigl(\rho(b)-\lowrho_v(b)\bigr),
\end{align*}
which gives the general-$g$ analogue of the bound in \Cref{eq:key-lem-sl}.

For $0<p_t<1$, \Cref{lem:gsl-fixed-threshold} gives $\gamma_v(S_t)\leq\Gamma_{z_t}(p_t,q_t)$, so it is enough to require
\begin{equation*}
	\Gamma_z(p,q)\leq\int_q^z w(r)\dif r
\end{equation*}
whenever $0\leq q\leq p\leq z\leq1$ and $0<p<1$.
More compactly, writing
\begin{equation*}
	W(s)=\int_0^s w(r)\dif r,
\end{equation*}
this condition becomes
\begin{equation}\label{eq:gsl-W-interior}
	\Gamma_z(p,q)\leq W(z)-W(q).
\end{equation}

We now handle the two boundary cases. If $p_t=0$, then
$q_t=z_t=0$, so the desired bound holds if we set $\gamma_v(S_t)=0$. Now suppose $p_t=1$.
\footnote{This is the same support issue we handled at the end of \Cref{sec:stable-k}. If $\beta = \min_{b\in \mathrm{supp}(D)}\rho(b) > 0$, then $p_t = 1$ throughout $0 \leq t < \beta$. There, we added a small mass at $a^\star$ and took a limit. Here, we handle these thresholds directly. In the following argument, \Cref{thm:gsl-key-property} will imply $\Ev_v \bigl[\gamma_v(S_t) - q_t \bigr]\ge 0$, just as \Cref{thm:sl-key-property} ensured that the correction at the end of \Cref{sec:stable-k} was nonnegative.}
Then $z_t=1$, while $q_t$ may be any value in
$[0,1]$. For a fixed $0\leq q<1$, both the numerator and the denominator
in the fraction defining $\Gamma_1(p,q)$ tend to zero as $p\uparrow1$.
Since $G(1)=\tau$, $G'(s)=g(s)$, and $g(1)=1$, L'H\^opital's rule gives
\begin{equation*}
	\lim_{p\uparrow1}\Gamma_1(p,q)
	=
	q+\frac{\tau-g(q)}{1-\tau}.
\end{equation*}
For $q<1$, we use this limit to define $\gamma_v(S_t)$ when $p_t=1$.
We use the same formula when $q=1$, where it gives zero. We can now collect
the three cases into one definition:
\begin{equation*}
	\gamma_v(S_t)
	=
	\begin{cases}
		0,
		& p_t=0,\\[2mm]
		\displaystyle
		q_t+\frac{K_v(S_t)-\tau p_t}{\alpha(p_t)},
		& 0<p_t<1,\\[4mm]
		\displaystyle
		q_t+\frac{\tau-g(q_t)}{1-\tau},
		& p_t=1.
	\end{cases}
\end{equation*}

To ensure that $\gamma_v(S_t)\leq W(z_t)-W(q_t)$
also holds when $p_t=1$, we require
\begin{equation}\label{eq:gsl-W-boundary}
	q+\frac{\tau-g(q)}{1-\tau}
	\leq W(1)-W(q)
\end{equation}
for every $0\leq q\leq1$.

We have now covered every possible value of $p_t$. The definition above and
\Cref{lem:gsl-fixed-threshold,eq:gsl-W-interior,eq:gsl-W-boundary} therefore give
\begin{equation*}
	\gamma_v(S_t)\leq W(z_t)-W(q_t)
\end{equation*}
for every $t$. Changing the order of integration and using
\Cref{eq:gsl-overlap}, we obtain
\begin{equation}\label{eq:gsl-one-voter-bound}
	\int_0^\infty\gamma_v(S_t)\dif t
	\leq
	\bigl(W(1)-W(0)\bigr)
	\max_b\bigl(\rho(b)-\lowrho_v(b)\bigr).
\end{equation}

Finally, to connect back to \Cref{eq:precise-biased}, we average over
the voters. Writing its LHS as an integral over $t$ gives
\[
\Ev_{\substack{v\sim V\\b\sim D}}\bigl[\lowrho_v(b)\bigr]
=\int_0^\infty\Ev_v\bigl[q_t\bigr]\dif t.
\]
Since our bound concerns $\gamma_v(S_t)$, we need $\Ev_v\bigl[q_t\bigr]\leq\Ev_v\bigl[\gamma_v(S_t)\bigr]$ for every $t$. For $0<p_t<1$, the two averages are equal by Step 3, while for $p_t=0$, both quantities are zero. The only remaining case is $p_t=1$. Note that the definition of $\gamma_v(S_t)$ need not preserve the average of $q_t$ exactly; it is enough for its average to be at least as large.

When $p_t=1$, the candidate $a^\star$ does not belong to $S_t$. By the same argument as in Step 4,
\[
q_t=\min_{c\notin S_t}\rank_v(c;D)\leq\rank_v(a^\star;D).
\]
Since $g$ is nondecreasing, \Cref{eq:gsl-key-property} implies $\Ev_v\bigl[g(q_t)\bigr]\leq\tau$. Therefore,
\[
\Ev_v\bigl[\gamma_v(S_t)-q_t\bigr]
=\frac{\tau-\Ev_v\bigl[g(q_t)\bigr]}{1-\tau}
\geq0.
\]
Hence,
\[
\Ev_v\bigl[q_t\bigr]\leq\Ev_v\bigl[\gamma_v(S_t)\bigr]
\qquad\text{for every }t\geq0.
\]

Averaging \Cref{eq:gsl-one-voter-bound} over voters and using this
inequality for each $t$, we get
\begin{align}
	\Ev_{\substack{v\sim V\\b\sim D}}\bigl[\lowrho_v(b)\bigr]
	&=\int_0^\infty\Ev_v\bigl[q_t\bigr]\dif t \notag\\
	&\leq\int_0^\infty\Ev_v\bigl[\gamma_v(S_t)\bigr]\dif t \notag\\
	&\leq
	\bigl(W(1)-W(0)\bigr)
	\Ev_v\Bigl[\max_b\bigl(\rho(b)-\lowrho_v(b)\bigr)\Bigr].
	\label{eq:gsl-distortion-bound}
\end{align}

\paragraph{Step 6. Choosing $g$ and $W$.}

At this point, it remains to choose $g$ and $W$. The function $g$ determines the lottery itself, while $W$ determines its distortion guarantee. We seek a pair with $W(1)-W(0)$ as small as possible, which we can write as the following constrained optimization:
\begin{equation}\label{eq:gsl-function-program}
	\begin{aligned}
		\underset{g,W}{\operatorname{minimize}}
		\quad & W(1)-W(0)\\[1mm]
		\operatorname{subject\ to}
		\quad & g(0)=0,\qquad g(1)=1,\qquad W(0)=0,\\
		& g\colon[0,1]\to[0,1]\text{ is continuous and nondecreasing},\\
		& W\colon[0,1]\to\mathbb{R}\text{ is absolutely continuous and nondecreasing},\\
		& \Gamma_z(p,q)\leq W(z)-W(q),
		\qquad 0\leq q\leq p\leq z\leq1,\quad 0<p<1,\\
		& q+\frac{\tau-g(q)}{1-\tau}\leq W(1)-W(q),
		\qquad 0\leq q\leq1.
	\end{aligned}
\end{equation}

The argument above gives the following proposition.

\begin{proposition}\label{prop:gsl-distortion-bound}
	For every pair $(g,W)$ satisfying the constraints in \Cref{eq:gsl-function-program}, every $g$-stable lottery has distortion at most
	\[
	1+2\bigl(W(1)-W(0)\bigr).
	\]
\end{proposition}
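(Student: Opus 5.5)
The plan is to assemble the argument of Steps 1--5 into a clean chain of inequalities and then invoke \Cref{thm:biased-iff} with $\lambda = W(1)-W(0)$. Fix an election, a $g$-stable lottery $D$, an optimal candidate $a^\star$, and any $\rho\colon C\to\mathbb{R}_{\ge0}$ with $\rho(a^\star)=0$. By \Cref{thm:biased-iff}, it suffices to verify \Cref{eq:precise-biased} with this $\lambda$. Since $\lowrho_v(b)\ge 0$, the layer-cake formula gives
\[
\Ev_{v,b}\bigl[\lowrho_v(b)\bigr]=\int_0^\infty \Ev_v[q_t]\dif t,
\qquad q_t=\Pr_{b\sim D}\bigl[\lowrho_v(b)>t\bigr].
\]

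The first step is to replace $q_t$ by $\gamma_v(S_t)$, defined casewise as in Step 5, and to show that $\Ev_v[q_t]\le \Ev_v[\gamma_v(S_t)]$ for every $t$. There are three cases.
\begin{itemize}
\item For $0<p_t<1$, equality holds. \Cref{prop:gsl-stationary} gives $\Ev_v[K_v(S_t)]=\tau p_t$, and $\alpha(p_t)>0$ because $g$ is nondecreasing with $g(0)=0$ and $g(1)=1$.
\item For $p_t=0$, both quantities vanish.
\item For $p_t=1$, we have $a^\star\notin S_t$, so $q_t\le \rank_v(a^\star;D)$. Monotonicity of $g$ and \Cref{thm:gsl-key-property} then give $\Ev_v[g(q_t)]\le\tau$, which makes the correction $(\tau-\Ev_v g(q_t))/(1-\tau)$ nonnegative.
\end{itemize}

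The second step is the per-voter pointwise bound $\gamma_v(S_t)\le W(z_t)-W(q_t)$ for every $t$. For $0<p_t<1$, I would verify $0\le q_t\le p_t\le z_t\le 1$ and then combine \Cref{lem:gsl-fixed-threshold} with the interior constraint \Cref{eq:gsl-W-interior}. For $p_t=1$, we have $z_t=1$, and the boundary constraint \Cref{eq:gsl-W-boundary} applies directly. For $p_t=0$, both sides are zero.

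The third step integrates over $t$. Writing $W(z_t)-W(q_t)=\int_{q_t}^{z_t} w(r)\dif r$, where $w=W'\ge 0$ exists a.e.\ by absolute continuity and monotonicity, Tonelli's theorem converts the $t$-integral into
\[
\int_0^1 w(r)\,\mathrm{length}\{t: q_t<r\le z_t\}\dif r .
\]
\Cref{lem:gsl-overlap} bounds the length factor by $\max_b(\rho(b)-\lowrho_v(b))$, yielding \Cref{eq:gsl-one-voter-bound}. Averaging over $v$ and chaining with the first step gives \Cref{eq:precise-biased} with $\lambda=W(1)-W(0)$, hence distortion at most $1+2(W(1)-W(0))$.

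The main obstacle I expect is measurability and the edge cases in the integration, not the inequalities themselves. For fixed $v$, the map $t\mapsto (p_t,q_t,z_t)$ is piecewise constant, with finitely many breakpoints at the values of $\rho$, so Tonelli applies without trouble. The delicate point is the $p_t=1$ case: there $\gamma_v$ is defined by the L'H\^opital limit rather than by $K_v$, so I would check carefully that the averaging argument uses only \Cref{thm:gsl-key-property} and does not rely on stationarity.
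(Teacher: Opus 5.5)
Your proposal is correct and follows the paper's own argument (Steps 1--5) essentially step for step. It uses the layer-cake rewriting and the three-case comparison $\Ev_v[q_t]\le\Ev_v[\gamma_v(S_t)]$, with stationarity for $0<p_t<1$ and with \Cref{thm:gsl-key-property} together with $q_t\le\rank_v(a^\star;D)$ for $p_t=1$; it then uses the pointwise bound $\gamma_v(S_t)\le W(z_t)-W(q_t)$ from \Cref{lem:gsl-fixed-threshold} and the two $W$-constraints, and finishes with Tonelli and \Cref{lem:gsl-overlap} before applying \Cref{thm:biased-iff} with $\lambda=W(1)-W(0)$, exactly as the paper does. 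Your remarks on piecewise constancy in $t$ and on $w=W'$ existing a.e.\ are fine and only make explicit what the paper leaves implicit.
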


Heuristic search yielded a pair $(g,W)$ satisfying the constraints in \Cref{eq:gsl-function-program} with
\[
W(1)-W(0)=0.5685645835,
\]
which translates to a distortion guarantee of $2.137129167$. \Cref{fig:plot_of_good_gW} shows a plot of this pair $(g,W)$. Both $g$ and $W$ are piecewise linear functions with $385$ grid points; their parameters are listed in \Cref{tab:certificate} in \Cref{app:certificate}.

\input{figures/plot_of_good_gW}

\paragraph{Computer verification.} We briefly describe how to verify that the pair $(g, W)$ satisfies the last two constraints in \Cref{eq:gsl-function-program}. For the constraint
\[
q+\frac{\tau-g(q)}{1-\tau}\leq W(1)-W(q),\]
the slack is affine on each grid interval, so checking all the $385$ grid points suffices. The second-to-last constraint is equivalent to
\[
H(q,p,z):=\bigl(\tau p-G(p)\bigr)\bigl(W(z)-W(q)-q\bigr)-p\bigl(G(q)+G(z)-G(z-p+q)-\tau p\bigr)\geq0.
\]
Partition the domain $0\leq q\leq p\leq z\leq1$ into tetrahedra on each of which every argument $q,p,z,z-p+q$ lies in a fixed grid interval. On each tetrahedron, $H$ is a polynomial of total degree at most three and therefore has a degree-three Bernstein representation with $20$ coefficients, determined by the polynomial and the tetrahedron's vertices. The Bernstein basis functions are nonnegative; thus, the nonnegativity of all $20$ coefficients implies $H\geq0$ throughout the tetrahedron. A verification program (only using exact rational arithmetic) checks these coefficients and recursively subdivides any tetrahedron that has not been certified, until every resulting tetrahedron has been certified.

\section{Discussion}\label{sec:discussion}

Perhaps the most striking aspect of this work is how often the same family of voting rules turns out to be useful. Stable lotteries and their relatives have been central to strong results for utilitarian distortion \citep{DBLP:journals/teco/EbadianKPS24} and metric distortion \citep{DBLP:journals/jacm/CharikarRWW24,frank2026improved,ye2026half}, as well as for committee selection problems involving proportional representation \citep{DBLP:journals/teco/ChengJMW20,jiang2020approximately}, Condorcet winning sets \citep{DBLP:conf/stoc/CharikarLRV025,DBLP:conf/soda/SongNL26}, and approximately dominating sets \citep{DBLP:conf/soda/CharikarR026}.  Notably, some of these applications go beyond randomized voting by using the probabilistic method to prove results about purely deterministic objectives. As \citet{charikar2026game} discuss, the success of these ideas across settings suggests that they capture something more general about what makes a collective choice reasonable. Our result supports this narrative. A family developed to study representation and stability also comes remarkably close to optimizing a rather different notion of social efficiency. For any new voting problem, it now seems worth asking early on what a suitable $g$-stable lottery can tell us.

\paragraph{Closing the remaining gap.} The obvious remaining question is to close the gap between $2.11264$ and $2.13713$. The current lower bound comes from such a particular and unnatural election that, if it is tight, we find it hard to believe that a natural voting rule would match it exactly. From this perspective, it is frankly surprising that the natural family of $g$-stable lotteries, developed for entirely different questions, can already get so close.

\paragraph{Other directions.} Now that we are near the end of the road towards optimal randomized metric distortion, it is worth remembering that the real treasure was the voting rules we made along the way. And since the goal is really to develop a deeper understanding of voting as a whole, the metric distortion framework still has a lot to offer. For example, breaking the barrier of $3$ opened up questions about doing so with less randomness \citep{DBLP:journals/corr/abs-2602-08871,jia2026robust} 
or less preference information \citep{DBLP:conf/sigecom/CharikarRT025,chan2026improved}. These kinds of questions still remain exciting avenues for future work, and the improvements in this paper only open up more space in these directions.

\subsection*{AI Disclosure}

Our starting point was the idea to understand the metric distortion of stable $k$-lotteries. GPT-5.6 Sol was able to prove a distortion bound that surprised us ($2.333\ldots$ for $k = 2,3$ and generally \Cref{thm:dist-sl}), including the nice ideas of doing the analysis voter-by-voter and using stationarity together with the carefully constructed potential functions. We then asked it to generalize to $g$-stable lotteries and optimize $g$ for distortion, and it was able to prove a distortion bound initially of $2.236\ldots$ (with a proof resembling that of \Cref{thm:dist-sl}), and then with more prodding, $2.137\ldots$ with a substantially more involved proof. 

Over several weeks, we worked with the model to digest and simplify the proof, and improve the final computer search, all of which occasionally gave marginal improvements in the result. We also developed more intuitive ways of explaining the ideas in the proof (at least for humans), including visualizations described in the paper.

The final text is written by the authors, with AI assistance in copyediting and producing figures. We also used GPT-6 Astra when it was available in the final stages of writing, which improved the computer search at the end of the proof to get a marginal improvement from $2.13742$ to $2.13713$.

\subsection*{Acknowledgments}
MC and PR are supported by the Stanford Input-Output Global Research Hub. JH is supported by the Simons Foundation Collaboration on the Theory of Algorithmic Fairness and the Simons Foundation Investigators Award 17351.

\clearpage
\bibliographystyle{plainnat3}
\bibliography{references}

\clearpage
\appendix
\crefalias{section}{appendix}

\section{A matching lower bound for stable \texorpdfstring{$k$}{k}-lotteries}
\label{app:lower-bound-stable-k}

We show that the distortion bound in \Cref{thm:dist-sl} is tight for every integer $k > 1$. The construction has one optimal candidate $a^\star$ and a set $U$ of symmetric candidates. A uniform lottery over $U$ is stable, but has high social cost in a consistent metric.

\begin{theorem}\label{thm:lower-bound-stable-k}
For every integer $k > 1$, the supremum of the metric distortion over all elections and their stable $k$-lotteries is $1 + 2\lambda_k$ where
\[
    \lambda_k = \frac{1 + (k-1)(k+1)^{-1/(k-1)}}{k}.
\]
\end{theorem}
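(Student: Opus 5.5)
The plan is to match the upper bound of \Cref{thm:dist-sl} with an explicit family of elections, following the hint before the statement. There is one optimal candidate $a^\star$ and a large set $U$ of $N$ symmetric candidates. I would put a biased metric (\Cref{def:biased}) on it with $\rho(a^\star)=0$ and $\rho(u)=1$ for every $u\in U$. The claimed stable lottery is $D=\Unif(U)$.

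\textbf{Step 1: the election.} Each voter ranks $U$ according to a cyclic shift of a fixed order, with shifts uniform over all $N$ rotations. Each voter inserts $a^\star$ so that a fraction $x_v$ of $U$ lies below it. The parameter $x_v$ is drawn from a two-point distribution: with probability $\mu$ we take $x_v=1$ ($a^\star$ on top), and with probability $1-\mu$ we take $x_v=y$. Each value of $x_v$ is spread evenly over all rotations, so every $u\in U$ looks identical in aggregate. As $N\to\infty$, $x_v$ can be taken to be any fraction; alternatively, use the continuum-of-copies convention.

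\textbf{Step 2: $D$ is a stable $k$-lottery.} By symmetry, each $u\in U$ attacking $D$ gets exactly $\Ev_v[\rank_v(u;D)^k]\to \int_0^1 r^k\,\mathrm{d}r=\frac1{k+1}$, up to $O(1/N)$ corrections that I will handle by a limit. The candidate $a^\star$ gets $\Ev_v[x_v^k]=\mu+(1-\mu)y^k$. So if
\[
\mu+(1-\mu)y^k\le \tfrac{1}{k+1},
\]
then $\max_{D^\dagger}$ of the objective at $D$ equals $\frac1{k+1}$. By the argument preceding \Cref{thm:sl-key-property}, $\frac1{k+1}$ is the minimax value, so $D$ attains the minimum in \Cref{eq:minimax-sl}.

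\textbf{Step 3: computing the distortion.} In the biased metric, a voter with $x_v<1$ has some $u\in U$ ranked above $a^\star$, so $d(a^\star,v)=\frac12$. A voter with $x_v=1$ has $d(a^\star,v)=0$. For $b\in U$, the quantity $\lowrho_v(b)$ equals $1$ if $b$ is below $a^\star$ and $0$ otherwise, so $\Ev_{b\sim D}[\lowrho_v(b)]=x_v$. Hence
\[
\SC(a^\star)=\tfrac12(1-\mu),\qquad \SC(D)-\SC(a^\star)=\Ev_v[x_v]=\mu+(1-\mu)y .
\]
The distortion is therefore $1+2\lambda$ with $\lambda=\frac{\mu}{1-\mu}+y$. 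The remaining task is to maximize $m+y$, where $m=\frac{\mu}{1-\mu}$, subject to the stability constraint.

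\textbf{Step 4: the optimization (main obstacle).} With equality in the constraint, $\mu=\frac{1/(k+1)-y^k}{1-y^k}$, and so
\[
m=\frac{1/(k+1)-y^k}{k/(k+1)}=\frac{1-(k+1)y^k}{k}.
\]
Then $\lambda(y)=y+\frac{1-(k+1)y^k}{k}$. Setting the derivative to zero gives $1=(k+1)y^{k-1}$, that is $y=s_0:=(k+1)^{-1/(k-1)}$. This is exactly the threshold where the weight $(k+1)s^{k-1}-1$ in \Cref{eq:sl-final-integral-bound} changes sign, which is a good sign for tightness. Plugging in gives
\[
\lambda=s_0+\frac{1-s_0}{k}=\frac{1+(k-1)s_0}{k}=\lambda_k .
\]
I also need $\mu\ge0$, i.e.\ $y^k\le\frac1{k+1}$; this holds since $s_0^k=s_0/(k+1)<\frac1{k+1}$.

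The delicate parts are the finite-$N$ discretization and the correction to the $U$-candidates' attack value. Those corrections vanish as $N\to\infty$, and I can absorb any excess by perturbing $y$ slightly downward. Together with \Cref{thm:dist-sl}, this shows the supremum is exactly $1+2\lambda_k$.
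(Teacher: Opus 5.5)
Your proposal is correct and follows essentially the paper's own construction. It uses the same two voter groups, one with $a^\star$ on top and one with roughly a $(k+1)^{-1/(k-1)}$ fraction of $U$ below $a^\star$, together with $D$ uniform on $U$, stability from symmetry plus equality for $a^\star$, and a biased metric that is constant on $U$. The only differences are cosmetic: the paper uses $\rho(u)=2$, all permutations of $U$ rather than rotations, and simply sets $z=\lfloor m\alpha\rfloor/m$ instead of optimizing over $y$.

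One small correction to your hedging. Under the paper's random-copy convention each $u\in U$ attains exactly $\frac{1}{k+1}$ by symmetry, so no $O(1/N)$ correction arises (and perturbing $y$ could not absorb one anyway, since $y$ does not affect $u$'s value). Also, the continuum-of-copies convention does not let $x_v$ be an arbitrary fraction; what works is your first option of rounding $y$ down to a multiple of $1/N$ and letting $N\to\infty$.
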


Fix $k > 1$ and let $\alpha=(k+1)^{-1/(k-1)}$. For any integer $m\geq\lceil 1/\alpha\rceil$, set
\[
    t=\lfloor m\alpha\rfloor,\qquad z=\frac{t}{m}.
\]
Then $1\leq t<m$ and $0<z\leq\alpha$. Let $C = \{a^\star\}\cup U$ where $|U|=m$, and define
\begin{equation}\label{eq:lower-bound-stable-k-r}
    r = \frac{k}{(k+1)(1-z^k)}.
\end{equation}
Since $z^k \leq \alpha^k = \alpha/(k+1) < 1/(k+1)$, we have $0 < r < 1$.

We partition the voters into two groups $V_0$ and $V_1$, comprising fractions $1-r$ and $r$ of the electorate, respectively. In each group, every permutation $u_1,\dots,u_m$ of $U$ occurs equally often. The corresponding preferences are
\[
\begin{array}{ll}
    V_0: & a^\star \succ_v u_1 \succ_v \cdots \succ_v u_m,\\[3pt]
    V_1: & u_1 \succ_v \cdots \succ_v u_{m-t} \succ_v a^\star
        \succ_v u_{m-t+1} \succ_v \cdots \succ_v u_m.
\end{array}
\]
Since $r$ is rational, these proportions can be realized by a finite electorate by taking an appropriate number of copies of each ranking. Let $D$ be the uniform distribution over $U$.

By symmetry, every candidate in $U$ has the same probability of being preferred by a random voter to all $k$ independent draws from $D$. Our choice of $r$ ensures that $a^\star$ has this same probability, which will establish stability.

\begin{lemma}\label{lem:lower-bound-stable-k-stability}
The lottery $D$ is a stable $k$-lottery in this election.
\end{lemma}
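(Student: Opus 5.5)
The plan is to show that $D$ attains the minimax value of \Cref{eq:minimax-sl-ranks} directly. The first step is a lower bound showing that no lottery can do better than $\frac{1}{k+1}$, and the second is that $D$ meets it. Since the objective is linear in $D^\dagger$, the second step reduces to checking the attacker candidates one at a time:
\[
\Ev_{v\sim V}\bigl[\rank_v(a;D)^k\bigr]\le \frac{1}{k+1}\qquad\text{for every } a\in C.
\]

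\textbf{Lower bound for every lottery.} Take any lottery $D'$ and let the attacker play $D^\dagger=D'$. For a fixed voter, a random copy of a candidate drawn from $D'$ sits at a uniformly random point of that voter's rank line. So the attacker's rank is $\Unif(0,1)$, and the payoff is $\int_0^1 r^k\dif r=\frac{1}{k+1}$. This is exactly the copying argument preceding \Cref{thm:sl-key-property}, so $\max_{D^\dagger}\ge \frac{1}{k+1}$ for every $D'$.

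\textbf{Candidates $u\in U$.} $D$ is uniform on $U$ and $D(a^\star)=0$, so in each voter's rank line the candidates of $U$ occupy consecutive intervals of length $1/m$, and $a^\star$ is a point of width zero. Within each group $V_0$ and $V_1$, every permutation of $U$ appears equally often. Hence the position of $u$ among $U$ is uniform over the $m$ slots, whatever the slot of $a^\star$. A uniformly random copy of $u$ therefore has rank distributed as $\Unif(0,1)$ when we average over the voters of either group. This gives $\Ev_v[\rank_v(u;D)^k]=\frac{1}{k+1}$ exactly.

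\textbf{The candidate $a^\star$.} Since $a^\star$ has zero width, its rank is deterministic for each voter. In $V_0$, $a^\star$ is ranked above all of $U$, so its rank is $1$. In $V_1$, $a^\star$ is ranked above exactly $t$ members of $U$, so its rank is $t/m=z$. Therefore
\[
\Ev_v\bigl[\rank_v(a^\star;D)^k\bigr]=(1-r)+rz^k=1-r(1-z^k)=1-\frac{k}{k+1}=\frac{1}{k+1},
\]
using \Cref{eq:lower-bound-stable-k-r}.

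\textbf{Conclusion.} Every candidate gives payoff $\frac{1}{k+1}$ against $D$, so $\max_{D^\dagger}$ of the objective at $D$ equals $\frac{1}{k+1}$. This matches the lower bound, so $D$ is a minimizer, that is, a stable $k$-lottery.

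The only delicate point is the convention for the rank of the zero-width candidate $a^\star$. The "random copy" of $a^\star$ degenerates to the single point $\rank_v(a^\star;D)$, which is the quantity that equals $\Pr_{b\sim D}[a^\star\succ_v b]$. I would state this explicitly so that the computation is unambiguous.
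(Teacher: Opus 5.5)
Your proof is correct and follows the same route as the paper's: the copying argument gives the lower bound $\frac{1}{k+1}$ for every lottery, the zero-width candidate $a^\star$ achieves exactly $1-r+rz^k=\frac{1}{k+1}$, and each $u\in U$ achieves $\frac{1}{k+1}$, so $D$ attains the minimax value. The only difference is cosmetic. You get the value for $u\in U$ by noting that a random copy of $u$ has uniformly distributed rank within each voter group. The paper instead uses relabeling symmetry of $U$ together with the fact that each of the $k+1$ draws from $D$ is equally likely to be the voter's favorite.
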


\begin{proof}
We first compute the probability that a random voter prefers a candidate $u\in U$ to all $k$ independent draws from $D$. The profile and $D$ are invariant under relabeling the candidates in $U$, so this probability is the same for every $u\in U$. If we also draw $u$ from $D$, each of the $k+1$ samples is equally likely to be the voter's favorite. Thus, the probability is $1/(k+1)$ for each $u\in U$.

It remains to compute the corresponding probability for $a^\star$. For voters in $V_0$, candidate $a^\star$ is preferred to every draw from $D$. For voters in $V_1$, it is preferred to all $k$ draws exactly when all of them belong to the $t$ candidates ranked below it. Hence,
\[
    \Pr_{\substack{v\sim V\\b_1,\dots,b_k\sim D}}
    \bigl[a^\star\succ_v b_1,\dots,b_k\bigr]
    = 1-r+rz^k
    = 1-r(1-z^k)
    = \frac{1}{k+1}.
\]
Thus every candidate is preferred by a random voter to all $k$ draws from $D$ with probability $1/(k+1)$. For any defending lottery, drawing the attacker from that same lottery gives a winning probability of $1/(k+1)$ by symmetry among the samples. The maximum over attackers is therefore at least $1/(k+1)$. Consequently, $D$ attains the minimum in \Cref{eq:minimax-sl} and is a stable $k$-lottery.
\end{proof}

We now use the biased metrics from \Cref{sec:biased-metrics} to show that $D$ has high distortion. It suffices to assign the same positive $\rho$ value to every candidate in $U$.

\begin{proof}[Proof of \Cref{thm:lower-bound-stable-k}]
By \Cref{lem:lower-bound-stable-k-stability}, the lottery $D$ is a stable $k$-lottery in the election constructed above. To bound its distortion from below, consider the biased metric from \Cref{def:biased} with
\[
    \rho(a^\star)=0,\qquad \rho(u)=2\quad\text{for every }u\in U.
\]
For a voter $v\in V_0$, we have $\lowrho_v(u)=2$ for every $u\in U$ and $\max_b\bigl(\rho(b)-\lowrho_v(b)\bigr)=0$. For a voter $v\in V_1$, we have $\lowrho_v(u)=0$ for candidates ranked above $a^\star$ and $\lowrho_v(u)=2$ for candidates ranked below it, so $\max_b\bigl(\rho(b)-\lowrho_v(b)\bigr)=2$. Since each voter in $V_1$ ranks a $z$ fraction of $U$ below $a^\star$, averaging over both groups gives
\[
    \Ev_{\substack{v\sim V\\b\sim D}}\bigl[\lowrho_v(b)\bigr]=2(1-r+rz),
    \qquad
    \Ev_{v\sim V}\Bigl[\max_b\bigl(\rho(b)-\lowrho_v(b)\bigr)\Bigr]=2r.
\]
In this metric, voters in $V_0$ are colocated with $a^\star$. Each voter $v\in V_1$ is at distance $1$ from $a^\star$ and from the candidates in $T_v=\{u\in U:u\succ_v a^\star\}$, and at distance $3$ from the remaining candidates, as illustrated in \Cref{fig:lower-bound-stable-k-metric}.

\begin{figure}[ht]
    \centering
    \begin{tikzpicture}[x=1cm,y=1cm,
        candidate/.style={circle,fill=red!80!black,draw=red!45!black,
            inner sep=0pt,minimum size=6pt},
        voter/.style={circle,fill=RoyalBlue,draw=blue!55!black,
            inner sep=0pt,minimum size=6pt},
        every label/.style={font=\small},
        metriclabel/.style={fill=white,inner sep=2pt,font=\small}]
        \node[candidate,label=below:{\textcolor{red!80!black}{$a^\star$},\ \textcolor{RoyalBlue}{$V_0$}}] (a) at (0,0) {};
        \node[voter,label={[RoyalBlue]above:{$v\in V_1$}}] (v) at (0,2) {};
        \node[candidate,label={[red!80!black]above:{$u\in T_v$}}] (u) at (5,2) {};
        \node[candidate,label={[red!80!black]below:{$u'\in U\setminus T_v$}}] (b) at (5,0) {};
        \draw (a) -- node[metriclabel] {$1$} (v);
        \draw (v) -- node[metriclabel] {$1$} (u);
        \draw (a) -- node[metriclabel] {$2$} (b);
        \draw (a) -- node[metriclabel,pos=.3] {$2$} (u);
        \draw[dashed] (v) -- node[metriclabel,pos=.3] {$3$} (b);
        \begin{scope}[shift={(a.center)}]
            \filldraw[fill=red!80!black,draw=red!45!black]
                (-3pt,0) arc[start angle=180,end angle=0,radius=3pt] -- cycle;
            \filldraw[fill=RoyalBlue,draw=blue!55!black]
                (3pt,0) arc[start angle=0,end angle=-180,radius=3pt] -- cycle;
        \end{scope}
    \end{tikzpicture}
    \caption{Distances in the lower-bound construction for a voter $v\in V_1$ and two candidates on opposite sides of $a^\star$ in her ranking. Blue points represent voters and red points represent candidates; the split marker indicates that $V_0$ is colocated with $a^\star$. The diagram is schematic; segment lengths do not represent metric distances.}
    \label{fig:lower-bound-stable-k-metric}
\end{figure}

By the social-cost identities in \Cref{sec:biased-metrics}, we have $\SC(a^\star)=r$ and $\Ev_{b\sim D}\bigl[\SC(b)-\SC(a^\star)\bigr]=2(1-r+rz)$. Moreover, $a^\star$ is optimal because $\lowrho_v(b)\geq 0$ for every voter $v$ and candidate $b$. Therefore,
\begin{align*}
    \frac{\Ev_{b\sim D}\bigl[\SC(b)\bigr]}{\SC(a^\star)}
    &=1+2\Bigl(\frac{1}{r}-1+z\Bigr)\\
    &=1+2\Bigl(\frac{1}{k}+z-\frac{k+1}{k}z^k\Bigr).
\end{align*}
As $m\to\infty$, we have $z\to\alpha$. Since $(k+1)\alpha^{k-1}=1$, the distortion converges to
\[
    1+2\Bigl(\frac{1}{k}+\alpha-\frac{k+1}{k}\alpha^k\Bigr)
    =1+2\Bigl(\frac{1+(k-1)\alpha}{k}\Bigr)
    =1+2\lambda_k.
\]
Together with the upper bound in \Cref{thm:dist-sl}, this proves the theorem.
\end{proof}

\section{Parameters of a good \texorpdfstring{$(g,W)$}{(g, W)} pair}
\label{app:certificate}
\begingroup
\small
\setlength{\tabcolsep}{11pt}
\renewcommand{\arraystretch}{0.97}
\begin{longtable}{rrr@{\hspace{1.3em}}|@{\hspace{1.3em}}rrr}
\caption{A good $(g, W)$ pair from heuristic search. For every $i=0,\ldots,384$, set $g(i/384)=A_i/10^{12}$ and $W(i/384)=B_i/10^{12}$, and interpolate linearly between consecutive grid points.}\label{tab:certificate}\\
\toprule
\makebox[1.5em][r]{$i$} & \makebox[6.5em][r]{$A_i$} & \makebox[6em][r]{$B_i$} & \makebox[1.5em][r]{$i$} & \makebox[6.5em][r]{$A_i$} & \makebox[6em][r]{$B_i$} \\
\midrule
\endfirsthead
\multicolumn{6}{c}{\tablename\ \thetable\ -- continued}\\
\toprule
\makebox[1.5em][r]{$i$} & \makebox[6.5em][r]{$A_i$} & \makebox[6em][r]{$B_i$} & \makebox[1.5em][r]{$i$} & \makebox[6.5em][r]{$A_i$} & \makebox[6em][r]{$B_i$} \\
\midrule
\endhead
\midrule \multicolumn{6}{r}{Continued on the next page}\\ \endfoot
\bottomrule \endlastfoot
0 & 0 & 0 & 193 & 320422998185 & 44 \\
1 & 1660222789 & 0 & 194 & 322083220973 & 44 \\
2 & 3320445577 & 0 & 195 & 323743443762 & 44 \\
3 & 4980668366 & 1 & 196 & 325403666550 & 45 \\
4 & 6640891154 & 1 & 197 & 327063889338 & 45 \\
5 & 8301113943 & 1 & 198 & 328724112127 & 45 \\
6 & 9961336731 & 1 & 199 & 330384334916 & 45 \\
7 & 11621559520 & 2 & 200 & 332044557704 & 46 \\
8 & 13281782308 & 2 & 201 & 333704780493 & 46 \\
9 & 14942005097 & 2 & 202 & 335365003281 & 46 \\
10 & 16602227885 & 2 & 203 & 337025226070 & 46 \\
11 & 18262450674 & 3 & 204 & 338685448858 & 47 \\
12 & 19922673462 & 3 & 205 & 340345671647 & 47 \\
13 & 21582896251 & 3 & 206 & 342005894435 & 47 \\
14 & 40801610735 & 3 & 207 & 343666117224 & 47 \\
15 & 59654992336 & 3 & 208 & 345326340012 & 47 \\
16 & 62425191671 & 4 & 209 & 346986562801 & 48 \\
17 & 65127749435 & 4 & 210 & 348646785589 & 48 \\
18 & 67751365958 & 4 & 211 & 350307008386 & 60 \\
19 & 70306552861 & 4 & 212 & 351967231166 & 60 \\
20 & 72793405297 & 5 & 213 & 353627453963 & 61 \\
21 & 75216438628 & 5 & 214 & 355287676743 & 61 \\
22 & 77577780375 & 5 & 215 & 356947899539 & 61 \\
23 & 79880378987 & 5 & 216 & 358608122328 & 61 \\
24 & 82126550927 & 5 & 217 & 360268345116 & 61 \\
25 & 84318670743 & 6 & 218 & 361928567905 & 61 \\
26 & 86458870375 & 6 & 219 & 363590509739 & 2696495 \\
27 & 88549199085 & 6 & 220 & 365259705349 & 16770946 \\
28 & 90591563577 & 6 & 221 & 366936652747 & 43004577 \\
29 & 92587770579 & 7 & 222 & 368621441930 & 81538553 \\
30 & 94539519733 & 7 & 223 & 370314164281 & 132516215 \\
31 & 96448418640 & 7 & 224 & 372014912598 & 196083125 \\
32 & 98315986276 & 7 & 225 & 373723781124 & 272387105 \\
33 & 100143661070 & 8 & 226 & 375440865570 & 361578290 \\
34 & 101932805744 & 8 & 227 & 377166263157 & 463809169 \\
35 & 103684713024 & 8 & 228 & 378900072632 & 579234636 \\
36 & 105400610327 & 8 & 229 & 380642394303 & 708012038 \\
37 & 107081661550 & 8 & 230 & 382393330085 & 850301229 \\
38 & 108728985769 & 9 & 231 & 384152983518 & 1006264617 \\
39 & 110343622144 & 9 & 232 & 385923046886 & 1178556659 \\
40 & 111926590108 & 9 & 233 & 387706109929 & 1371239540 \\
41 & 113478851168 & 9 & 234 & 389499838630 & 1580652177 \\
42 & 115001302278 & 10 & 235 & 391304017543 & 1806456657 \\
43 & 116494846046 & 10 & 236 & 393118824035 & 2048931180 \\
44 & 117960284420 & 10 & 237 & 394944436305 & 2308355264 \\
45 & 119398452644 & 10 & 238 & 396781032426 & 2585008226 \\
46 & 120810077597 & 10 & 239 & 398628790955 & 2879170146 \\
47 & 122195927383 & 11 & 240 & 400487890693 & 3191121487 \\
48 & 123556671648 & 11 & 241 & 402358510984 & 3521143556 \\
49 & 124893005390 & 11 & 242 & 404240831677 & 3869518464 \\
50 & 126205560765 & 11 & 243 & 406135033206 & 4236529230 \\
51 & 127494962604 & 12 & 244 & 408041296627 & 4622459843 \\
52 & 128761802381 & 12 & 245 & 409959803651 & 5027595356 \\
53 & 130006652346 & 12 & 246 & 411890736750 & 5452221963 \\
54 & 131230060735 & 12 & 247 & 413834279145 & 5896627062 \\
55 & 132432555646 & 13 & 248 & 415790614874 & 6361099334 \\
56 & 133614644827 & 13 & 249 & 417759928831 & 6845928795 \\
57 & 134776816454 & 13 & 250 & 419742406682 & 7351406672 \\
58 & 135919542225 & 13 & 251 & 421738235085 & 7877825753 \\
59 & 137043279770 & 13 & 252 & 423747601730 & 8425480442 \\
60 & 138148452901 & 14 & 253 & 425770695319 & 8994666728 \\
61 & 139235502354 & 14 & 254 & 427807705617 & 9585682273 \\
62 & 140304817371 & 14 & 255 & 429858823440 & 10198826378 \\
63 & 141356800841 & 14 & 256 & 431924240739 & 10834400137 \\
64 & 142391827336 & 15 & 257 & 434004150609 & 11492706430 \\
65 & 143410264106 & 15 & 258 & 436098747277 & 12174049912 \\
66 & 144412462666 & 15 & 259 & 438208226020 & 12878736892 \\
67 & 145398769296 & 15 & 260 & 440332783524 & 13607075882 \\
68 & 146369505838 & 16 & 261 & 442472617741 & 14359377383 \\
69 & 147324999084 & 16 & 262 & 444627927993 & 15135954044 \\
70 & 148265552591 & 16 & 263 & 446798914823 & 15937120430 \\
71 & 149191469126 & 16 & 264 & 448985780198 & 16763193339 \\
72 & 150103028647 & 16 & 265 & 451188727444 & 17614491702 \\
73 & 151000733573 & 17 & 266 & 453407961173 & 18491336465 \\
74 & 151886829585 & 17 & 267 & 455643687346 & 19394050692 \\
75 & 152762437468 & 17 & 268 & 457896113572 & 20322960030 \\
76 & 153627102840 & 17 & 269 & 460165448943 & 21278392455 \\
77 & 154481071108 & 18 & 270 & 462451904061 & 22260678308 \\
78 & 155324425695 & 18 & 271 & 464755691059 & 23270150335 \\
79 & 156157202246 & 18 & 272 & 467077023633 & 24307143730 \\
80 & 156979508043 & 18 & 273 & 469416116943 & 25371995986 \\
81 & 157791376732 & 18 & 274 & 471773187608 & 26465046886 \\
82 & 158592894460 & 19 & 275 & 474148454106 & 27586639125 \\
83 & 159384122627 & 19 & 276 & 476542136576 & 28737118005 \\
84 & 160165121800 & 19 & 277 & 478954456834 & 29916831458 \\
85 & 160935966087 & 19 & 278 & 481385638385 & 31126130068 \\
86 & 161696707375 & 20 & 279 & 483835906455 & 32365367117 \\
87 & 162447438034 & 20 & 280 & 486305487773 & 33634898241 \\
88 & 163188177615 & 20 & 281 & 488794610874 & 34935081915 \\
89 & 163919036967 & 20 & 282 & 491303506249 & 36266279678 \\
90 & 164640041336 & 21 & 283 & 493832406188 & 37628855891 \\
91 & 165351278881 & 21 & 284 & 496381544913 & 39023177947 \\
92 & 166052796840 & 21 & 285 & 498951159936 & 40449618400 \\
93 & 166744666591 & 21 & 286 & 501541494226 & 41908558359 \\
94 & 167426921949 & 21 & 287 & 504152793459 & 43400383184 \\
95 & 168100567316 & 22 & 288 & 506785305335 & 44925481413 \\
96 & 168761873182 & 22 & 289 & 509439279767 & 46484245046 \\
97 & 169417205570 & 22 & 290 & 512114968693 & 48077069265 \\
98 & 170061149166 & 22 & 291 & 514812626145 & 49704352535 \\
99 & 170696656146 & 23 & 292 & 517532508094 & 51366496367 \\
100 & 171322487335 & 23 & 293 & 520274872355 & 53063905154 \\
101 & 171940264968 & 23 & 294 & 523039979081 & 54796986967 \\
102 & 172545260234 & 23 & 295 & 525828090521 & 56566153161 \\
103 & 173308029140 & 23 & 296 & 528639471048 & 58371818419 \\
104 & 173570175419 & 24 & 297 & 531474387053 & 60214400594 \\
105 & 174323392794 & 24 & 298 & 534333106838 & 62094320537 \\
106 & 175983615583 & 24 & 299 & 537215900649 & 64012002143 \\
107 & 177643838372 & 24 & 300 & 540123041120 & 65967873059 \\
108 & 179304061160 & 25 & 301 & 543054802981 & 67962364214 \\
109 & 180964283949 & 25 & 302 & 546011463038 & 69995909803 \\
110 & 182624506737 & 25 & 303 & 548993301113 & 72068948731 \\
111 & 184284729526 & 25 & 304 & 552000595283 & 74181917192 \\
112 & 185944952314 & 26 & 305 & 555033630713 & 76335262481 \\
113 & 187605175103 & 26 & 306 & 558092691809 & 78529430703 \\
114 & 189265397891 & 26 & 307 & 561178066186 & 80764873015 \\
115 & 190925620680 & 26 & 308 & 564290043701 & 83042044063 \\
116 & 192585843468 & 26 & 309 & 567428916860 & 85361402665 \\
117 & 194246066257 & 27 & 310 & 570595225428 & 87723795480 \\
118 & 195906289045 & 27 & 311 & 573788547884 & 90128561319 \\
119 & 197566511834 & 27 & 312 & 577010631265 & 92578440529 \\
120 & 199226734622 & 27 & 313 & 580264965273 & 95078906931 \\
121 & 200886957411 & 28 & 314 & 583552496034 & 97631444584 \\
122 & 202547180200 & 28 & 315 & 586873903954 & 100237120748 \\
123 & 204207402988 & 28 & 316 & 590229890031 & 102897034985 \\
124 & 205867625777 & 28 & 317 & 593621176711 & 105612320497 \\
125 & 207527848565 & 29 & 318 & 597048508785 & 108384145535 \\
126 & 209188071354 & 29 & 319 & 600512654337 & 111213714894 \\
127 & 210848294142 & 29 & 320 & 604014405751 & 114102271476 \\
128 & 212508516931 & 29 & 321 & 607554580760 & 117051097948 \\
129 & 214168739719 & 29 & 322 & 611134023588 & 120061518529 \\
130 & 215828962508 & 30 & 323 & 614753606077 & 123134900762 \\
131 & 217489185296 & 30 & 324 & 618414228999 & 126272657564 \\
132 & 219149408085 & 30 & 325 & 622116823335 & 129476249240 \\
133 & 220809630873 & 30 & 326 & 625862351697 & 132747185707 \\
134 & 222469853662 & 31 & 327 & 629651809803 & 136087028818 \\
135 & 224130076450 & 31 & 328 & 633486228061 & 139497394827 \\
136 & 225790299238 & 31 & 329 & 637366673232 & 142979957025 \\
137 & 227450522027 & 31 & 330 & 641294250217 & 146536448516 \\
138 & 229110744816 & 31 & 331 & 645270103943 & 150168665192 \\
139 & 230770967604 & 32 & 332 & 649295421357 & 153878468859 \\
140 & 232431190393 & 32 & 333 & 653371433654 & 157667790724 \\
141 & 234091413181 & 32 & 334 & 657499418402 & 161538634732 \\
142 & 235751635970 & 32 & 335 & 661680702135 & 165493081643 \\
143 & 237411858759 & 33 & 336 & 665916662885 & 169533292986 \\
144 & 239072081546 & 33 & 337 & 670208732980 & 173661515460 \\
145 & 240732304335 & 33 & 338 & 674558402027 & 177880085611 \\
146 & 242392527124 & 33 & 339 & 678967220105 & 182191434836 \\
147 & 244052749912 & 34 & 340 & 683436801189 & 186598094761 \\
148 & 245712972701 & 34 & 341 & 687968826835 & 191102703013 \\
149 & 247373195489 & 34 & 342 & 692565050124 & 195708009415 \\
150 & 249033418278 & 34 & 343 & 697227299926 & 200416882668 \\
151 & 250693641067 & 34 & 344 & 701957485482 & 205232317537 \\
152 & 252353863855 & 35 & 345 & 706757601347 & 210157442612 \\
153 & 254014086644 & 35 & 346 & 711629732736 & 215195528689 \\
154 & 255674309432 & 35 & 347 & 716576061305 & 220349997834 \\
155 & 257334532221 & 35 & 348 & 721598871407 & 225624433202 \\
156 & 258994755009 & 36 & 349 & 726700556884 & 231022589690 \\
157 & 260654977798 & 36 & 350 & 731883628441 & 236548405501 \\
158 & 262315200586 & 36 & 351 & 737150721675 & 242206014739 \\
159 & 263975423375 & 36 & 352 & 742504605816 & 247999761120 \\
160 & 265635646163 & 36 & 353 & 747948193275 & 253934212949 \\
161 & 267295868952 & 37 & 354 & 753484550082 & 260014179491 \\
162 & 268956091740 & 37 & 355 & 759116907315 & 266244728902 \\
163 & 270616314529 & 37 & 356 & 764848673652 & 272631207912 \\
164 & 272276537317 & 37 & 357 & 770683449164 & 279179263466 \\
165 & 273936760106 & 38 & 358 & 776625040520 & 285894866568 \\
166 & 275596982894 & 38 & 359 & 782677477763 & 292784338607 \\
167 & 277257205683 & 38 & 360 & 788845032900 & 299854380493 \\
168 & 278917428471 & 38 & 361 & 795132240486 & 307112104974 \\
169 & 280577651260 & 39 & 362 & 801543920543 & 314565072565 \\
170 & 282237874048 & 39 & 363 & 808085204090 & 322221331600 \\
171 & 283898096837 & 39 & 364 & 814761561686 & 330089463005 \\
172 & 285558319626 & 39 & 365 & 821578835416 & 338178630468 \\
173 & 287218542414 & 39 & 366 & 828543274856 & 346498636850 \\
174 & 288878765203 & 40 & 367 & 835661577620 & 355059987793 \\
175 & 290538987991 & 40 & 368 & 842940935244 & 363873963685 \\
176 & 292199210780 & 40 & 369 & 850389085267 & 372952701357 \\
177 & 293859433568 & 40 & 370 & 858014401844 & 382309336205 \\
178 & 295519656357 & 41 & 371 & 865826523270 & 391958986533 \\
179 & 297179879145 & 41 & 372 & 873836658217 & 401919233853 \\
180 & 298840101934 & 41 & 373 & 882057344390 & 412209744370 \\
181 & 300500324722 & 41 & 374 & 890502666777 & 422852611295 \\
182 & 302160547511 & 42 & 375 & 956304177672 & 434353660369 \\
183 & 303820770299 & 42 & 376 & 1000000000000 & 447066816074 \\
184 & 305480993088 & 42 & 377 & 1000000000000 & 460496474963 \\
185 & 307141215876 & 42 & 378 & 1000000000000 & 474309543850 \\
186 & 308801438665 & 42 & 379 & 1000000000000 & 488547830784 \\
187 & 310461661453 & 43 & 380 & 1000000000000 & 503263010747 \\
188 & 312121884242 & 43 & 381 & 1000000000000 & 518520458875 \\
189 & 313782107030 & 43 & 382 & 1000000000000 & 534405259966 \\
190 & 315442329819 & 43 & 383 & 1000000000000 & 551032306024 \\
191 & 317102552607 & 44 & 384 & 1000000000000 & 568564583500 \\
192 & 318762775396 & 44 &  & &  \\
\end{longtable}
\endgroup

\end{document}